\newcommand{\nSt}{\ensuremath{\textup{nSt}}\xspace}
\newcommand{\St}{\ensuremath{\textup{St}}\xspace}
\setlist{nosep}
\newcommand{\R}{\mathbb{R}}
\def\la{\lambda}
\def\ve{\varepsilon}
\newcommand{\Ss}{\mathbb{S}}
\newcommand{\Qz}{\mathbb{R}^{n\times{n}}_{\textup{zd}}}
\newcommand{\Qzu}{\mathbb{S}^{n\times{n}}_{\textup{zd}}}
\newcommand{\Qzs}{\mathbb{R}^{n\times{n}}_{\textup{zd},\textup{symm}}}
\newcommand{\Qzus}{\mathbb{S}^{n\times{n}}_{\textup{zd},\textup{symm}}}
\newcommand{\Ps}{\mathbb{S}^{n\times{n}}_{\textup{zd},\textup{dss}}}
\newcommand{\Psa}{\mathbb{R}^{n\times{n}}_{\textup{zd},\textup{dss}}}
\renewcommand{\Qz}{\mathbb{R}^{n\times{n}}_{\textup{zero-diag}}}
\renewcommand{\Qzu}{\mathbb{S}^{n\times{n}}_{\textup{zero-diag}}}
\renewcommand{\Qzs}{\mathbb{R}^{n\times{n}}_{\textup{zero-diag},\textup{symm}}}
\renewcommand{\Qzus}{\mathbb{S}^{n\times{n}}_{\textup{zero-diag},\textup{symm}}}
\renewcommand{\Ps}{\mathbb{S}^{n\times{n}}_{\textup{zero-diag},\textup{dss}}}
\renewcommand{\Psa}{\mathbb{R}^{n\times{n}}_{\textup{zero-diag},\textup{dss}}}
\newcommand{\Rdiag}{\R^{n\times{n}}_{\textup{diag}}}
\newcommand{\jac}[1]{D\mkern-2mu{#1}}
\newtheorem{theorem}{Theorem}[section]
\newtheorem{corollary}[theorem]{Corollary}
\newtheorem{lemma}[theorem]{Lemma}
\newtheorem{conjecture}{Conjecture}
\newcommand{\until}[1]{\{1,\dots, #1\}}
\newcommand{\setdef}[2]{\{#1 \; | \; #2\}}
\newcommand{\map}[3]{#1: #2 \rightarrow #3}
\newcommand{\intersection}{\ensuremath{\operatorname{\cap}}}
\renewcommand{\natural}{{\mathbb{N}}}
\newcommand{\integer}{\ensuremath{\mathbb{Z}}}
\newcommand{\real}{\ensuremath{\mathbb{R}}}
\newcommand{\realnonnegative}{\ensuremath{\mathbb{R}}_{\ge 0}}
\newcommand\oprocendsymbol{\hbox{$\square$}}
\newcommand\oprocend{\relax\ifmmode\else\unskip\hfill\fi\oprocendsymbol}
\newcounter{saveenum}
      \theoremstyle{plain}
\newtheorem{definition}{Definition}[section]
\newtheorem{remark}[theorem]{Remark}
\DeclareSymbolFont{bbold}{U}{bbold}{m}{n}
\DeclareSymbolFontAlphabet{\mathbbold}{bbold}
\newcommand{\vect}[1]{\mathbbold{#1}}
\newcommand{\sign}{\operatorname{sign}}
\newcommand{\diag}{\operatorname{diag}}
\newcommand{\grad}{\operatorname{grad}}
\newcommand{\trace}{\operatorname{trace}}
\newcommand\norm[1]{\left\lVert#1\right\rVert}
\newcommand{\D}{\mathcal{D}}
\newcommand{\Hs}{\mathcal{H}}
\newcommand{\Kulakowski}{Ku{\l}akowski\xspace}
\newcommand{\Fprod}[2]{\langle\!\langle#1,#2\rangle\!\rangle_F}
\newcommand{\Fnorm}[1]{\left\|#1\right\|_F}
\newcommand\blfootnote[1]{%
  \begingroup
  \renewcommand\thefootnote{}\footnote{#1}%
  \addtocounter{footnote}{-1}%
  \endgroup
}
\begin{document}

\title{Structural Balance via Gradient Flows\\ over Signed Graphs}

\author{Pedro Cisneros-Velarde, Noah E. Friedkin, Anton V. Proskurnikov, Francesco Bullo}

\maketitle

\begin{abstract}
  Structural balance is a classic property of signed graphs satisfying
  Heider's seminal axioms. Mathematical sociologists have studied balance
  theory since its inception in the 1940s.  Recent research has focused on
  the development of dynamic models explaining the emergence of structural
  balance.  In this paper, we introduce a novel class of parsimonious
  dynamic models for structural balance based on an interpersonal influence
  process.  Our proposed models are gradient flows of an energy function,
  called the dissonance function, which captures the cognitive dissonance
  arising from the violations of Heider’s axioms.  Thus, we build a new
  connection with the literature on energy landscape minimization.  This
  gradient-flow characterization allows us to study the transient and
  asymptotic behaviors of our model. We provide mathematical and numerical
  results describing the critical points of the dissonance function.
  \blfootnote{This work is supported by the U. S. Army Research Laboratory and the
  U. S. Army Research Office under grant number W911NF-15-1-0577. The
    views and conclusions contained in this document are those of the
    authors and should not be interpreted as representing the official
    policies, either expressed or implied, of the Army Research Laboratory
    or the U.S. Government.}
\blfootnote{Pedro Cisneros-Velarde (pacisne@gmail.com), Noah E.\ Friedkin and
    Francesco Bullo (\{friedkin,bullo\}@ucsb.edu) are with the Center for
    Control, Dynamical Systems and Computation, University of California,
    Santa Barbara.}

  \blfootnote{Anton V. Proskurnikov is with the Politecnico di Torino, Turin,
    Italy.}

\end{abstract}

\section{Introduction}

\subsubsection{Problem description and motivation}

Signed graphs represent networked systems with interactions classified as
positive or negative, e.g., cooperation or antagonism, promotion or
inhibition, attraction or repulsion. Such graphs naturally arise in diverse
fields, e.g., political science~\cite{MOJ-SN:15}, communication
studies~\cite{JL-DH-JK:10} and biology~\cite{CCL-CHL-CSF-HFJ-HCH:13}. In
sociology~\cite{NEF:98,DE-JK:10}, they are used to represent friendly or
antagonistic relationships, whereby signed edges may be interpreted as
interpersonal sentiment appraisals.  In the work by Heider~\cite{FH:46},
each individual appraises all other individuals either positively (friends,
allies) or negatively (enemies, rivals). Heider postulated four famous
axioms: (i) ``the friend of a friend is a friend,'' (ii) ``the enemy of a
friend is an enemy,'' (iii) ``the friend of an enemy is an enemy,'' and
(iv) ``the enemy of an enemy is a friend.''  Violations of these axioms
lead to cognitive tensions and dissonances that the individuals strive to
resolve; in this sense, Heider's axioms are consistent with the general
theory of cognitive dissonance~\cite{LF:1957}.  A signed network satisfying
Heider's axioms is called \emph{structurally balanced} \and {and} can have
only two possible configurations: either all of its members have positive
relationships with each other and become a unique faction, or there exist
two factions in which members of the same faction are friends but enemies
with every other member in the other faction. We refer
to~\cite{NEF:98,DE-JK:10} for textbook treatment and to~\cite{XZ-DZ-FYW:15}
for a recent comprehensive survey.

Whereas Heider's theory describes the qualitative emergence of structural
balance as the result of tension-resolving cognitive mechanisms, it does
not provide a quantitative description of these mechanisms and dynamic
models explaining the emergence of balance. The aim to fill this gap has
given rise to the important research area of \emph{dynamic structural
  balance}.  The \Kulakowski et al.~\cite{KK-PG-PG:05} model postulates an
influence process, whereby any individual $i$ updates her appraisal of
individual $j$ based on what others positively or negatively think about
$j$.  The Traag et al.~\cite{VAT-PVD-PDL:13} model postulates a homophily
process, whereby any individual $i$ updates her appraisal of $j$ according
to how much she agrees with $j$ on the appraisals of their common
acquaintances. Both models explain convergence to structural balance under
certain assumptions on the initial state (see below for more
information). Remarkably, both models assume the existence of so-called
\emph{self-appraisals} (loops in the signed graph) that strongly influence
the system dynamics.  Self-appraisals can be interpreted as individuals'
positive or negative opinions of themselves.

A second line of research, consistent with dissonance theory, has focused
on formulating social balance via appropriate energy functions. The
work~\cite{SAM-SHS-JMK:09} proposes an energy function for binary appraisal
matrices with global minima that represent structurally stable
configurations; it is argued that a dynamic structural balance model should
aim to navigate through this energy landscape and look for its minima. Some
models (e.g.,~\cite{TA-PLK-SR:05,TA-PLK-SR:06}) were designed precisely to
achieve this task. The work~\cite{GF-GI-CA:11} computes a distance to
balance via a combinatorial optimization problem, inspired by Ising models.

%% \apmargin{The third purpose sounds dangerous, we don't have mathematics for this. We could say that our final goal is analytic and numerical investigation of the model.}

The purpose of this paper is threefold.  First, we aim to propose a more
parsimonious model of the influence process establishing structural
balance, that is, a model without self-appraisal weights. Our argument for
dropping these variables is that balance theory axioms do not include
self-appraisals, and the inclusion of such appraisals amounts to an
additional assumption and introduces unnecessary complexities.  Second, we
aim to connect the literature on dynamic structural balance with the
literature treating social balance as an optimization problem.  Finally, in
comparison with a known limitation of the \Kulakowski et al.~model, we aim
to emphasize through numerical simulations that our parsimonious model
predicts the emergence of structural balance also from asymmetric initial
configurations.

\subsubsection{Further comments on the state of the art}
We now present a summary of the current literature on dynamic structural
balance. Historically, the first models appeared in the physics
community~\cite{TA-PLK-SR:05,TA-PLK-SR:06,RF-DV-SY-OM:07}. These models
borrowed some concepts from statistical physics and had the
particularity of assuming that the appraisals between individuals are
binary valued (either $+1$ or $-1$).  At the same time, they rely on
hard-wired random mechanisms for the asynchronous updates of the
appraisals that lack a sociological insightful interpretation.

Another type of proposed models is based on discrete- and continuous-time
dynamical systems with real-valued appraisals. The seminal models of
  this kind are due to \Kulakowski et al.~\cite{KK-PG-PG:05} (later
analyzed more formally by~\cite{SAM-JK-RDK-SHS:11}) and Traag et
  al.~\cite{VAT-PVD-PDL:13}.  Models with real-valued appraisals
  capture not only signs, but also magnitudes of positive or negative
sentiments. All these models adopt synchronous updating and stipulate
sociological meaningful rules for the updating of appraisals, based on
either influence or homophily processes.  The following facts are known
about the \Kulakowski et al.~influence-based and the Traag et
al.~homophily-based models: the set of well-behaved initial conditions that
lead the social network towards social balance for the first model is a
subset of the set of normal matrices, while the second model can work under
generic initial conditions. Similar results are obtained by
\cite{WM-PCV-GC-NEF-FB:17f} for two discrete-time models based on influence
and homophily respectively: influence-based processes do not perform well
under generic initial conditions (in contrast to the homophily-based
processes).  Finally, only the models proposed in
\cite{WM-PCV-GC-NEF-FB:17f} and a variation of the model by \Kulakowski et
al.~proposed in the early work~\cite{KK-PG-PG:05}, have a bounded evolution
of appraisals, whereas the others have finite escape time.

Recent work has also started to focus on dynamic models for other relevant
configuration of signed graphs, e.g., configurations that satisfy only a
subset of the four Heider's axioms.
% generalized notions of structural balance, as formalized in
% generalized balance theory by~\cite{DC-FH:56}.
The work~\cite{NEF-AVP-FB:18m} provides a parsimonious model explaining the
emergence of a generalized version of structural balance from any initial
configuration; this model is based on an influence process of positive
contagion whereby influence is accorded only to positively-appraised
individuals.  A second model in this area is proposed
by~\cite{PJ-NEF-FB:13n}.  Finally, there has been a third type of models
that propose the emergence of structural balance or other generalized
balance structures for undirected graphs from a game theoretical
perspective~\cite{AvdR:11,MM-MF-PJK-HRR-MAS:11,PCV-FB:19g}.
%{\color{blue} For example,~\cite{PCV-FB:19g} proposes a modification of the dissonance
  %function in~\cite{SAM-SHS-JMK:09} and game theoretical methods to achieve
 % a notion of clustering balance.}

\subsubsection{Contributions}

First of all, we contribute %In this paper, we present the following contributions. First, we propose
by proposing two new dynamic models that do not adopt the long-standing assumption of
self-appraisals and describe the evolution of signed networks without
self-loops. %In other words, 
We argue that the introduction of self-weights
is poorly justified and that a model without them %self-weights 
is a more faithful representation of Heider's theory.  The first model, called the
\emph{pure-influence model}, is a modification of the classic model by
\Kulakowski et al. which is obtained by eliminating self-appraisals (and
thus reducing the system's dimension). Analysis of its convergence
properties reduces to the analysis of our second model, which is called
the~\emph{projected pure-influence model} and which arises as a projection
of the first model onto the unit sphere. This second model has a
self-standing interest, since it enjoys bounded evolution of the
appraisals, while the first model shares the finite escape time property of
the classic model by~\Kulakowski et al.

%% To the best of our knowledge, the only model of dynamic structural balance
%% with uniformly bounded appraisals existing in the literature has been
%% proposed in~\cite{WM-PCV-GC-NEF-FB:17f}.

Our second contribution is to build a bridge between dynamic structural
balance and balance as an optimization problem.  We propose an energy
function inspired by~\cite{SAM-SHS-JMK:09}, namely the \emph{dissonance
  function}, which measures the degree at which Heider's axioms are
violated among the individuals of a social network. We show that this
energy function has global minima that correspond to signed graphs
satisfying structural balance in the case of real-valued appraisals
(restricted on the unit sphere). Moreover, we show that our (projected)
pure-influence model is the gradient system of the dissonance function in
the case of undirected signed graphs, and hence the critical points of the
dissonance function are the equilibria of our dynamical system. Thus, we
establish a novel connection between dynamic structural balance and the
characterization of structural balance as the minima of an energy function
for real-valued appraisals. Remarkably, our derivations show that this
property of our models is enabled by the elimination of self-appraisals.
Thus, the models contributed in this paper may be considered as both an
interpersonal influence process and an extremum seeking dynamics for the
cognitive dissonance function.

Our third and more detailed contribution is the mathematical analysis of
the projected pure-influence model in the cases where the initial appraisal
matrix is symmetric.  In particular, we provide a complete characterization
of the critical points of the dissonance function (i.e., the equilibrium
points of the projected pure-influence model). This characterization relies
upon a special submanifold of the Stiefel manifold and its
properties. Along with the characterization of the critical points, we
analyze their local stability properties and provide some results on
convergence towards structural balance.

Our final contribution is a Monte Carlo numerical study of the
convergence of our models to structural balance under generic initial
conditions in both the symmetric and the asymmetric case. For the symmetric
case, our result is comparable to, but stronger than, what has already been
proved for the \Kulakowski et al.~model: our models converge to structural
balance under generic symmetric initial conditions. One key advantage of
our models, as compared with those by \Kulakowski et al., is that
convergence to structural balance emerges under generic asymmetric initial
conditions. Based on these numerical results, we formulate relevant
%mathematical 
conjectures.

%% The model by \Kulakowski et al. is known to not
%% converge to structural balance in general under asymmetric initial
%% conditions; and, indeed, in our numerical results, only a limited fraction
%% of the tested initial conditions did.

\subsubsection{Paper organization}
Section~\ref{sec:prelim} presents preliminary concepts.
Section~\ref{sec:models} presents our models and shows they are gradient
flows.  Section~\ref{sec:classification} and
Section~\ref{sec:convergence-analysis} contain an analysis of equilibria
and important convergence results, respectively.
Section~\ref{sec:simulations} contains numerical results and
conjectures. Finally, Section~\ref{sec:conclusion} contains some concluding
remarks.

\section{Preliminaries}
\label{sec:prelim}

\subsection{Signed weighted digraphs}

Given an $n\times n$ matrix $X=(x_{ij})$ with entries taking values in
$[-\infty,\infty]$, let $G(X)$ denote the signed directed graph where the
directed edge $i\xrightarrow[]{}j$ exists if and only if $x_{ij}\ne 0$, and
$x_{ij}$ represents its signed weight. The directed graph $G(X)$ is
complete if $X$ has no zero entries, except for the main diagonal.  $G(X)$
has no self-loops if and only if $X$ has zero diagonal entries.  Let
$x_{i*}$ denote the $i$th row of the matrix $X$ and $x_{*i}$ the $i$th
column of the matrix $X$.  Let $\sign(X)=(\sign (x_{ij}))$, where
$\map{\sign}{[-\infty,\infty]}{\{-1,0,+1\}}$ is as usual
\[
\sign(x)=
\begin{cases}
-1,\qquad\qquad&\text{if }x<0,\\
0,&\text{if }x=0,\\
+1,&\text{if }x>0.
\end{cases}
\]
Given a sequence $a_1,\ldots,a_n$, let $B=\diag(a_1,\ldots,a_n)$ denote the
diagonal $n\times n$ matrix $(b_{ij})$, where $b_{ii}=a_i$ and $b_{ij}=0$
for $i\ne j$. For an $n\times n$ matrix $X$, define
$\diag(X)=\diag(x_{11},\ldots,x_{nn})$.  For a vector $v\in\R^n$, define
$\diag(v)=\diag(v_1,\ldots,v_n)$.  Let $\vect{0}_n$ denote the $n\times{1}$
vector of zeros, and $\vect{0}_{n\times{n}}$ the $n\times{n}$ matrix with
zero entries.

Let $\succ$ and $\prec$ denote ``entry-wise greater than'' and ``entry-wise
less than,'' respectively. 
%Let $\circ$ denote the Hadamard product between
%two matrices of the same size $(A\circ B)_{ij}=a_{ij}b_{ij}$.

A \textit{triad} (if it exists) is a cycle between three nodes in
$G(X)$. The \textit{sign} of a triad is defined by the sign of the product
of the weights composing a triad. For example, the triad $i\to j\to k\to i$
has sign $\sign(x_{ij}x_{jk}x_{ki})$.

A real-valued matrix $Z$ is \emph{irreducible} if its graph $G(Z)$ is
strongly connected (a directed path between every two nodes exists) and
\emph{reducible} otherwise. If $Z$ is reducible, a permutation matrix $P$
exists such that the matrix %transforming $Z$ to an upper-triangular form
\[
PZP^{\top}=
\begin{bmatrix}
Z_1 & *  & \ldots & *\\
0   & Z_2& \ldots & *\\
\vdots &&&\\
0 &&& Z_k
\end{bmatrix}
\]
is upper-triangular with irreducible blocks $Z_i$ (some of them can be $1\times 1$ matrices).
If $Z=Z^{\top}$, the latter matrix is block-diagonal matrix $PZP^\top=\diag(Z_1,\dots,Z_k)$ and the
graphs $G(Z_i)$ are the \emph{connected components} of the graph $G(Z)$.

\subsection{Sets of matrices and the Frobenius inner product}
Given two matrices $A,B\in\R^{n\times{n}}$, their Frobenius inner product
%between $A$ and $B$ 
is defined by $\Fprod{A}{B}=\trace(B^\top A)$; the induced%, so that
%the induced Frobenius 
norm 
%for $A$ 
is $\Fnorm{A}=\sqrt{\Fprod{A}{A}}$. Some
important properties for the trace operator are:
$\trace(A)=\trace(A^\top)$, $\trace(AB)=\trace(BA)$, and, for all
$d\in\natural$, $\trace(A^d)=\sum_{i=1}^n\lambda_i^d$ where 
$\lambda_1,\ldots,\lambda_n$ are the eigenvalues
%the $ith$ eigenvalue 
of $A$.

Let $\Qz$ be the set of $n\times{n}$ real matrices with zero diagonal
entries, and $\Qzs$ be the set of symmetric matrices belonging to
$\Qz$. Let $\Ss^{n\times{n}}$ be the unit sphere in $\R^{n\times n}$, that is
%set of $n\times{n}$ real matrices with
%unit Frobenius norm, i.e., 
$A\in\Ss^{n\times{n}}$ if and only if %is equivalent to
$A\in\R^{n\times{n}}$ with $\norm{A}_F=1$. Similarly, we define the sets $\Qzu=\Qz\cap\Ss^{n\times n}$ and
$\Qzus=\Qzs\cap\Ss^{n\times n}$.

Let $\Rdiag$ be the set of all real diagonal matrices and
$\R^{n\times{n}}_{\textup{sk-symm}}$ be the set of all skew-symmetric
matrices. Then, we have the following orthogonal decomposition of
$\R^{n\times{n}}$ equipped with the Frobenius inner product:
\begin{equation}
  \label{eq:F-decomposition}
  \R^{n\times{n}}=\R^{n\times{n}}_{\textup{sk-symm}} \oplus \Qzs \oplus \Rdiag.
\end{equation}

\subsection{A review on structural balance}

Throughout the paper we deal with social networks composed of $n\geq 3$
individuals, although the definition of structural balance
(Definition~\ref{def:sbal}) is formally applicable to the case of
degenerate networks with $n=1$ or $n=2$ nodes.

\begin{definition}[Appraisal matrix and network]
  We let the entry $x_{ij}$ of the matrix $X\in\R^{n\times{n}}$ denote the
  appraisal (or qualitative evaluation) held by individual $i$ of
  individual $j$. The sign of $x_{ij}$ indicates if the relationship is
  positive ($+1$), negative ($-1$) or of indifference ($0$). The magnitude
  of $x_{ij}$ indicates the strength of the relationship. $x_{ii}$ can be
  interpreted as $i$'s self-appraisal. We call $X$ the \emph{appraisal
    matrix}, and $G(X)$ the \emph{appraisal network}.
\end{definition}

\begin{definition}[Heider's axioms and social balance notions]
  \label{def:Heider+balance}
  The \emph{Heider's axioms} are
  \begin{enumerate}[label={H\arabic*)}]
  \item\label{H1} A friend of a friend is a friend,
  \item\label{H2} An enemy of a friend is an enemy,
  \item\label{H3} A friend of an enemy is an enemy,
  \item\label{H4} An enemy of an enemy is a friend.
  \end{enumerate}
  An appraisal network $G(X)$ is \emph{structurally balanced in Heider's
    sense}, if it is complete and satisfies axioms \ref{H1}-\ref{H4}.
\end{definition}

Consider a complete appraisal network $G(X)$. We call a \emph{faction} any
group of agents whose members positively appraise each other. We say two
factions are \emph{antagonistic} if every representative from one faction
negatively appraise every representative of the other faction.
It can be shown (\cite{FH:46,FH:53,DC-FH:56}) that Heider's structural
balance condition for $G(X)$ with $n\geq 3$ nodes holds if and only if
either the individuals constitute a single faction or can be partitioned
into two antagonistic factions.  The possession of the latter property may
thus be considered as an alternative definition of structural balance (and
is formally applicable to graphs without triads).

\begin{definition}[Structural balance]\label{def:structural-balance}
  \label{def:sbal}
  A complete appraisal network $G(X)$ is said to satisfy \emph{structural
    balance}, if $G(X)$ is composed by one faction or two antagonistic
  factions; or, whenever $n\geq 3$, equivalently, that all triads are
  positive, i.e., $x_{ij}x_{jk}x_{ki}>0$ for any different
  $i,j,k\in\until{n}$.
\end{definition}

Notice that a structurally balanced graph is always sign-symmetric: $\sign(x_{ij})=\sign(x_{ji})$ for any $i\ne j$. For simplicity we will say that a matrix $X$ \emph{corresponds to} structural balance whenever $G(X)$ satisfies structural balance. 

\section{Proposed models and representation as gradient flows}
\label{sec:models}

In this section we propose our models defining them over the set of
symmetric (appraisal) matrices, and the general setting will be postponed
until Section~\ref{sec:simulations} along some numerical results. Finally,
we prove that our models are gradient flows over a sociologically motivated
energy function.

\subsection{Pure-influence model}
\label{sec:a-new-model}
We propose our new dynamic model solely based on interpersonal appraisals.

\begin{definition}[Pure-influence model]
  The \emph{pure-influence model} is a system of differential equations on
  the set of zero-diagonal matrices $\Qz$ defined by
  \begin{equation}
    \label{inf-dyn}
    \dot{x}_{ij}=\sum_{\substack{k=1\\k\neq i,j}}^n x_{ik}x_{kj},
  \end{equation}
  for any $i,j\in\until{n}$ and $i\neq j$. Here $x_{ij}$, $i\neq j$, are
  the off-diagonal entries of a zero-diagonal matrix $X\in\Qz$. In
  equivalent matrix form, the previous equations read:
  \begin{equation}
    \label{inf-dyn-m}
    \dot{X}=X^2-\diag(X^2), \qquad X(0)\in\Qz.
  \end{equation}
\end{definition}

We interpret $X$ as the interpersonal appraisal matrix.  While
system~\eqref{inf-dyn} does not define the evolution of self-appraisals,
the matrix reformulation~\eqref{inf-dyn-m} ensures $\diag(\dot{X})=0$ and,
since $X(0)\in\Qz$ means $\diag(X(0))=\vect{0}_{n\times{n}}$, we have
$\diag(X(t))=\vect{0}_{n\times{n}}$ for all positive times $t$.

Our model is a modification of the classical model proposed by \Kulakowski
et al.~\cite{KK-PG-PG:05}, where self-appraisals play a crucial role in the
dynamics of the interpersonal appraisals. %as seen in the next definition.

\begin{definition}[\Kulakowski et al. model]
  The \emph{\Kulakowski et al. model} is a system of differential equations
  on the state space $\R^{n\times{n}}$ defined by
\begin{subequations}     \label{inf-dynK}
  \begin{align}
    \label{inf-dynK1}
    \dot{x}_{ij}&=\sum_{k=1}^n x_{ik}x_{kj}=x_{ij}(x_{ii}+x_{jj})+\sum_{\substack{k=1\\k\neq i,j}}^n x_{ik}x_{kj},\\
    \dot{x}_{ii}&= x_{ii}^2 + \sum_{\substack{k=1\\k\neq i}}^n x_{ik}x_{ki}, \label{inf-dynK2}
  \end{align}
\end{subequations}
  for any $i\neq j\in\until{n}$. In equivalent matrix form, the previous
  equations read: $\dot{X}=X^2$.
\end{definition}

\begin{remark}[The problem with self-appraisals]
  The introduction of self-appraisals in model~\eqref{inf-dynK} is
  objectionable on several grounds.  The first conceptual problem is that
  self-appraisals are not considered in any definition of structural
  balance in the social sciences.  Heider's axioms in
  Definition~\ref{def:Heider+balance} do not take into account
  self-appraisals: social balance is a function of only interpersonal
  appraisals. Moreover, once self-appraisals are introduced, one needs to
  postulate why and how self-appraisals affect interpersonal appraisals,
  i.e., justify the choice of the first addendum for the right hand side
  of~\eqref{inf-dynK1}.  Finally, one needs to postulate how they evolve,
  i.e., justify the choice for the right hand side of~\eqref{inf-dynK2}.
  In summary, the pure influence model~\eqref{inf-dyn} avoids these
  difficulties and stays closer to the foundations of structural balance,
  in which individuals are attending only to interpersonal appraisals.
  Even though $\dot{X}=X^2$ may appear mathematically simpler or more
  elegant than $\dot{X}=X^2-\diag(X^2)$, we believe the latter model is
  actually more parsimonious, lower dimensional, and more faithful to
  Heiders' axioms.
\end{remark}

One easily notices the following important property of the pure-influence
model~\eqref{inf-dyn-m}: the right-hand side is an analytic function of $X$
so that the equation enjoys (local) existence and uniqueness of the
solutions. A second property is that, if $X(0)=X(0)^{\top}$, then
$X(t)=X(t)^{\top}$ for all subsequent times. This implies that the
pure-influence model is well defined over the set of symmetric (zero
diagonal) matrices $\Qzs$.

\subsection{Dissonance function}

We introduce and study the properties of a useful \emph{dissonance
  function} that summarize the total amount of cognitive
dissonances~\cite{LF:1957} among the members of a social network due to the
lack of satisfaction of Heider's axioms.  Recall that, according to
Definition~\ref{def:structural-balance}, a triad $i\to j\to k\to i$
satisfies the axioms if and only if $x_{ij}x_{jk}x_{ki}>0$.

%% {\color{red}We will show the following statements: if all triads are
%%   positive, i.e., Heider's axioms are satisfied among all members of the
%%   social network, then the function is negative and admits its minumum
%%   value. On the other hand, if no triad is positive, i.e., Heider's axioms
%%   are violated throughout the social network, and the function has its
%%   maximum value.}

\begin{definition}[Dissonance function]
  The \emph{dissonance function} $\map{\D}{\Qz}{\R}$ is
  \begin{equation}
    \label{beta_hat1}
    \D(X) = - \!\!\!\!\!\! \sum_{\substack{i,j,k=1\\ i\neq{j},j\neq{k},k\neq{i} }}^n \!\!\!\!\!\!
    x_{ij}x_{jk}x_{ki}= - \trace(X^3)=-\sum_{i=1}^n\lambda_i^3,
  \end{equation}
  %% \begin{equation}
  %%   \label{beta_hat1}
  %%   \D(X) = -\sum_{i=1}^n\sum_{\substack{j=1\\j\neq i}}^n\sum_{\substack{k=1\\k\neq i,j}}^n
  %%   x_{ij}x_{jk}x_{ki}= - \trace(X^3)=-\sum_{i=1}^n\lambda_i^3,
  %% \end{equation}
  where $\{\lambda_i\}_{i=1}^n$ is the set of eigenvalues of $X$.
\end{definition}
We plot $\D$ in a low-dimensional setting in
Figure~\ref{fig:dissonancefunction}.

\begin{figure}[ht]
  \centering
    \includegraphics[width=0.4\textwidth]{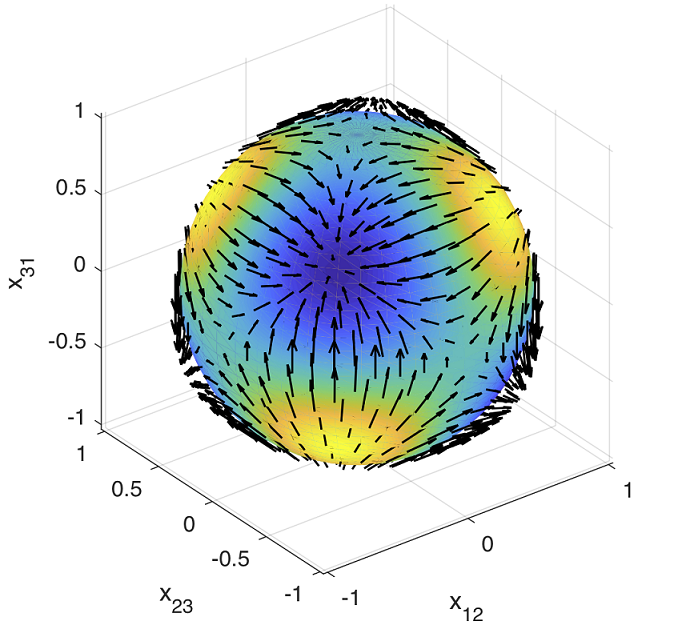}
    \includegraphics[width=0.48\textwidth]{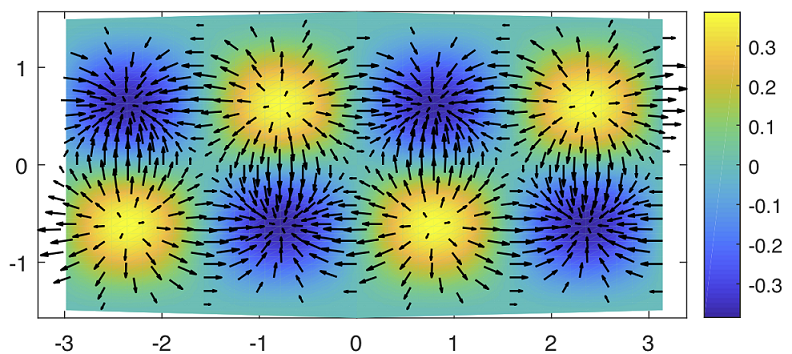}
    \caption{For $n=3$, an arbitrary symmetric unit-norm zero-diagonal
      matrix $X\in\Qzus$ is described by $(x_{12},x_{23},x_{31})$ with
      these coordinates living in the sphere
      $x_{12}^2+x_{23}^2+x_{31}^2=1$. In the upper figure, we plot this
      sphere with a heatmap, with dark blue being the lowest value and
      light yellow the largest value, according to the evaluation of the
      dissonance function $\mathcal{D}(X)$. The function has four global
      minima corresponding to the four possible configurations of $G(X)$
      satisfying structural balance, and we can qualitatively appreciate
      the convergence of solution trajectories to these minima in the
      superimposed vector field on the sphere. The lower figure is a
      stereographic projection of the upper figure.}
    \label{fig:dissonancefunction}
\end{figure}

%% (according to the~\emph{parula} colormap from Matlab)

Energy landscapes in social balance theory are studied
in~\cite{SAM-SHS-JMK:09,GF-GI-CA:11}. %\fbmargin{are there other studies of
%  energy landscapes in social balance? ensure we discuss in intro}
Our proposed dissonance function is the extension to $\Qz$ of the energy
function proposed by~\cite{SAM-SHS-JMK:09} for the setting of binary-valued
symmetric appraisal matrices.
%in the physics literature,
%inspired by a connection between dynamic balance theory and Ising models.
For binary-valued appraisals, the global minima of $\D$ correspond to
networks that satisfy structural balance, since all triads are positive
(see Definition~\ref{def:sbal}). Thus, $\D$ naturally measures to which
extent Heider's axioms are violated in a (complete) social network.

\begin{lemma}[Properties of the dissonance function]
  \label{lem-DH}
  Consider the dissonance function $\D$ and pick $X\in\Qz$. Then
  \begin{enumerate}[label=(\roman*)]
  \item $\D$ is analytic and attains its maximum and minimum values on any
    compact matrix subset of $\Qz$,
  \item if $G(X)$ satisfies structural balance, then $\D(X)<0$, %and $\Hs(X)>0$,
  \item $\D(X)=\D(X^\top)$, %$\Hs(X)=\Hs(X^\top)$, and if $X=X^\top$, then $\Hs(X)=\D(X)$,
  \item $\D(X) = -\Fprod{X^2}{X^\top}$ %and $\Hs(X)=\Fprod{X^2}{X}$.
  \setcounter{saveenum}{\value{enumi}}
\end{enumerate}
Additionally, if $\norm{X}_F=1$, that is, $X\in\Qzu$, then
\begin{enumerate}[label=(\roman*)]\setcounter{enumi}{\value{saveenum}}
%\item $-1 \leq \Hs(X),\D(X) \leq 1$. \label{lastin}
\item $-1 \leq \D(X) \leq 1$. \label{lastin}
\end{enumerate}
\end{lemma}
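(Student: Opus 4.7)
For part (i), the plan is to note that $\D(X)=-\trace(X^3)$ is a polynomial in the entries of $X$, hence analytic; and the attainment of extrema on any compact subset of $\Qz$ is then immediate from continuity.

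For part (ii), I would return to the expanded expression $\D(X)=-\sum_{i,j,k\text{ distinct}} x_{ij}x_{jk}x_{ki}$. Structural balance (Definition~\ref{def:sbal}) requires $G(X)$ to be complete and demands $x_{ij}x_{jk}x_{ki}>0$ for every triple of distinct indices. Since $n\geq 3$, there is at least one such triple, so the sum is strictly positive and $\D(X)<0$.

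For (iii), I would use the standard identity $\trace(A)=\trace(A^{\top})$ applied to $A=X^3$, noting that $(X^3)^{\top}=(X^{\top})^3$. For (iv), the plan is a one-line calculation using the identity $\trace(AB)=\trace(BA)$: by definition $\Fprod{X^2}{X^{\top}}=\trace\bigl((X^{\top})^{\top}X^2\bigr)=\trace(X\cdot X^2)=\trace(X^3)$, so $\D(X)=-\Fprod{X^2}{X^{\top}}$.

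For (v), the main step is to chain Cauchy--Schwarz with the submultiplicativity of the Frobenius norm. Using (iv) and Cauchy--Schwarz for $\Fprod{\cdot}{\cdot}$, I get
\[
|\D(X)|=\bigl|\Fprod{X^2}{X^{\top}}\bigr|\leq \Fnorm{X^2}\cdot\Fnorm{X^{\top}}=\Fnorm{X^2}\cdot\Fnorm{X}.
\]
Applying $\Fnorm{X^2}\leq \Fnorm{X}^2$ (submultiplicativity) and the hypothesis $\Fnorm{X}=1$ yields $|\D(X)|\leq 1$. The only delicate point is recalling that both Cauchy--Schwarz for the Frobenius inner product and the submultiplicativity of $\Fnorm{\cdot}$ hold for arbitrary (not necessarily symmetric) real square matrices, which is standard; this is what makes the bound survive even though $X$ need not be symmetric in $\Qzu$.
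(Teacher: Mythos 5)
Your proof is correct and takes essentially the same route as the paper: the paper also establishes (v) by combining property (iv) with Cauchy--Schwarz for the Frobenius inner product, the only cosmetic difference being that it proves the key bound $\Fnorm{X^2}\le\Fnorm{X}^2$ explicitly via a row--column Cauchy--Schwarz computation rather than citing submultiplicativity of the Frobenius norm as standard. The paper leaves (i)--(iv) to the reader as "easily verified," and your verifications of them are all correct.
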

\begin{proof}
  Here we show only property~\ref{lastin}, since the other properties are
  easily verified from the definition of $\D$. The key step is to show that
  $\Fnorm{X}\leq1$ implies $\Fnorm{X^2}\leq1$. The Cauchy-Schwartz
  inequality leads to:
  \begin{align*}
    \Fnorm{X^2}^2 &= \sum_{i,j=1}^n (X^2)^2_{ij} = \sum_{i,j=1}^n (X_{i*} X_{*j})^2 \\
    &\leq \sum_{i,j=1}^n \|X_{i*}\|_2^2 \|X_{*j}\|_2^2= \Big(\sum_{i=1}^n \|X_{i*}\|_2^2\Big) \Big(   \sum_{j=1}^n\|X_{*j}\|_2^2 \Big)\\
    &= \Big(\sum\nolimits_{i,k=1}^n x_{ik}^2\Big)^2 = \Fnorm{X}^2 =1.
  \end{align*}
  Since $\D$ is a Frobenius inner product of vectors with at-most unit
  norm, it is bounded by $1$ in absolute value.
\end{proof}

%% We prove this property for
%% $\Hs(X)$; the proof for $\D(X)$ is similar. Let
%% $\text{vec}(\cdot)$ denote a function that takes a matrix $\R^{n\times{n}}$
%% as an argument and turns it into an $\R^{n^2}$ vector. Then,
%% \begin{align*}
%% |\Hs(X)|=|\trace(X^\top X^2)|\leq \norm{\text{vec}(X^\top)}_2^2\norm{\text{vec}(X^2)}_2^2&=\norm{\text{vec}(X^2)}_2^2\\
%% &=\Fnorm{X^2}^2\\
%% &=\sum_{i=1}^n\sum_{j=1}^n\left(\sum^n_{\substack{k=1\\k\neq i,j}}x_{ik}x_{kj}\right)^2\\
%% &\leq \sum_{i=1}^n\sum_{j=1}^n\norm{X_{i*}}_2^2\norm{X_{*j}}_2^2\\
%% &\leq \sum_{i=1}^n\norm{X_{i*}}_2^2\sum_{j=1}^n\norm{X_{*j}}_2^2\\
%% &=1
%% %\leq\norm{Z}^2_F=1.
%% \end{align*}

\subsection{Transcription on the unit sphere and the projected pure-influence model}

We start by noting a simple fact.  Given a trajectory
$\map{X}{\realnonnegative}{\Qz}$ with $X(t)\neq \vect{0}_{n\times{n}}$ for
all $t$, there exist unique trajectories
$\map{\eta}{\realnonnegative}{\realnonnegative}$ and
$\map{Z}{\realnonnegative}{\Qzu}$ such that $X(t) = \eta(t) Z(t)$, where
$\eta(t)=\Fnorm{X(t)}$ and $Z(t)=X(t)/\Fnorm{X(t)}$.

\begin{theorem}[Transcription of the pure-influence model]
  \label{eg-trans}
 The pure-influence model~\eqref{inf-dyn} can be expressed as the
    following system of differential equations:
    \begin{subequations} \label{eq1o}
      \begin{align}
        \dot{Z}&=\eta\mathcal{P}_{Z^\perp}(Z^2-\diag(Z^2)) \nonumber \\
        &=\eta(Z^2-\diag(Z^2)+\D(Z)Z),       \label{eq1o-1} \\
        \dot{\eta}&=-\D(Z)\eta^2, \label{eq1o-2}
      \end{align}
    \end{subequations}
    where $\map{\eta}{\realnonnegative}{\realnonnegative}$ and
    $\map{Z}{\realnonnegative}{\Qzus}$.  Here $\mathcal{P}_{Z^\perp}$ is
    the orthogonal projection onto $\operatorname{span}\{Z\}^\perp$ in the
    vector space of square matrices with the Frobenious inner product.
    %Moreover, equation~\eqref{eq1o-1} at fixed $\eta=1$ can be
    %rewritten in components as
    %\begin{equation}
    %  \label{eq:in-components}
    %  \dot{z}_{ij}=\sum_{\substack{k=1\\k\neq i,j}}^n z_{ik}z_{kj}-\Hs(Z)z_{ij},
    % \end{equation}
    %for any $i,j\in\until{n}$ and $i\neq j$.
\end{theorem}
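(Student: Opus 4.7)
The plan is a straightforward radial/tangential decomposition on the unit sphere. Write $X(t) = \eta(t) Z(t)$ with $\eta(t) = \Fnorm{X(t)}>0$ and $Z(t)\in\Qzus$. Since $\Fnorm{Z}^2=1$ is constant in $t$, differentiating $\Fprod{Z}{Z}=1$ gives $\Fprod{\dot Z}{Z}=0$, so the decomposition of any matrix identity of the form $\dot\eta\, Z + \eta\, \dot Z = M$ into a component parallel to $Z$ and a component in $\spn\{Z\}^\perp$ will cleanly yield the scalar equation for $\dot\eta$ and the tangential matrix equation for $\dot Z$.

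Concretely, I would differentiate $X=\eta Z$ to get $\dot X = \dot\eta\, Z + \eta\, \dot Z$, and substitute $X = \eta Z$ on the right-hand side of \eqref{inf-dyn-m} to get $\dot X = \eta^2(Z^2 - \diag(Z^2))$. Taking the Frobenius inner product of both sides with $Z$ and using $\Fprod{\dot Z}{Z}=0$ together with $\Fnorm{Z}=1$ isolates $\dot\eta = \eta^2 \Fprod{Z^2 - \diag(Z^2)}{Z}$. The right-hand inner product simplifies via two observations: the symmetry $Z^\top = Z$ (since $Z\in\Qzus$) yields $\trace(Z^\top Z^2) = \trace(Z^3) = -\D(Z)$ by definition of $\D$; and the zero-diagonal of $Z$ together with the diagonal structure of $\diag(Z^2)$ yields $\trace(Z\,\diag(Z^2)) = 0$ since $(Z\,\diag(Z^2))_{ii}=z_{ii}(Z^2)_{ii}=0$. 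These combine to give $\dot\eta = -\D(Z)\eta^2$, which is \eqref{eq1o-2}. Substituting back into $\dot\eta\, Z + \eta\, \dot Z = \eta^2(Z^2 - \diag(Z^2))$ and dividing by $\eta>0$ gives $\dot Z = \eta\bigl(Z^2 - \diag(Z^2) + \D(Z) Z\bigr)$, the second form in \eqref{eq1o-1}. Equivalence with the first form $\dot Z = \eta\,\mathcal{P}_{Z^\perp}(Z^2 - \diag(Z^2))$ follows immediately from the standard formula $\mathcal{P}_{Z^\perp}(A) = A - \Fprod{A}{Z} Z$ for unit-norm $Z$, applied with $A = Z^2 - \diag(Z^2)$ and the inner-product value $-\D(Z)$ already computed.

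I do not anticipate a genuine obstacle; the argument is essentially normalization plus orthogonal projection. The two non-routine observations are (a) the symmetric zero-diagonal structure of $Z$ is what makes $\Fprod{Z^2 - \diag(Z^2)}{Z}$ collapse exactly to $-\D(Z)$, which is precisely the coupling that identifies the radial rate with the dissonance function, and (b) the pure-influence vector field $Z^2 - \diag(Z^2)$ already lies in $\Qzs$, so the ambient projection onto $\spn\{Z\}^\perp$ inside $\R^{n\times n}$ agrees with the intrinsic tangent-space projection on the sphere in $\Qzs$, ensuring $\dot Z$ stays in $\Qzus$ as required.
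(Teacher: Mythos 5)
Your proposal is correct and follows essentially the same route as the paper's proof: differentiate $X=\eta Z$, project onto $Z$ to extract $\dot\eta=-\D(Z)\eta^2$, then substitute back and identify the result with $\mathcal{P}_{Z^\perp}$. The only cosmetic difference is that you verify $\Fprod{\diag(Z^2)}{Z}=0$ by a direct diagonal-entry computation, whereas the paper invokes the orthogonal decomposition~\eqref{eq:F-decomposition}; these are the same observation.
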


\begin{proof}
We start by computing
$\dot{X}=\dot{\eta}Z+\eta\dot{Z}=X^2-\diag(X^2)$.  Since
$X^2-\diag(X^2)=\eta^2\left(Z^2-\diag(Z^2)\right)$, we know
\begin{equation}
  \label{one2}
  \dot{\eta}Z+\eta\dot{Z}=\eta^2\left(Z^2-\diag(Z^2)\right).
\end{equation}
Recall that $\Fnorm{Z(t)}=1$ implies $\Fprod{Z(t)}{\dot{Z}(t)}=0$, that is,
$Z(t)\perp\dot{Z}(t)$.  Computing the Frobenius inner product with $Z(t)$
on both sides of~\eqref{one2}, we obtain
%\begin{equation}
%  \label{one3}
%  \dot{\eta}=\eta^2\Fprod{Z(t)}{Z^2(t)-\diag(Z^2(t))}
%  =\eta^2\Fprod{Z(t)}{Z^2(t)} = \eta^2\Hs(Z(t)).
%\end{equation}
\begin{equation*}
 % \label{one3}
  \begin{split}
  \dot{\eta}&=\eta^2\Fprod{Z(t)}{Z^2(t)-\diag(Z^2(t))}
  =\eta^2\Fprod{Z(t)}{Z^2(t)}\\
  &= -\D(Z(t))\eta^2.	\end{split}
\end{equation*}
where we have used the
decomposition~\eqref{eq:F-decomposition}. Substituting this equation into
equation~\eqref{one2}, one arrives at
$\dot{Z}=\eta\left(Z^2-\diag(Z^2)+\D(Z)\right)$.

Given $Y\in\R^{n\times{n}}$, let $\mathcal{P}_Z(Y)=\langle Y,Z\rangle Z$,
i.e., $\mathcal{P}_Z$ is the orthogonal projection operator onto the linear
space spanned by $Z$; and let
$\mathcal{P}_{Z^\perp}(Y)=Y-\mathcal{P}_Z(Y)=Y-\langle Y,Z\rangle Z$ be the
orthogonal projection onto the space perpendicular to the linear space
spanned by $Z$.  Then, we observe that $\mathcal{P}_{Z^\perp}(Z)=0$ and
$\mathcal{P}_{Z^\perp}(\dot{Z})=\dot{Z}$. Using these results, we apply
$\mathcal{P}_{Z^\perp}$ to both sides of~\eqref{one2} and obtain
$\dot{Z}=\eta\mathcal{P}_{Z^\perp}(Z^2-\diag(Z^2))$.  This concludes the
proof of equations~\eqref{eq1o}.
\end{proof}

In what follows, we are primarily interested in the
dynamics~\eqref{eq1o-1}, describing the behavior of the bounded component
$Z(t)$. For our needs, it is convenient to change the time variable
(Lemma~\ref{thaux2}) by getting rid of $\eta$ and replacing~\eqref{eq1o} by
the following dynamical system on the unit sphere.

\begin{definition}[Projected pure-influence model]
The \emph{projected pure-influence model} is a system of differential equations on the manifold $\Qzu$ defined by
\begin{equation}
  \label{def:projected-pure-influence}
  \dot Z=Z^2-\diag(Z^2)+\D(Z)Z.
\end{equation}
\end{definition}

Similarly, projecting onto the unit sphere leads to a new model based on
the \Kulakowski et al. model.

\begin{definition}[Projected \Kulakowski et al. model]
  The \emph{projected \Kulakowski et al. model} is a system of differential
  equations on the manifold of symmetric unit-Frobenius norm matrices
  matrices defined by
\begin{equation}
  \label{def:projected-pure-influenceK}
  \dot Z(t)=Z^2+\D(Z)Z.
\end{equation}
\end{definition}

\subsection{Pure-influence is the gradient flow of the dissonance function}

In this section we let $\grad\D$ denote the gradient vector field on $\Qz$
defined by the dissonance function $\D$. We also let $\D\big|_{\Qzus}$
denote the restriction of $\D$ onto the set $\Qzus$.  With this notation,
we now present the first of our main results.

\begin{theorem}[The pure-influence models over symmetric matrices are gradient flows]
  \label{th-mini-gr}
  Consider the pure-influence model~\eqref{inf-dyn} with $X(0)\in\Qzs$ and
  the projected pure-influence model~\eqref{def:projected-pure-influence}
  with $Z(0)\in\Qzus$. Then
  \begin{enumerate}
  \item \label{lasym22}$t\mapsto X(t)$ remains in the set $\Qzs$ and
    \begin{equation}
      \label{eq:pureinf=gradient}
      \dot{X}=-\tfrac{1}{3}\grad \D(X),
    \end{equation}
  \item\label{lasym0} $t\mapsto Z(t)$ remains in the set $\Qzus$ and
    \begin{equation}
      \label{x1}
      \dot{Z} = -\tfrac{1}{3}  \mathcal{P}_{Z^\perp}\!\big( \grad \D(Z) \big)
      = -\tfrac{1}{3} \grad\D\Big|_{\Qzus}\!\!\!\!\!\!\!\!\!(Z).
    \end{equation}
  \end{enumerate}
\end{theorem}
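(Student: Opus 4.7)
The plan is to compute the gradient of $\D$ explicitly on the ambient Euclidean space, project it onto the linear subspace $\Qz$ (and further $\Qzs$), and then onto the tangent space of $\Qzus$, matching the resulting vector field term-by-term with the right-hand sides of~\eqref{inf-dyn-m} and~\eqref{def:projected-pure-influence}. Everything rests on the elementary differential
\[
d\D|_X(H) \;=\; -\tfrac{d}{dt}\Big|_{t=0}\trace\big((X+tH)^3\big) \;=\; -3\,\trace(X^2 H) \;=\; -3\,\Fprod{(X^2)^\top}{H},
\]
which, together with the convention $\Fprod{A}{B}=\trace(B^\top A)$, identifies the Euclidean gradient of $\D$ in $\R^{n\times n}$ as $-3(X^2)^\top$.

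For part~\ref{lasym22}, I would first verify invariance of $\Qzs$ under~\eqref{inf-dyn-m}: the zero-diagonal part is preserved by construction because the matrix form explicitly subtracts $\diag(X^2)$, and symmetry is preserved because $X=X^\top$ implies $X^2=(X^2)^\top$, so $X^2-\diag(X^2)$ is symmetric; uniqueness of ODE solutions then keeps the trajectory in $\Qzs$. Next, by the orthogonal decomposition~\eqref{eq:F-decomposition}, the projection of the ambient gradient onto $\Qz$ amounts to subtracting the diagonal part, so $\grad\D(X) = -3\big((X^2)^\top-\diag(X^2)\big)$; evaluated at a symmetric $X\in\Qzs$ this reduces to $-3\big(X^2-\diag(X^2)\big)$, which is exactly $-3\dot X$, proving~\eqref{eq:pureinf=gradient}.

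For part~\ref{lasym0}, the key observation is that $\Qzus$ is the unit Frobenius sphere embedded in the linear subspace $\Qzs$, hence its tangent space at $Z$ is $\Qzs\cap\{Z\}^\perp$. Consequently the Riemannian gradient of the restriction $\D\big|_{\Qzus}$ at $Z$ is obtained by applying $\mathcal{P}_{Z^\perp}$ to $\grad\D(Z)$, since $\grad\D(Z)$ already lies in $\Qzs$ and $\spn\{Z\}\subset\Qzs$. Plugging in the formula from part~\ref{lasym22} yields $\grad\D\big|_{\Qzus}(Z) = -3\,\mathcal{P}_{Z^\perp}\big(Z^2-\diag(Z^2)\big)$. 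I would then invoke Theorem~\ref{eg-trans}, which already rewrites the projected model as $\dot Z = \mathcal{P}_{Z^\perp}(Z^2-\diag(Z^2))$, so~\eqref{x1} is immediate; invariance $Z(t)\in\Qzus$ follows because $\mathcal{P}_{Z^\perp}$ enforces $\tfrac{d}{dt}\Fnorm{Z}^2 = 2\Fprod{\dot Z}{Z} = 0$, together with invariance of $\Qzs$ already established.

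There is no single hard step; the only real subtlety is bookkeeping which ambient space the gradient is computed in. The theorem declares $\grad\D$ to be the gradient on $\Qz$, so one must carefully project the naive Euclidean gradient $-3(X^2)^\top$ onto zero-diagonal matrices; fortunately, for the symmetric trajectories of part~\ref{lasym22} the further symmetrization required to land in $\Qzs$ is automatic because $X$ symmetric forces $X^2$ symmetric, and for part~\ref{lasym0} the radial projection required for the spherical geometry is precisely what Theorem~\ref{eg-trans} already produced.
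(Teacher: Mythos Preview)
Your proposal is correct and follows essentially the same approach as the paper: both compute the directional derivative $d\D|_X(H)=-3\trace(X^2H)$, identify the gradient, and then use the identity $\mathcal{P}_{Z^\perp}(Z^2-\diag(Z^2))=Z^2-\diag(Z^2)+\D(Z)Z$ from Theorem~\ref{eg-trans} for the spherical part. The only stylistic difference is that the paper characterizes the Riemannian gradient via $\tfrac{d}{dt}\D(Z(t))=\Fprod{\grad\D(Z)}{\dot Z}$ along curves, whereas you compute the ambient Euclidean gradient first and then project---these are equivalent formulations.
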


In other words, the projected pure-influence
model~\eqref{def:projected-pure-influence} is, modulo a constant factor, %($-1/3$ times) 
the gradient flow of the dissonance function $\D$ restricted to the manifold of
zero-diagonal unit-norm symmetric matrices $\Qzus$.

\begin{proof}[Proof of Theorem~\ref{th-mini-gr}]
  The forward invariance of the set of symmetric matrices in both statements
  is immediate.  To prove equation~\eqref{x1}, we adopt the slight abuse of
  notation $$\grad\D(Z) = \grad\D\Big|_{\Qzus}\!\!\!\!\!\!\!\!\!(Z).$$  With this
  notation, note that $Z\mapsto\grad\D(Z)$ is the unique vector field on
  $\Qzus$ satisfying, along any differentiable trajectory $t\mapsto{Z(t)}$,
  \begin{equation}\label{eq:fb:blast}
    \frac{d}{dt}\D(Z(t)) = \Fprod{\grad\D(Z(t))}{\dot{Z}(t)}.
  \end{equation}
  Note that, here, both $\grad\D(Z(t))$ and $\dot{Z}(t)$ take value on the
  tangent space to the manifold $\Qzus$.

  Now, using the various properties of the trace inner product (e.g.,
  $\dot{Z}(t)\perp Z(t)$), we compute
  \begin{equation*}
    \begin{split}
      \dot{\D}(Z(t)) &=
      -(\trace(\dot{Z}(t)Z(t)Z(t))+\trace(Z(t)\dot{Z}(t)Z(t)))\\
      &\phantom{ = }\quad+\trace(Z(t)Z(t)\dot{Z}(t)) \\
      &=-3\trace(\dot{Z}(t)Z^2(t)) = -3\Fprod{\dot{Z}(t)}{Z^2(t)} \\
      &= -3\Fprod{\dot{Z}(t)}{Z^2(t)-\diag Z^2(t)+\D(Z(t))Z(t)}.
    \end{split}
  \end{equation*}
  Recalling that $Z^2-\diag{Z^2}+\D(Z)Z \overset{\eqref{eq1o-1}}{=}
  P_{Z^{\top}}(Z^2-\diag Z^2)$ belongs to the tangent space to the manifold
  $\Qzus$ at the point $Z(t)$, one arrives at the equality
  \begin{equation*}
    \grad\D(Z) = -3 \Big( Z^2-\diag Z^2+\D(Z)Z \Big).
  \end{equation*}
  This concludes the proof of statement~\ref{lasym0}.  Finally,
  equation~\eqref{eq:pureinf=gradient} can be proved in a similar way.
\end{proof}

\section{Classification of symmetric equilibria}
\label{sec:classification}
In this section we give the complete classification of the symmetric
equilibria in the projected pure-influence
model~\eqref{def:projected-pure-influence}; the classification of general
asymmetric equilibria remains an open problem.  Thanks to
Theorem~\ref{th-mini-gr}, all symmetric equilibria of the projected
pure-influence model are critical points of the dissonance function
$\D$. It is useful to write the equilibrium equation:
\begin{equation}
  \label{eq.equil}
  Z^2+\D(Z)Z-\diag(Z^2)=0, \quad Z\in\Qzus.
\end{equation}
Note that the equilibria $Z^*$ with $\D(Z^*)=0$ correspond to equilibria of
the original system~\eqref{inf-dyn-m} $X(t)\equiv X^*=\eta(0)Z^*$, whereas
the others with $\Hs(Z^*)\neq 0$ lead to
\begin{equation*}
 X(t)=\eta(t)Z^*, \quad\eta(t)=\frac{\eta(0)}{1+t\eta(0)\D(Z^*)}
\end{equation*}
defined for $t\in[0,\frac{1}{\eta(0)\D(Z^*)})$ if $\D(Z^*)<0$ (for which
  the solution is unbounded) or for $t\geq 0$ if $\D(Z^*)>0$.

\subsection{Normalized Stiefel matrices}

To start with, we introduce a special important manifold of non-square
matrices that we will use throughout the paper.

\begin{definition}[Normalized Stiefel matrices]\label{def-unit}
  A matrix $V\in\real^{n\times k}$, for $k\le n$, is \emph{normalized
    Stiefel} (\nSt), if
  \begin{enumerate}
  \item\label{eq.gu1} the columns of $V$ are pairwise orthogonal unit
    vectors, i.e., $V^{\top}V=I_{k}$;
    
  \item\label{eq.gu2} the norm of each row is the same (obviously, it must
    be $\sqrt{k/n}\le 1$): $\diag(VV^{\top})=n^{-1}k I_n$.   
  \end{enumerate}
  Let $\nSt(n,k)\subseteq\R^{n\times k}$ denote the set of normalized
  Stiefel matrices.
\end{definition}

In general, the rows of an \nSt matrix \emph{need not} be orthogonal. We
recall from~\cite{IMJ-NJH:76} the notion of \emph{compact Stiefel
  manifold}, denoted by $\St(k,n)=\setdef{X\in\R^{n\times{k}}}{X^\top X =
  I_k}$.

\begin{lemma}[Characterization of \nSt matrices]\label{ch-gunit}
  The set $\nSt(n,k)$, $k\leq n$, is a compact and analytic submanifold of
  $\R^{n\times k}$ of dimension $(k-1)n+1-k(k+1)/2$, and it is also a
  submanifold of the compact Stiefel manifold~(and thus,
  $\nSt(n,k)\subseteq \St(k,n)$). Moreover,
\begin{enumerate}[label=(\roman*)]
\item \label{st-1}$\nSt(n,n)$ is the set of orthogonal matrices,
\item \label{st-2}for $k=1$, the matrix $V$ is \nSt if and only if
\begin{equation}\label{eq.k1-general}
  V=\frac{1}{\sqrt{n}}\begin{bmatrix}s_1\\ \vdots \\ s_n
  \end{bmatrix},
\end{equation}
for any numbers $s_i\in\{-1,+1\}$, $i\in\until{n}$,
\item \label{st-3}for $k=2$, the matrix $V$ is \nSt if and only if
\begin{equation}\label{eq.k2-general}
  V= \sqrt{\frac{2}{n}}
\begin{bmatrix}
  \cos\alpha_1 & \sin\alpha_1\\
  \vdots & \vdots\\
  \cos\alpha_n & \sin\alpha_n
\end{bmatrix},
\end{equation}
for any set of angles $\alpha_1,\dots,\alpha_n$ satisfying
\begin{equation}
  \label{eq.ngon}
  \sum_{m=1}^n e^{2\alpha_m \sqrt{-1}}=0.
\end{equation}
\end{enumerate}
\end{lemma}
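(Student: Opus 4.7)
The plan is to realize $\nSt(n,k)$ as the regular preimage of a point under an analytic map with polynomial components, apply the regular value theorem to obtain the submanifold/dimension statements, and then verify the three special cases by direct inspection.

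First I would observe that $\nSt(n,k)$ is cut out of $\R^{n\times k}$ by two sets of polynomial equations: $V^\top V - I_k = 0$ (a symmetric $k\times k$ equation) and $\diag(VV^\top) - (k/n)I_n = 0$ (an $n$-component diagonal equation). The first already defines the compact Stiefel manifold $\St(k,n)$, so from the definition we immediately have the inclusion $\nSt(n,k)\subseteq\St(k,n)$ in part (i)'s preamble. Compactness is then straightforward, either from closedness-inside-$\St(k,n)$ or from the observation that each row has norm $\sqrt{k/n}$, forcing $\Fnorm{V}=\sqrt{k}$; analyticity is immediate from the polynomial nature of the defining equations.

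Next I would establish the manifold structure and compute the dimension. The key dependency to keep track of is that $\trace(V^\top V)=\trace(VV^\top)$, so the trace of the first condition already forces $\trace(VV^\top)=k$, and within the set defined by $V^\top V=I_k$ only $n-1$ of the $n$ row-norm constraints are independent. Viewing $\nSt(n,k)$ as a regular preimage inside $\St(k,n)$, this gives the dimension
\begin{equation*}
\dim \nSt(n,k) = \big(nk - k(k+1)/2\big) - (n-1) = (k-1)n + 1 - k(k+1)/2.
\end{equation*}
To apply the regular value theorem, I would need to show that the $n-1$ remaining row-norm constraints are transverse to $\St(k,n)$, i.e.\ that the differential of $V\mapsto \diag(VV^\top)$, restricted to the tangent space $T_V\St(k,n)=\{W:V^\top W+W^\top V=0\}$, surjects onto the hyperplane $\{c\in\R^n : \sum_i c_i = 0\}$. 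This is the key technical step, and I would prove it by exhibiting, for any distinct pair of indices $i_1\ne i_2$, an explicit tangent direction of the form $W=V_\perp B$ (with $V_\perp$ an orthogonal complement and $B$ chosen to perturb rows $i_1$ and $i_2$ in opposite directions) whose image under the differential realizes $e_{i_1}-e_{i_2}$; spanning such differences generates the desired hyperplane.

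Finally I would treat the three special cases, which are essentially direct computations. For $k=n$ in (i), $V^\top V=I_n$ already forces $V$ orthogonal and hence $VV^\top=I_n$, so the normalization $\diag(VV^\top)=I_n$ is automatic. For $k=1$ in (ii), $V$ is a unit column vector with $v_i^2 = 1/n$, forcing $v_i=s_i/\sqrt{n}$ with $s_i\in\{\pm 1\}$. For $k=2$ in (iii), each row of $V$ has norm $\sqrt{2/n}$, so may be written as $\sqrt{2/n}(\cos\alpha_i,\sin\alpha_i)$; the column unit-norm conditions $\sum_i\cos^2\alpha_i=\sum_i\sin^2\alpha_i=n/2$ and the column orthogonality $\sum_i\cos\alpha_i\sin\alpha_i=0$ reduce, via the double-angle identities, exactly to $\sum_m\cos(2\alpha_m)=\sum_m\sin(2\alpha_m)=0$, which is $\sum_m e^{2\alpha_m\sqrt{-1}}=0$.

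The main obstacle I expect is the transversality/regular-value verification in the second paragraph, since the rest of the statement is either immediate (closedness, analyticity, $k=n$ case) or reduces to elementary parametrization ($k=1,2$). Once transversality is established, the dimension count is automatic and the submanifold claim (both in $\R^{n\times k}$ and inside $\St(k,n)$) follows at once from the standard regular value theorem.
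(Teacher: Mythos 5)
Your strategy is essentially the paper's: both realize $\nSt(n,k)$ as a level set of the row-norm map $\Phi(V)=\diag(VV^\top)$ restricted to $\St(k,n)$, both note that $\trace(VV^\top)=\trace(V^\top V)=k$ makes one of the $n$ row constraints redundant, and both obtain the dimension $(k-1)n+1-k(k+1)/2$ by counting $n-1$ remaining independent constraints (the paper invokes the constant-rank level-set theorem, asserting without argument that $\Phi$ has constant rank; you invoke the regular value theorem and defer the transversality check). Your treatment of compactness, of the inclusion $\nSt(n,k)\subseteq\St(k,n)$, and of cases (i)--(iii) matches the paper's and is correct.

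The gap is exactly the step you flag as the key technical one, and it cannot be closed: the differential of $\Phi$ on $T_V\St(k,n)$ does \emph{not} surject onto $\{c:\sum_i c_i=0\}$ at every $V\in\nSt(n,k)$. Writing $W=VA+V_\perp B$ with $A$ skew, the $VA$-part contributes nothing to $d\Phi$, and the image of the $B$-part is $\{\,(2\langle (V_\perp)_{i*}^\top V_{i*},B\rangle_F)_{i=1}^n\,\}$, whose dimension is $\dim\spn\{(V_\perp)_{i*}^\top V_{i*}\}$. At $V_0=\tfrac{1}{\sqrt2}\bigl[\begin{smallmatrix}I_2\\ I_2\end{smallmatrix}\bigr]\in\nSt(4,2)$ (one of the paper's own $n=2k$ examples) these four rank-one matrices are $\pm\tfrac12 E_{11},\pm\tfrac12 E_{22}$, the image is $\{(t,s,-t,-s)\}$, and your target vector $e_1-e_2$ is not attained; so the rank is $2$, not $n-1=3$, and $(k/n,\dots,k/n)^\top$ is not a regular value. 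Worse, the obstruction is not an artifact of the method: using the angle parametrization of part (iii) (an embedding of $\{\alpha:\sum_m e^{2\alpha_m\sqrt{-1}}=0\}$ onto $\nSt(n,2)$), a second-order expansion at $\alpha^0=(0,\pi/2,0,\pi/2)$ reduces the defining equations to $(\epsilon_1-\epsilon_2)(\epsilon_2-\epsilon_3)=O(|\epsilon|^3)$ inside $\R^3$ --- two transversally crossing sheets --- so $\nSt(4,2)$ is not locally Euclidean at $V_0$ and the submanifold claim itself fails there. (A quicker red flag: for $k=n$ all $n$ row constraints are implied by $V^\top V=I_n$, yet the formula gives $(n-1)(n-2)/2$ instead of $\dim O(n)=n(n-1)/2$.) The paper's proof shares this defect --- it asserts ``constant rank $n$,'' which is inconsistent even with its own count of $n-1$ independent constraints and false at points such as $V_0$ --- so your proposal faithfully reproduces the published argument together with its hole; a correct statement would have to restrict to the regular stratum of $\nSt(n,k)$ or to the cases $k\in\{1,n\}$ where the description is explicit.
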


We postpone the proof of Lemma~\ref{ch-gunit} to
Appendix~\ref{sec:proofs}. We remark that in the case of $n=k=2$, the
constraint~\eqref{eq.ngon} implies that $2\alpha_2=\pi+2\pi s+2\alpha_1$,
where $s\in\integer$, that is, $\alpha_2=\pi/2+\pi s+\alpha_1$ and
$\cos\alpha_2=(-1)^{s+1}\sin\alpha_1$,
$\sin\alpha_2=(-1)^{s}\cos\alpha_1$. Thus, the matrices in $\nSt(2,2)$ are
orthogonal $2\times 2$ matrices (representing rotations or rotations with
reflection):
\[
V=\begin{bmatrix}
\cos\alpha_1 & \sin\alpha_1\\
-\ve\sin\alpha_1 & \ve\cos\alpha_1
\end{bmatrix},\quad \ve\in\{-1,+1\}.
\]
For a general $k$, it is difficult to give a closed-form description of all
matrices from $\nSt(n,k)$. However, there are simple examples of matrices
from $\nSt(n,k)$ in the case where $n=2k$, including every matrix of the form
\[
V=\frac{1}{\sqrt{2}}
\begin{bmatrix}
U_1\\
U_2
\end{bmatrix},
\]
where $U_i$ are orthogonal $k\times k$ matrices.

%In the next subsections we use \nSt matrices as a basis to construct symmetric equilibria of the projected pure-influence model~\eqref{def:projected-pure-influence}.

%\color{magenta}
\subsection{Technical results}

We here present two technical results proved in Appendix~\ref{sec:proofs}.

\begin{lemma}\label{lem.tech2}
Suppose that $Z^2-2\alpha Z=\beta I_n$ for some symmetric $n\times n$ matrix $Z$ with $\diag Z=0$. Then $Z$ can be decomposed as%~\eqref{eq.special}
\begin{equation}\label{eq.special1}
Z=pVV^{\top}-qI_n=Z^{\top}
\end{equation}
for some $V\in \nSt(n,k)$ ($1\le k<n$) and constants $p,q\ge 0$ such that $pk=qn$, $2\alpha=\theta=p-2q$ and $\beta=q(p-q)$. Namely,
$p=2\sqrt{\alpha^2+\beta},\quad q=\sqrt{\alpha^2+\beta}-\alpha$.
\end{lemma}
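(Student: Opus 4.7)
\bigskip

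\noindent\textbf{Proof plan for Lemma~\ref{lem.tech2}.}

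The plan is to exploit the fact that $Z$ satisfies a quadratic polynomial equation in order to put it in spectral form, and then turn the condition $\diag Z=0$ into the defining property of the \nSt manifold. First, I would rewrite the hypothesis as the polynomial identity $Z^2-2\alpha Z-\beta I_n=0$. Because $Z$ is real symmetric, it is orthogonally diagonalizable, and each eigenvalue must be a root of the scalar quadratic $\lambda^2-2\alpha\lambda-\beta=0$. The existence of a real symmetric $Z$ forces $\alpha^2+\beta\ge 0$, so set $r:=\sqrt{\alpha^2+\beta}\ge 0$; the eigenvalues of $Z$ are then contained in $\{\lambda_+,\lambda_-\}=\{\alpha+r,\alpha-r\}$.

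Next, let $k$ denote the multiplicity of $\lambda_+$ as an eigenvalue of $Z$, and choose $V\in\R^{n\times k}$ whose columns are any orthonormal basis of $\ker(Z-\lambda_+ I_n)$, so that $V^\top V=I_k$ (condition~\ref{eq.gu1} of Definition~\ref{def-unit}). The spectral theorem yields
\begin{equation*}
Z=\lambda_+ VV^\top+\lambda_-(I_n-VV^\top)=(\lambda_+-\lambda_-)VV^\top+\lambda_- I_n=2rVV^\top+(\alpha-r)I_n.
\end{equation*}
Setting $p:=2r$ and $q:=r-\alpha$ puts this in the form $Z=pVV^\top-qI_n$, and the identities $2\alpha=p-2q$ and $\beta=r^2-\alpha^2=(r-\alpha)(r+\alpha)=q(p-q)$ follow directly from the definitions of $p,q$.

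The main substantive step is to use $\diag Z=0$ to finish. This identity forces $p\,\diag(VV^\top)=q I_n$. If $p>0$, then $\diag(VV^\top)=(q/p)I_n$; taking the trace and using $\trace(VV^\top)=\trace(V^\top V)=k$ gives $n(q/p)=k$, i.e., $pk=qn$, and $\diag(VV^\top)=(k/n)I_n$. This is precisely condition~\ref{eq.gu2} of Definition~\ref{def-unit}, so $V\in\nSt(n,k)$. Non-negativity of $q$ is automatic from $pk=qn$ with $p\ge 0$ and $k\ge 1$. To see the range $1\le k<n$: if only one eigenvalue occurred, $Z$ would be a scalar multiple of $I_n$, which combined with $\diag Z=0$ gives $Z=0$; this falls into the degenerate branch $p=q=0$ (forcing $\alpha=\beta=0$), and any $V\in\nSt(n,k)$ with $1\le k<n$ trivially realizes the decomposition. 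Otherwise both eigenspaces are nontrivial and $1\le k<n$ holds automatically.

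I do not anticipate a serious obstacle: the only nontrivial maneuver is recognizing that the zero-diagonal constraint on $Z$ is exactly the row-norm-uniformity condition on the orthonormal frame $V$, i.e., the \nSt property. Everything else is a routine spectral computation once the minimal polynomial observation is made.
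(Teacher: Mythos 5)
Your proposal is correct and follows essentially the same route as the paper's own proof: both reduce the polynomial identity to the statement that $Z$ has at most the two eigenvalues $\alpha\pm\sqrt{\alpha^2+\beta}$, write $Z=(p-q)VV^\top-qWW^\top$ with $VV^\top+WW^\top=I_n$ via the spectral theorem, and then read the zero-diagonal condition as $\diag(VV^\top)=(k/n)I_n$, i.e., the \nSt property, with the degenerate case $\alpha=\beta=0$ handled separately. The only cosmetic difference is that the paper extracts $pk=qn$ from $\trace Z=0$ before invoking the full diagonal condition, while you obtain it by tracing $p\,\diag(VV^\top)=qI_n$; these are equivalent.
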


\begin{corollary}\label{cor.diag}
Given a matrix $Z=Z^{\top}$ with $\diag(Z)=0$, the matrix $Z^2-2\alpha Z$ is
diagonal with $s$ different eigenvalues $\beta_1<\ldots<\beta_s$ of
multiplicities $n_1,\ldots,n_s$ respectively ($n_1+n_2+\ldots+n_s=n$) if
and only if there exists such a permutation matrix $S$ that
\[
SZS^{-1}=\diag(Z_1,\ldots,Z_s),
\]
where each $Z_i$ is decomposed as~\eqref{eq.special1} with parameters $p_i,q_i,V_i$, where $V_i\in \nSt(n_i,k_i)$ for some $k_i<n_i$ and
\begin{equation}\label{eq.pq-general1}
p_i=2\sqrt{\alpha^2+\beta_i},\quad q_i=\sqrt{\alpha^2+\beta_i}-\alpha.
\end{equation}
Thus, for \emph{irreducible} $Z=Z^{\top}$ the matrix $Z^2-2\alpha Z$ is diagonal if and only if $Z$ is decomposed as~\eqref{eq.special1} with $p,q\ge 0$.
\end{corollary}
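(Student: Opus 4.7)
My plan is to reduce to Lemma~\ref{lem.tech2}, which handles a single block, by exploiting the coordinate-block structure imposed on $Z$ by the diagonality of $Z^2-2\alpha Z$.

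For the \emph{if} direction I would assume $SZS^{-1}=\diag(Z_1,\ldots,Z_s)$ with $Z_i=p_iV_iV_i^\top-q_iI_{n_i}$ and parameters as prescribed. Conjugation by a permutation commutes with polynomials, so
\[
S(Z^2-2\alpha Z)S^{-1}=\diag(Z_1^2-2\alpha Z_1,\ldots,Z_s^2-2\alpha Z_s).
\]
Plugging the formulas for $p_i,q_i$ into the identity $\beta=q(p-q)$ of Lemma~\ref{lem.tech2} gives $Z_i^2-2\alpha Z_i=\beta_iI_{n_i}$, so $Z^2-2\alpha Z$ is diagonal with eigenvalues $\beta_1<\cdots<\beta_s$ of the prescribed multiplicities $n_1,\ldots,n_s$.

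For the \emph{only if} direction I would set $D:=Z^2-2\alpha Z$ and, for each $i\in\until{s}$, let $I_i\subseteq\until{n}$ collect the coordinates on which $D_{jj}=\beta_i$. I then pick a permutation $S$ that reorders indices so that $I_1,\ldots,I_s$ appear consecutively; this yields $SDS^{-1}=\diag(\beta_1I_{n_1},\ldots,\beta_sI_{n_s})$. Since $Z$ commutes with its own polynomial $D$, and the $\beta_i$ are distinct, each coordinate subspace $\spn\{e_j:j\in I_i\}$ is an eigenspace of $SDS^{-1}$ and hence invariant under $SZS^{-1}$. Consequently $SZS^{-1}=\diag(Z_1,\ldots,Z_s)$, with each $Z_i$ symmetric and zero-diagonal (both properties being preserved under permutation conjugation). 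Restricting the identity to the $i$th block gives $Z_i^2-2\alpha Z_i=\beta_iI_{n_i}$, so Lemma~\ref{lem.tech2} applied blockwise delivers the decomposition $Z_i=p_iV_iV_i^\top-q_iI_{n_i}$ with $V_i\in\nSt(n_i,k_i)$ and the claimed formulas for $p_i,q_i$. The final claim for irreducible $Z$ is then immediate: a connected graph $G(Z)$ admits no nontrivial block-diagonalization, hence $s=1$ and the description collapses to a single block of the form~\eqref{eq.special1}.

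The main technical point, which I would state carefully rather than gloss over, is the commutativity/distinct-eigenvalue reduction: from $ZD=DZ$ together with the fact that $SDS^{-1}$ is scalar on each coordinate block $I_i$ and that the scalars $\beta_i$ are distinct, one concludes that $SZS^{-1}$ must respect the same partition and therefore be block-diagonal. Once this observation is in place, the rest of the argument is essentially bookkeeping and a blockwise invocation of Lemma~\ref{lem.tech2}.
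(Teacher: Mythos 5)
Your proof is correct, and it shares the paper's overall architecture: first force $Z$ into a permutation-conjugated block-diagonal form aligned with the level sets of the diagonal entries of $D=Z^2-2\alpha Z$, then apply Lemma~\ref{lem.tech2} block by block. The one place where you genuinely diverge is the mechanism for establishing block-diagonality. The paper argues from the side of $Z$: setting $f(z)=z^2-2\alpha z$, it groups the eigenspaces of $Z$ according to which $\beta_i$ their eigenvalues are sent to by $f$, shows each coordinate basis vector $e_r$ lies entirely in the subspace $\mathcal{X}_i$ associated with its block (using that the $\beta_i$ are pairwise distinct), and closes with a dimension count to identify each $\mathcal{X}_i$ with a coordinate subspace; the argument is phrased so as to work for any analytic $f$ with $f(Z)$ diagonal. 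You instead argue from the side of $D$: after permutation, $D$ is $\diag(\beta_1 I_{n_1},\dots,\beta_s I_{n_s})$ with distinct $\beta_i$, so its eigenspaces are exactly the coordinate blocks, and $Z$ commutes with its own polynomial $D$, hence preserves them. Your route is shorter and more elementary --- it can even be read off entrywise, since $ZD=DZ$ gives $(d_{jj}-d_{kk})z_{jk}=0$, killing every off-block entry --- while the paper's version buys a marginally more general statement. You also make the ``if'' direction and the reduction to $s=1$ in the irreducible case explicit, both of which the paper leaves essentially implicit; these are handled correctly.
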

\color{black}

\subsection{Classification of irreducible symmetric equilibria}
%We now classify all symmetric equilibria of the projected pure-influence model~\eqref{def:projected-pure-influence}.
%We consider first the case where the equilibrium corresponds to a connected graph $G(Z)$.%~\eqref{eq.special}.

\begin{theorem}[Irreducible equilibria for the projected pure-influence model]\label{cor.tech}
  For the projected pure-influence model~\eqref{def:projected-pure-influence},
  \begin{enumerate}
  \item \label{1a0} all irreducible symmetric equilibria are of the form
    \begin{equation}
      \label{eq.special}
      Z^*=pVV^{\top}-qI_n,
    \end{equation}
    with $V\in \nSt(n,k)$, $k<n$, and
    \begin{equation}\label{eq.pq}
      p=\sqrt{\frac{n}{k(n-k)}},\quad q=\sqrt{\frac{k}{n(n-k)}};
    \end{equation}
\item \label{la1}$Z^*$ has $k$ positive eigenvalues with value $p-q$ and $n-k$ negative eigenvalues with value $-q$;
\item \label{la2}the dissonance function  satisfies
  \begin{equation}\label{eqH}
    \D(Z^*)=-\frac{n-2k}{\sqrt{kn(n-k)}},
  \end{equation}
  and the right-hand side is monotonically increasing in $k\in\until{n-1}$
  (see Figure~\ref{cos_D_eval}).
  \end{enumerate}
\end{theorem}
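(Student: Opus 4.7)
The plan is to reduce the equilibrium equation to the algebraic setting of Lemma~\ref{lem.tech2} / Corollary~\ref{cor.diag}, and then extract $p,q,k$ from the normalization constraint $\Fnorm{Z}=1$.

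\textbf{Step 1 (structural form).} Rewrite the equilibrium condition~\eqref{eq.equil} as
\begin{equation*}
  Z^2 - 2\alpha Z \;=\; \diag(Z^2), \qquad 2\alpha := -\D(Z).
\end{equation*}
Hence at every symmetric equilibrium $Z\in\Qzus$, the matrix $Z^2-2\alpha Z$ is diagonal. Corollary~\ref{cor.diag} asserts that such a $Z$ is block-decomposable, with the number of blocks equal to the number of distinct diagonal eigenvalues $\beta_1<\dots<\beta_s$. If $Z$ is irreducible, then necessarily $s=1$, so $Z^2-2\alpha Z=\beta I_n$ for some scalar $\beta$, and Lemma~\ref{lem.tech2} yields
\begin{equation*}
  Z \;=\; pVV^\top - qI_n, \quad V\in\nSt(n,k),\ 1\le k<n,\ p,q\ge 0,\ pk=qn.
\end{equation*}
This establishes the form~\eqref{eq.special}.

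\textbf{Step 2 (fixing $p$ and $q$).} I would compute $\Fnorm{Z}^2=\trace(Z^2)$ using $V^\top V=I_k$, $\trace(VV^\top)=k$, and the relation $pk=qn$. A short calculation gives $\trace(Z^2)=p^2 k(n-k)/n$; setting this equal to $1$ and back-substituting into $q=pk/n$ produces exactly the values~\eqref{eq.pq}.

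\textbf{Step 3 (eigenvalues).} Since $V^\top V=I_k$, the matrix $VV^\top$ is the orthogonal projector onto the $k$-dimensional column space of $V$, hence has eigenvalues $1$ (multiplicity $k$) and $0$ (multiplicity $n-k$). Consequently $Z^* = pVV^\top-qI_n$ has eigenvalues $p-q$ (multiplicity $k$) and $-q$ (multiplicity $n-k$), establishing statement~\ref{la1}. Using~\eqref{eq.pq}, one checks $p-q=\sqrt{(n-k)/(nk)}$.

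\textbf{Step 4 (dissonance value and monotonicity).} Applying the identity $\D(Z^*)=-\sum_i\lambda_i^3$ together with the eigenvalues from Step~3,
\begin{equation*}
  \D(Z^*) \;=\; -k(p-q)^3 + (n-k)q^3,
\end{equation*}
and the expressions for $p,q$ simplify this to~\eqref{eqH} after a routine algebraic manipulation (the key reduction is $k^2-(n-k)^2=n(2k-n)$). For monotonicity in $k$, I would treat $k$ as a continuous variable and differentiate
$f(k)= (2k-n)/\sqrt{k(n-k)}$
directly; the numerator of $f'(k)$ simplifies to $4k(n-k)+(n-2k)^2=n^2>0$, so $f$ (and hence the right-hand side of~\eqref{eqH}) is strictly increasing on $(0,n)$, a fortiori on $\{1,\dots,n-1\}$.

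\textbf{Main obstacle.} The only delicate point is Step~1: one must verify that irreducibility rules out the case $s>1$ in Corollary~\ref{cor.diag}, i.e., that the block-diagonalization it provides is a genuine obstruction to irreducibility. This is immediate from the definition, but it is the single step where the assumption ``irreducible'' is used. Everything else is computational and follows cleanly once the structural form~\eqref{eq.special} is in hand.
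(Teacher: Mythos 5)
Your proposal is correct and follows essentially the same route as the paper's own proof: both reduce the equilibrium equation~\eqref{eq.equil} to the diagonal form $Z^2-2\alpha Z=\diag(Z^2)$ with $2\alpha=-\D(Z)$, invoke Corollary~\ref{cor.diag} (with irreducibility forcing a single block) and Lemma~\ref{lem.tech2} to obtain $Z^*=pVV^\top-qI_n$, and then pin down $p,q$ from $\diag(Z)=0$ and $\trace(Z^2)=1$ before reading off the eigenvalues and the value of $\D$. Your Steps 2--4 are algebraically equivalent to the paper's auxiliary claims (a)--(d), and your explicit derivative computation $4k(n-k)+(n-2k)^2=n^2>0$ is a slightly more detailed justification of the monotonicity the paper simply asserts.
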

\begin{figure}[ht!]
  \centering
  \includegraphics[width=0.6\linewidth]{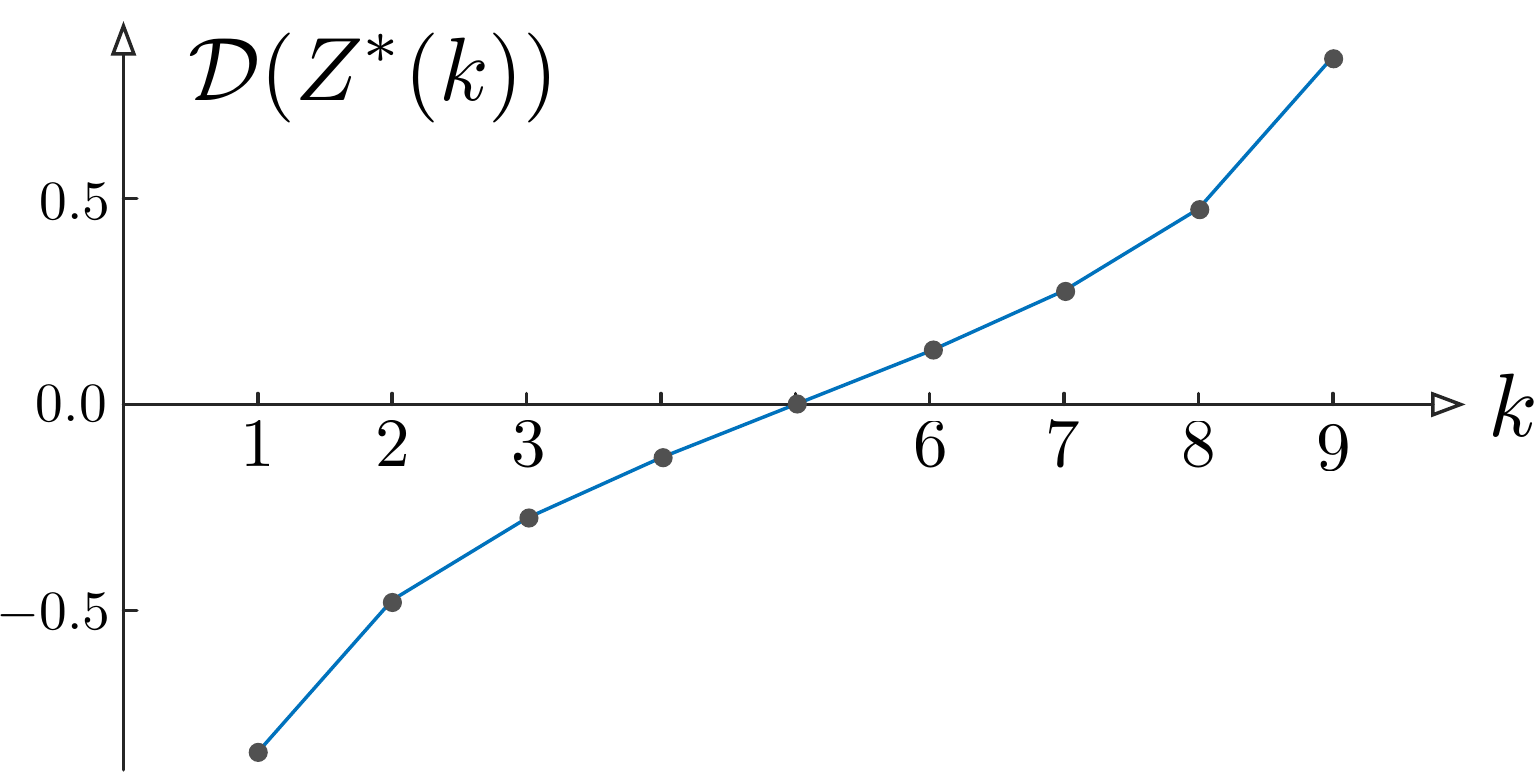}
  \caption{For a network with size $n=10$, the dissonance function $\D$
    evaluated on all irreducible symmetric equilibria with $k\in\until{9}$
    positive eigenvalues, according to equation~\eqref{eqH}.}
  \label{cos_D_eval}
\end{figure}

\begin{proof}
We start by proving a technical statement. Pick $V\in \nSt(n,k)$, $p,q$ real
numbers and $\theta=p-2q$. Then, %we claim that
the matrix
$Z=pVV^{\top}-qI_n=Z^{\top}$ satisfies the following properties:
\begin{enumerate}[label=(\alph*)]
\item $Z^2-\theta Z=q(p-q)I_n$, and thus $\diag(Z^2)=\theta\diag(Z)+q(p-q)I_n$;\label{in1}
\item for any $p\ne 0$, the matrix $Z$ has two eigenvalues $p-q$ and $(-q)$
  whose multiplicities are $k$ and $(n-k)$ respectively;\label{in2}
\item the eigenspaces corresponding to $p-q$ and $-q$ are the image of $V$ and the kernel of $V^{\top}$ respectively; \label{in3}
\item $\diag(Z)=\vect{0}_{n\times{n}}$ if and only if $pk=qn$;
  in this situation, $\trace(Z^2)=q(p-q)n$ and $\D(Z)=-\trace(Z^2Z^{\top})=-\theta nq(p-q)$.\label{in4}
\end{enumerate}
\noindent To prove~\ref{in1}, recall that $V^{\top}V=I_k$ and therefore
%\[
%Z^2=p^2VV^{\top}VV^{\top}+q^2I_n-2pqVV^{\top}=p\theta VV^{\top}+q^2I_n=\theta Z+(pq-q^2)I_n.
%\]
\begin{align*}
Z^2&=p^2VV^{\top}VV^{\top}+q^2I_n-2pqVV^{\top}=p\theta VV^{\top}+q^2I_n\\
&=\theta Z+(pq-q^2)I_n.
\end{align*}
To prove~\ref{in2} and~\ref{in3}, notice that for any vector $z=Vy$ one has
$VV^{\top}z=V(V^{\top}V)y=Vy=z$, and thus $Zz=(p-q)z$. The space of such
vectors is nothing else than the image of $V$ and has dimension $k$ (recall
that the columns of $V$ are orthogonal, and hence are linearly
independent). If $V^{\top}z=0$, then $Zz=-qz$, and the dimension of
$\ker(V^{\top})$ is $(n-k)$. Since $Z=Z^{\top}$ and $p-q\ne -q$ (except for
the case where $p=q=0$ and $Z=0$, which is trivial), the two eigenspaces
are orthogonal and their sum coincides with $\R^n$. Hence, there are no
other eigenvalues.  To prove~\ref{in4}, note first $p\diag
(VV^{\top})=(pk/n) I_n$, and thus $\diag(Z)=0$ if and only if
$pk/n=q$. Using statement~\ref{in1}, one shows that in this situation
$\diag(Z^2)=q(p-q)I_n$ and hence $\trace(Z^2)=q(p-q)n$. Thanks
to~\ref{in1}, $Z^3=\theta Z^2+q(p-q)Z\Longrightarrow
\trace(Z^3)=\theta\trace(Z^2)=\theta nq(p-q)$, which finishes the proof
of~\ref{in4}.

Now, to prove the statement~\ref{1a0} of the theorem, note first that
from~\ref{in1} and equation~\eqref{eq.equil}, it follows from
Corollary~\ref{cor.diag} that every irreducible equilibrium is decomposed
as~\eqref{eq.special} with some $p,q\geq 0$.  Moreover, note that
from~\ref{in1} and~\ref{in4}, it also follows that
equation~\eqref{eq.equil} holds if and only if $pk=qn$ (which comes from
$Z$ having zero diagonal entries and so $k<n$) and $pq-q^2=1/n$ (which
comes from $\trace(Z^2)=1$). This implies that $q=\sqrt{\frac{k}{n(n-k)}}$
and $p=\sqrt{\frac{n}{k(n-k)}}$.

Finally, statement~\ref{la1} follows from~\ref{in2}; and~\ref{la2} is
obtained by substituting the values of $p$ and $q$ to the definition of the
dissonance function~\eqref{beta_hat1} and noting that the smooth function
$\kappa\mapsto -\frac{n-2\kappa}{\sqrt{n\kappa(n-\kappa)}}$ has positive
derivative on $(0,n)$.
\end{proof}

%\begin{remark}
%\label{remark_pn}
%Consider an irreducible symmetric equilibrium $Z^*$ of the projected pure-influence mode~\eqref{def:projected-pure-influence} in the setting of Theorem~\ref{cor.tech}. Then, from~\eqref{eqH}, $\Hs(Z^*)=\D(Z^*)=0$ if and only if $n=2k$, i.e., it is possible to construct equilibria with a zero valued (influence) dissonance function only when $n$ is even. In this situation, any matrix $X^*=\la Z^*,\la\in\R$, is an equilibrium of the pure-influence model~\eqref{inf-dyn-m}.
%\end{remark}

\subsection{Classification of reducible symmetric equilibria}

The next theorem generalizes Theorem~\ref{cor.tech} and characterizes all
symmetric equilibria for the projected pure-influence model and its proof
can be found in Appendix~\ref{sec:proofs}.

\begin{theorem}[All equilibria for the projected pure-influence model]
\label{th_multi}
The matrix $Z^*$ is an equilibrium~\eqref{eq.equil} of the projected pure-influence model if and only if a permutation matrix $S$ exists such that:
\begin{enumerate}[label=(\roman*)]%[label*=\arabic*.]
\item $SZ^*S^{-1}=\diag(Z_1^*,\ldots,Z_s^*)$, $s\ge 1$, $Z_i^*={Z_i^*}^{\top}\in\R^{n_i\times n_i}$;
\item $Z_i^*=p_iVV^{\top}-q_iI_{n_i}$, where $p_i,q_i\ge 0$ and $V\in \nSt(n_i,k_i)$, $k_i<n_i$;
\item \label{s3}the sign $\ve=\sign(n_i-2k_i)\in\{-1,0,1\}$ is the same for all $i=1,\ldots,s$ such that $Z_i^*\ne \vect{0}_{n_i\times{n_i}}$ and
\item \label{s4}for each block $Z_i^*\ne \vect{0}_{n_i\times{n_i}}$ the coefficients $p_i,q_i$ have the form
\begin{equation}\label{eq.pq-general}
p_i=2\sqrt{\alpha^2+\beta_i},\quad q_i=\sqrt{\alpha^2+\beta_i}-\alpha,
\end{equation}
where
\begin{enumerate}%[label*=\arabic*.]
\item for $\ve\ne 0$, $\alpha$ and $\beta_i$ are determined from \label{a-s4}
\begin{equation}\label{eq.alpha-beta}
%\begin{gathered}
%\alpha=\ve\left(\sum_{i:Z_i\ne 0}\frac{4k_in_i(n_i-k_i)}{(n_i-2k_i)^2}\right)^{-1/2},\quad \beta_i=\alpha^2\frac{4n_ik_i-4k_i^2}{(n_i-2k_i)^2};
%\end{gathered}
\begin{split}
&\alpha=\ve\Bigg(\sum_{i:Z_i\ne 0}\frac{4k_in_i(n_i-k_i)}{(n_i-2k_i)^2}\Bigg)^{-1/2}\\
&\beta_i=\alpha^2\frac{4n_ik_i-4k_i^2}{(n_i-2k_i)^2};
\end{split}
\end{equation}
\item for $\ve=0$, $\alpha=0$, for all $i$, and $\beta_i$ are chosen in
  such a way that $\sum_{i:Z_i\ne 0}\beta_in_i=1$.
\end{enumerate}
\end{enumerate}
\end{theorem}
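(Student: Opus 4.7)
The plan is to reduce the equilibrium equation~\eqref{eq.equil} to a family of block equations indexed by the connected components of $G(Z^*)$, then to invoke Corollary~\ref{cor.diag} block-wise, and finally to tie the blocks together through the global scalar $\D(Z^*)$ that appears in every block. First, since $Z^*={Z^*}^{\top}$ and $\diag(Z^*)=0$, there exists a permutation matrix $S$ with $SZ^*S^{-1}=\diag(Z_1^*,\dots,Z_s^*)$, where each $Z_i^*=Z_i^{*\top}\in\R^{n_i\times n_i}$ is either the zero block or irreducible with $\diag(Z_i^*)=0$. Since~\eqref{eq.equil} is preserved by conjugation with $S$ and restricts to each block, it becomes, for every $i$,
\begin{equation*}
  Z_i^{*2}-2\alpha Z_i^*=\diag(Z_i^{*2}),\qquad 2\alpha:=-\D(Z^*),
\end{equation*}
so $Z_i^{*2}-2\alpha Z_i^*$ is a diagonal matrix. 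The same scalar $\alpha$ appears in every block because $\D$ depends on the whole matrix $Z^*$ — this coupling is exactly what produces the global sign condition in item~\ref{s3}.

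Next, I would apply Corollary~\ref{cor.diag} block by block. For each irreducible block, $Z_i^{*2}-2\alpha Z_i^*=\beta_iI_{n_i}$ for some scalar $\beta_i$, and Lemma~\ref{lem.tech2} then gives $Z_i^*=p_iV_iV_i^{\top}-q_iI_{n_i}$ with $V_i\in\nSt(n_i,k_i)$ and $p_i,q_i\ge 0$ related to $\alpha,\beta_i$ by~\eqref{eq.pq-general}; the constraint $\diag(Z_i^*)=0$ forces $p_ik_i=q_in_i$ and, in particular, $k_i<n_i$ whenever $Z_i^*\neq 0$. From $p_i=n_iq_i/k_i$ and $2\alpha=p_i-2q_i$, a one-line computation yields $\alpha=(n_i-2k_i)q_i/(2k_i)$, so $\sign(\alpha)=\sign(n_i-2k_i)$ for every non-trivial block. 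This instantly yields item~\ref{s3}, with $\ve:=\sign(\alpha)$, and confirms that a zero block imposes no sign constraint. (For a zero block, any $k_i$ is consistent with the equilibrium equation, so the block decomposition is only essentially unique on the non-zero blocks.)

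To pin down the value of $\alpha$ I would use the two global identities $\|Z^*\|_F^2=1$ and $\D(Z^*)=-2\alpha$ (the latter is the self-consistency of $\alpha$). Using $Z_i^{*2}=2\alpha Z_i^*+\beta_iI_{n_i}$ and $\trace(Z_i^*)=0$, one gets $\trace(Z_i^{*2})=n_i\beta_i$ and $\trace(Z_i^{*3})=2\alpha n_i\beta_i$, so
\begin{equation*}
  1=\sum\nolimits_i n_i\beta_i,\qquad \D(Z^*)=-2\alpha\sum\nolimits_i n_i\beta_i=-2\alpha,
\end{equation*}
where the second identity is automatically satisfied, confirming that $\alpha$ itself is the only free global parameter. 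In case~\ref{a-s4} ($\ve\neq 0$), substituting $q_i=2\alpha k_i/(n_i-2k_i)$ into $\beta_i=q_i(p_i-q_i)=q_i^2(n_i-k_i)/k_i$ gives the closed form for $\beta_i$ in~\eqref{eq.alpha-beta}, and then $\sum_i n_i\beta_i=1$ solves for $\alpha^2$; combining with the sign $\ve$ yields the formula for $\alpha$. The formulas~\eqref{eq.pq-general} for $p_i,q_i$ follow from Lemma~\ref{lem.tech2} once one checks that $\alpha^2+\beta_i=\alpha^2n_i^2/(n_i-2k_i)^2$. In the degenerate case $\ve=0$, every non-trivial block has $n_i=2k_i$, which forces $\alpha=0$, $p_i=2q_i$, and $\beta_i=q_i^2$, so the only remaining constraint is $\sum_in_i\beta_i=1$.

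Finally, the converse (that every $Z^*$ of the stated form satisfies~\eqref{eq.equil}) is a direct verification using $V_i^{\top}V_i=I_{k_i}$ and $\diag(V_iV_i^{\top})=(k_i/n_i)I_{n_i}$, entirely analogous to the computations in the proof of Theorem~\ref{cor.tech}. The main obstacle is conceptual rather than technical: namely, recognising that a single global $\alpha$ must be shared by all blocks forces the sign alignment~\ref{s3}, and that the apparent circularity ``$\alpha$ depends on $\D(Z^*)$ which depends on the blocks'' is resolved by the automatic identity $\D(Z^*)=-2\alpha\sum_i n_i\beta_i=-2\alpha$, reducing the problem to one scalar equation for $\alpha$.
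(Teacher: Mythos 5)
Your proposal is correct and follows essentially the same route as the paper's proof: rewrite the equilibrium equation as $Z^2-2\alpha Z$ diagonal with $2\alpha=-\D(Z^*)$, invoke Corollary~\ref{cor.diag} and Lemma~\ref{lem.tech2} to get the block form $p_iV_iV_i^{\top}-q_iI_{n_i}$, extract the sign condition from $p_ik_i=q_in_i$ together with $2\alpha=p_i-2q_i$, and determine $\alpha,\beta_i$ from $\trace(Z^{*2})=\sum_i n_i\beta_i=1$, with the converse by direct verification. The only (cosmetic) difference is that you pre-decompose into connected components and apply the irreducible case of Corollary~\ref{cor.diag} block-wise, whereas the paper applies the corollary to the whole matrix at once.
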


\begin{remark}
Let $Z^*$ be a reducible equilibrium for the projected pure-influence model
such that $G(Z^*)$ is composed of $m$ (disconnected) subgraphs that satisfy
structural balance.
%  such that $SZS^{-1}=\diag(Z_1,\ldots,Z_m)$, $m>1$, for some permutation matrix $S$ and where $Z_i=p_iVV^{\top}-q_iI_{n_i}$, where $p_i,q_i\ge 0$ and $V\in \nSt(n_i,1)$. It can be easily proved that $G(Z_i)$ satisfies structural balance.
%However, notice that this \emph{does not mean} that $G(Z^*)$ satisfies structural balance according to Definition~\ref{def:sbal}, since this definition requires $G(Z^*)$ to be complete.
According to Definition~\ref{def:sbal}, $G(Z^*)$ \emph{does not} satisfy
structural balance since this definition requires $G(Z^*)$ to be complete.
\end{remark}

%% \pcmargin{I commented a remark similar to the previous one that I have also commented.}
%Finally, in a similar way to Remark~\ref{remark_pn}, it follows that
%equilibria of the projected pure-influence model with zero valued
%dissonance functions (corresponding to $\ve=0$ in statement~\ref{s3} from
%Theorem~\ref{th_multi}) are in one-to-one correspondence with equilibria
%of the pure-influence model~\eqref{inf-dyn-m}.

\subsection{Structural balance and equilibria}

We now characterize the equilibria corresponding to structural balance and
how they minimize the dissonance function.

\begin{corollary}[Balanced equilibria of the projected pure-influence model]\label{thm:bal-eq}
For the projected pure-influence
model~\eqref{def:projected-pure-influence}, let $Z^*\in\Qzu$ be an
equilibrium point with a single positive eigenvalue. Then,
  \begin{enumerate}
  \item\label{fact:equilibrium:2} $Z^*$ has the form
  \begin{equation} \label{pre_otr}
    Z^*=
    \left[
      \begin{array}{c|c}
        Z' & \vect{0}_{n_1\times{n-n_1}}\\
        \hline
        \vect{0}_{n-n_1\times{n_1}} & \vect{0}_{n-n_1\times{n-n_1}}
      \end{array}
      \right]
%\begin{bmatrix}
%X'(0) & \vect{0}_{n_1\times{n-n_1}}\\
%\vect{0}_{n-n_1\times{n_1}} & \vect{0}_{n-n_1\times{n-n_1}}
%\end{bmatrix}
\end{equation}
with $n_1\leq n$ and
%
%    \begin{equation} \label{eq:structural-eq0}
%      Z^*=\frac{1}{\sqrt{n(n-1)}} (s s^\top - I_n), \quad \text{for some } s\in\{-1,+1\}^n;
%    \end{equation}
%    and thus, there are only $2^{n-1}$ different equilibria (with a single positive eigenvalue);
  \begin{equation} \label{eq:structural-eq0}
      Z'=\frac{1}{\sqrt{n_1(n_1-1)}} (s s^\top - I_{n_1}),
    \end{equation}
    for some $s\in\{-1,+1\}^{n_1}$; and thus, for any fixed $n_1$, there
    are only $2^{n_1-1}$ different equilibria (with a single positive
    eigenvalue),
  \item\label{fact:equilibrium:3}$G(Z')$ satisfies structural balance, with
    the binary vector $s$ characterizing the distribution of the
    individuals in the single faction or in the two factions, and
  \item \label{unss} if $G(Z^*)$ is a connected graph, then $G(Z^*)$
    satisfies structural balance, $Z^*$ is a global minimizer to the
    optimization problem:
    \begin{equation*}
      \begin{aligned}
        & \underset{Z\in\R^{n\times{n}}}{\textup{minimize}}
        & & \D(Z) \\
        & \textup{subject to}
        & &Z\in\Qzus
      \end{aligned}
    \end{equation*}
    and satisfies $\D(Z^*)=-\frac{n-2}{\sqrt{n(n-1)}}$.
  \end{enumerate}
\end{corollary}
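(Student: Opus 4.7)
My plan is to deduce the three statements from the classification results Theorem~\ref{th_multi}, Theorem~\ref{cor.tech}, and Lemma~\ref{ch-gunit}, followed by a short direct argument for global optimality.

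For statement~\ref{fact:equilibrium:2}, I would apply Theorem~\ref{th_multi} to $Z^*$ and exploit the single-positive-eigenvalue hypothesis. Each nonzero block $Z_i^* = p_i V_iV_i^\top - q_i I_{n_i}$ contributes exactly $k_i$ positive eigenvalues (of value $p_i - q_i$), with $k_i \ge 1$ whenever the block is nonzero by the \nSt definition. Thus having a single positive eigenvalue forces exactly one nonzero block with $k_1 = 1$ (and all remaining blocks equal to zero), which after permutation gives the zero-padded form~\eqref{pre_otr}. Invoking Lemma~\ref{ch-gunit}\ref{st-2} with $k=1$ yields $V = s/\sqrt{n_1}$ for a sign vector $s \in \{-1,+1\}^{n_1}$. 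Substituting into $Z' = p_1 VV^\top - q_1 I_{n_1}$ with the values $p_1 = \sqrt{n_1/(n_1-1)}$ and $q_1 = 1/\sqrt{n_1(n_1-1)}$ given by Theorem~\ref{cor.tech}\ref{1a0} specialized to $k=1$ produces $Z' = \frac{1}{\sqrt{n_1(n_1-1)}}(ss^\top - I_{n_1})$. The counting of $2^{n_1-1}$ distinct equilibria comes from observing that $ss^\top$ is invariant under the involution $s \mapsto -s$, and that the $2^{n_1}$ sign vectors collapse to $2^{n_1-1}$ equivalence classes.

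For statement~\ref{fact:equilibrium:3}, I would simply compute the sign of a generic triad of $Z'$: since the off-diagonal entries satisfy $\sign(Z'_{ij}) = s_i s_j$, the product $\sign(Z'_{ij} Z'_{jk} Z'_{ki}) = s_i^2 s_j^2 s_k^2 = 1$, so all triads are positive. Completeness of $G(Z')$ is immediate since $s_i s_j \in \{\pm 1\}$, so Definition~\ref{def:sbal} applies; the two factions are indexed by the sign of $s_i$.

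For statement~\ref{unss}, connectedness of $G(Z^*)$ rules out the reducible decomposition of part~\ref{fact:equilibrium:2}, forcing $n_1 = n$ and giving the structurally balanced form $Z^* = \frac{1}{\sqrt{n(n-1)}}(ss^\top - I_n)$; the value $\D(Z^*) = -\frac{n-2}{\sqrt{n(n-1)}}$ then comes from Theorem~\ref{cor.tech}\ref{la2} at $k=1$. The main obstacle is establishing global minimality over the entire manifold $\Qzus$, not merely among critical points. I would bypass any case analysis over reducible/irreducible equilibria by arguing at the level of eigenvalues: any $Z \in \Qzus$ is symmetric with $\trace(Z) = 0$ and $\|Z\|_F = 1$, so its eigenvalues satisfy $\sum_i \lambda_i = 0$ and $\sum_i \lambda_i^2 = 1$, and $\D(Z) = -\sum_i \lambda_i^3$. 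I would then solve the auxiliary finite-dimensional problem of minimizing $-\sum_i \lambda_i^3$ subject to these two moment constraints by Lagrange multipliers; the stationarity condition $3\lambda_i^2 - 2\nu \lambda_i - \mu = 0$ shows $\lambda_i$ takes at most two distinct values, reducing the problem to a two-parameter calculation which yields the minimum $-\frac{|n-2k|}{\sqrt{kn(n-k)}}$ with $k \in \{1, \ldots, n-1\}$. An elementary monotonicity argument (analogous to Theorem~\ref{cor.tech}\ref{la2}) shows this is minimized at $k=1$, matching $\D(Z^*)$. Since this lower bound holds for all $Z \in \Qzus$ and is attained by $Z^*$, the global minimality follows.
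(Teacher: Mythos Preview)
Your arguments for statements~\ref{fact:equilibrium:2} and~\ref{fact:equilibrium:3} coincide with the paper's: both invoke Theorem~\ref{th_multi} to reduce to a single nonzero block with $k_1=1$, read off $V$ from Lemma~\ref{ch-gunit}\ref{st-2}, and then verify the triad sign directly.

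For statement~\ref{unss} you take a genuinely different route. The paper argues that the global minimizer of $\D$ over the compact manifold $\Qzus$ must be a critical point, hence an equilibrium, and then compares $\D$ only across equilibria: a Jensen-inequality argument on the function $\xi\mapsto\xi(1-\xi)/(1-2\xi)^2$ shows that any reducible equilibrium with $k$ positive eigenvalues has dissonance no smaller than the irreducible one with the same $k$, after which the monotonicity of Theorem~\ref{cor.tech}\ref{la2} pins down $k=1$. Your approach instead relaxes the problem to the eigenvalue simplex $\{\lambda\in\R^n:\sum_i\lambda_i=0,\ \sum_i\lambda_i^2=1\}$, solves it there by Lagrange multipliers (forcing at most two distinct eigenvalues), and observes that the resulting lower bound $-\tfrac{n-2}{\sqrt{n(n-1)}}$ is attained by $Z^*$. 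This is correct and arguably more elementary: it bypasses Theorem~\ref{th_multi} and the Jensen step entirely, and it yields global optimality over \emph{all} of $\Qzus$ in one stroke rather than via the detour through the equilibrium classification. The paper's route, on the other hand, extracts more structural information along the way (the explicit comparison~\eqref{l_ineq_aux} between reducible and irreducible equilibria), which may be of independent interest. One minor slip: your critical values should read $-\tfrac{n-2k}{\sqrt{kn(n-k)}}$ rather than $-\tfrac{|n-2k|}{\sqrt{kn(n-k)}}$; the sign matters, and the minimum over $k\in\until{n-1}$ is then at $k=1$ by the same monotonicity you cite.
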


\begin{proof}
Statement~\ref{fact:equilibrium:2} follows immediately from Theorem~\ref{cor.tech} and equation~\eqref{eq.k1-general}. Indeed, from Theorem~\ref{th_multi} we know that $Z'$
%$Z^*$
must be irreducible.
%\apmargin{There is a gap here. Theorem~\ref{th_multi} implies that the matrix cannot have several non-zero blocks. But there are pathological situations, where the matrix contains one non-zero block of the form~\eqref{eq:structural-eq0} and one zero block. Such a matrix still has a single positive eigenvalue, but also several zero eigenvalues.}
%\pcmargin{I just modified the corollary statement and the proof. There might be better ways to present these results; so please, feel free to modify what I wrote.}
%
Regarding statement~\ref{fact:equilibrium:3}, observe that for any different $i$, $j$ and $k$,
\begin{equation*}
z'_{ij}z'_{jk}z'_{ki}=\frac{(s_is_j)(s_js_k)(s_ks_i)}{(n_1(n_1-1))^{3/2}}=\frac{1}{(n_1(n_1-1))^{3/2}}>0.
\end{equation*}
This inequality implies $\sign(z'_{ij})=\sign(z'_{jk}z'_{ki})$ and thus we
know that $Z'$ satisfies structural balance. It is immediate to see that
any $i$ and $j$ such that $s_i=s_j$ correspond to the same faction in the
network $G(Z')$.  This completes the proof for~\ref{fact:equilibrium:3}.

Regarding statement~\ref{unss}, we notice that the smooth function
$\eta\mapsto-\frac{\eta-2}{\sqrt{\eta(\eta-1)}}$ has negative derivative
for $\eta>3/2$. Then, if an equilibrium point with a single positive
eigenvalue of the form~\eqref{pre_otr} is a candidate solution to the shown
optimization problem, then it must be the case that $n_1=n$, i.e., the
graph associated with such equilibrium point is complete. Now, let us focus
on the evaluation of $\D$ on the equilibria of the projected pure-influence
model. First, let us have $k_1+\dots+k_s=k$ and $n_1+\dots+n_s=n$ for any
$s\geq 2$, where $k_i$ and $n_i$ are positive integers, and assume that
$k<n/2$ and $k_i<n_i/2$ for any $i\in\until{s}$.  Note that the function
$f(\xi)=\xi(1-\xi)/(1-2\xi)^2$ is convex on $(0,1/2)$. Therefore, Jensen's
inequality implies
\begin{multline*}
\frac{1}{n}\sum_{i=1}^s\frac{k_in_i(n_i-k_i)}{(n_i-2k_i)^2}=\sum_{i=1}^s\frac{n_i}{n}f\Big(\frac{k_i}{n_i}\Big)\geq f\Big(\sum_{i=1}^s\frac{k_i}{n}\Big)=f\Big(\frac{k}{n}\Big)=\frac{k(n-k)}{(n-2k)^2},
\end{multline*}
and, in turn,
\begin{equation}
  \label{l_ineq_aux}
  -\Bigg(\sum_{i=1}^s\frac{k_in_i(n_i-k_i)}{(n_i-2k_i)^2}\Bigg)^{-1/2}\geq -\frac{n-2k}{\sqrt{kn(n-k)}}.
\end{equation}
Now, let $Z^*$ and $Z^{**}$ be two equilibria with $k$ positive eigenvalues
being irreducible (as in Theorem~\ref{cor.tech}) and reducible with $s$
blocks (as in Theorem~\ref{th_multi}) respectively.  We immediately see
that, under our previous assumptions, the left hand side
of~\eqref{l_ineq_aux} corresponds to $\D(Z^{**})<0$ and the right hand side
corresponds to $\D(Z^*)<0$, so that $\D(Z^{**})\geq\D(Z^*)$. Thus, we only
need to investigate the minimum value of $\D$ in the set of irreducible
equilibria with $k<n/2$ positive eigenvalues in order to solve the
optimization problem, but the solution is already known
by~Theorem~\ref{cor.tech}\ref{la2} to be when $k=1$. This finishes the
proof.
\end{proof}

\begin{remark}
\label{remark_n-1}
Consider an equilibrium point $Z^*$ with one positive eigenvalue. %, i.e., of the form~\eqref{eq:structural-eq0}.
Then, $-Z^*$ has one negative eigenvalue and $n-1$ positive eigenvalues, and does not correspond to structural balance. Note that all such $-Z^*$ correspond to critical points of $\D$ which are also isolated.
\end{remark}

\subsection{Examples of equilibria with two positive eigenvalues}
Let $Z^*$ be any equilibrium of the projected pure-influence model
parameterized by $\nSt(n,2)$ matrices, so that it has two positive
eigenvalues. Let us assume first that it is irreducible. Then, another
class of equilibria is found using the
parametrization~\eqref{eq.k2-general}. It can be easily shown that
\begin{equation*}
Z^*=\sqrt{\frac{2}{n(n-2)}}(\theta_{ij})_{i,j=1}^n,\quad \theta_{ij}=
\begin{cases}
0,\,i=j\\
\cos(\alpha_i-\alpha_j),i\ne j.
\end{cases}
\end{equation*}
Here the angles $\alpha_i$ should satisfy the relation~\eqref{eq.ngon}. Interestingly, many of such matrices do not correspond to structural balance. Consider, for example, the case where the unit vectors in~\eqref{eq.ngon} constitute a regular $n$-gon: $\alpha_i=\frac{\pi (i-1)}{n}$, $i=1,\ldots,n$. For any pair $i,j>i$ the entry $z_{ij}$ is negative if $(j-i)>n/2$, positive if $j-i<n/2$ and zero if $j-i=n/2$ (possible only for even $n$). If $n$ is odd, the graph is complete, otherwise, the pairs of nodes $(i,i+n/2)$ for $i=1,\ldots,n/2$ are not connected. For example, in the smallest dimension $n=3$, by setting $\alpha_1=0$, $\alpha_2=\pi/3$ and $2\pi/3$, we obtain the equilibrium
\begin{equation*}
Z^*=\frac{1}{\sqrt{6}}\begin{bmatrix}
0 & +1 & -1 \\
+1 & 0 & +1 \\
-1 & +1 & 0
\end{bmatrix}
\end{equation*}
which does not correspond to structural balance. Indeed, in the case where $n=3$ or $n\ge 5$,
the graph always contains imbalanced triads. For instance, for $n\ge 3$ being odd the nodes $i=1$, $j=(n-1)/2$ and $\ell=(n+3)/2$ always constitute such a triad: $z_{i\ell}<0$, whereas $z_{ij},z_{j\ell}\ge 0$. For an even number $n\ge 6$, one may take $i=1$, $j=n/2$, $\ell=n/2+2$.
In the case $n=4$, the equilibrium $Z^*$ corresponds to an incomplete cyclic graph such that $\D(Z^*)=0$:
\[
Z^*=\frac{1}{2\sqrt{2}}
\begin{bmatrix}
0 & \frac{1}{\sqrt{2}} & 0 & -\frac{1}{\sqrt{2}}\\
\frac{1}{\sqrt{2}} & 0 & \frac{1}{\sqrt{2}} & 0\\
0 & \frac{1}{\sqrt{2}} & 0 & \frac{1}{\sqrt{2}}\\
-\frac{1}{\sqrt{2}} & 0 & \frac{1}{\sqrt{2}} & 0
\end{bmatrix}.
\]

For the reducible matrix case, since $Z^*$ has two positive eigenvalues, $G(Z^*)$ contains two disconnected subgraphs that satisfy structural balance with possibly other isolated nodes.

\section{Convergence to balanced equilibria and stability analysis}
\label{sec:convergence-analysis}

We now provide convergence results for our models towards equilibria that
correspond to structural balance. We present a supporting lemma and then our main theorem.

\begin{lemma}\label{prop.aux}
  Assume that the solution of~\eqref{inf-dyn} satisfies
  $x_{i*}(t_0)=\vect{0}_{1\times{n}}$ at some $t_0\ge 0$, that is, in the
  graph $G(X(t_0))$ node $i$ does not communicate to any other node. Then,
  $x_{i*}(t)\equiv\vect{0}_{1\times{n}}$ for any $t\ge 0$.  The same holds
  for the solutions of~\eqref{def:projected-pure-influence}.
\end{lemma}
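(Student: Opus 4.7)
The plan is to exhibit the linear subspace
\[
S_i := \bigl\{X \in \Qz : x_{i*} = \vect{0}_{1\times n}\bigr\}
\]
as a forward- and backward-invariant set under the flow of each model, and then invoke uniqueness of solutions: since $X(t_0) \in S_i$, and the ODE has an analytic right-hand side (so the solution through $X(t_0)$ is unique on its interval of existence), any candidate trajectory which remains in $S_i$ must coincide with $X(t)$.

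First I would handle the pure-influence model~\eqref{inf-dyn-m}. The key observation is that the vector field $F(X) = X^2 - \diag(X^2)$ is \emph{tangent} to $S_i$: for $X \in S_i$ and any $j \neq i$,
\[
F(X)_{ij} = \sum_{\substack{k=1\\ k\neq i,j}}^n x_{ik} x_{kj} = 0,
\]
because every $x_{ik}$ vanishes by hypothesis; and $F(X)_{ii}=0$ automatically since $F$ preserves $\Qz$. Hence $F(X)_{i*} = \vect{0}_{1\times n}$ at every point of $S_i$, which says exactly that $F$ is tangent to the linear subspace $S_i$. Standard ODE theory then gives a (unique, maximal) trajectory entirely contained in $S_i$ passing through $X(t_0)$, and by uniqueness of solutions to~\eqref{inf-dyn-m} this trajectory must agree with $X(t)$ wherever the latter is defined.

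For the projected model~\eqref{def:projected-pure-influence}, I would repeat the argument with the vector field $\tilde F(Z) = Z^2 - \diag(Z^2) + \D(Z)Z$. The extra term $\D(Z)Z$ has $i$-th row equal to $\D(Z)\, z_{i*}$, which vanishes whenever $z_{i*} = \vect{0}_{1\times n}$; hence $\tilde F$ is tangent to $S_i$ (and the flow already preserves $\Qzu$ by Theorem~\ref{eg-trans}), so the same uniqueness argument yields $z_{i*}(t) \equiv \vect{0}_{1\times n}$.

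There is no real obstacle here: the content of the lemma is the simple algebraic fact that $F(X)_{i*}$ and $\tilde F(Z)_{i*}$ are \emph{linear} in the $i$-th row of the state, so zeroing out that row zeroes out the corresponding component of the vector field. The only point that deserves a sentence of care is that invariance holds for both $t \ge t_0$ and $t \le t_0$ within the interval of existence, which follows because uniqueness of analytic ODEs is two-sided; combined with $t_0 \ge 0$, this delivers the stated conclusion for all $t \ge 0$.
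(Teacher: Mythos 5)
Your proof is correct, but it takes a different route from the paper's. The paper exploits the fact that solutions of an analytic ODE are real-analytic in $t$: assuming $x_{ij}(t_0)=0$ for all $j$, it checks that $\dot x_{ij}(t_0)=0$, then by repeated differentiation that $x_{ij}^{(m)}(t_0)=0$ for every $m\ge 1$, and concludes $x_{ij}(t)\equiv 0$ on the whole interval of existence by the identity theorem for analytic functions. You instead observe that the linear subspace $S_i=\{X\in\Qz : x_{i*}=\vect{0}_{1\times n}\}$ is invariant because the $i$-th row of the vector field vanishes identically on $S_i$ (every monomial in $F(X)_{ij}$, and likewise in $\tilde F(Z)_{ij}$ including the $\D(Z)z_{ij}$ term, carries a factor from row $i$), and then invoke two-sided uniqueness of solutions to identify the trajectory with the one confined to $S_i$. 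Both arguments are sound and both correctly give the conclusion on all of $[0,t_0]$ as well as forward from $t_0$. Your version is the more elementary and more robust one: it needs only local Lipschitz continuity of the right-hand side rather than analyticity, and it is the standard invariant-manifold argument. The paper's version has the minor advantage of reusing analyticity, which it needs anyway elsewhere (e.g., for the \L{}ojasiewicz convergence argument in Theorem~\ref{th-mini}). One small imprecision in your write-up: the $i$-th row of $\tilde F(Z)$ is not literally \emph{linear} in $z_{i*}$ (the factor $\D(Z)$ itself depends on row $i$); what matters, and what you in fact use, is only that each term vanishes when $z_{i*}=\vect{0}_{1\times n}$.
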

\begin{proof}
Since the right-hand sides of~\eqref{inf-dyn} and~\eqref{def:projected-pure-influence} are analytic, any solution is a real-analytic function of time. Assuming that $x_{ij}(t_0)=0$ for all $j$, one finds that $\dot x_{ij}(t_0)=0$. Differentiating~\eqref{inf-dyn}, it is easy to show that $\ddot x_{ij}(t_0)=0$, and so on, $x_{ij}^{(m)}(t_0)=0$ for any $m\ge 1$. In view of analyticity, one has $x_{ij}(t)\equiv 0$ for any $t$. Similarly, $z_{ij}(t_0)=0\,\forall j$ entails that $z_{ij}(t)\equiv 0$ for any solution of~\eqref{def:projected-pure-influence}.
\end{proof}

\begin{theorem}[Convergence results and dynamical properties]\label{th-mini}
  Consider the pure-influence model~\eqref{inf-dyn} with an initial
  condition $X(0)\in\Qzs$ and the projected pure-influence
  model~\eqref{def:projected-pure-influence} with initial condition
  $Z(0)=\frac{X(0)}{\Fnorm{X(0)}}$. Then,
  \begin{enumerate}
    \item the solution $Z(t)$ converges to a single critical point of the dissonance function $\D$;\label{lasym2}
    \item \label{lem-nondec}
    the number of negative eigenvalues of $Z(t)$ is non-decreasing.
    \setcounter{saveenum}{\value{enumi}}
  \end{enumerate}
  Moreover, if $X(0)$ has one positive eigenvalue, then
  \begin{enumerate}\setcounter{enumi}{\value{saveenum}}
  \item\label{one-thmm-mini} $\lim_{t\to+\infty}Z(t)=Z^*$, where $Z^*$ is
    as in~\eqref{eq:structural-eq0}, so that $G(Z(t))$ or one of its
    connected components (while the rest of nodes are isolated) reaches
    structural balance in finite time;
  \item\label{two-thmm-mini} $X(t)$ achieves the same sign structure as
    $Z^*$ in finite time;
  \item\label{three-thmm-mini2} nonzero entries of $X(t)$ diverge to
    infinity in finite time.
  \end{enumerate}
\end{theorem}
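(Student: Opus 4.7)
The proof naturally splits across the five statements, leveraging the gradient-flow structure of Theorem~\ref{th-mini-gr}, the transcription of Theorem~\ref{eg-trans}, and the classification of equilibria in Corollary~\ref{thm:bal-eq}. For statement~\ref{lasym2}, I would invoke the \L{}ojasiewicz--Simon convergence theorem for gradient flows of real-analytic functions on compact analytic manifolds: $\D$ is a cubic polynomial, $\Qzus$ is a compact real-analytic submanifold of $\R^{n\times n}$, and~\eqref{def:projected-pure-influence} is (up to a positive constant) the Riemannian gradient flow of $\D$ on $\Qzus$. The standard argument then yields finite trajectory length and convergence to a single critical point of $\D$, which is automatically an equilibrium.

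For statement~\ref{lem-nondec}, let $I=\{i: z_{i*}(0)=\vect{0}\}$; by Lemma~\ref{prop.aux} this is the fixed set of always-isolated rows, so $Z(t)$ decomposes into a trivial $|I|$-dimensional zero block and a non-degenerate block $\tilde Z(t)$ supported on $I^{c}$. By Rellich's theorem for real-analytic curves of symmetric matrices, the eigenvalues of $\tilde Z(t)$ admit real-analytic labellings $\lambda_i(t)$ with orthonormal analytic eigenvectors $v_i(t)$ supported on $I^{c}$. Differentiating $Zv_i=\lambda_i v_i$ and substituting~\eqref{def:projected-pure-influence} gives
\begin{equation*}
\dot\lambda_i = v_i^{\top}\dot Z v_i = \lambda_i^{2}+\D(Z)\lambda_i - v_i^{\top}\diag(Z^2)v_i,
\end{equation*}
so at any $t_0$ with $\lambda_i(t_0)=0$ we obtain $\dot\lambda_i(t_0)=-\sum_{j\in I^{c}}\|z_{j*}(t_0)\|^{2}v_{i,j}(t_0)^{2}<0$, because $\|z_{j*}\|>0$ for every $j\in I^{c}$ and the unit vector $v_i$ has some nonzero entry in $I^{c}$. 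Thus every nontrivial eigenvalue descends strictly through zero, never ascends, while the $|I|$ trivial zero eigenvalues contribute nothing. Hence the number of negative eigenvalues is non-decreasing and the number of positive eigenvalues is non-increasing.

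For statement~\ref{one-thmm-mini}: since $X(0)$ has exactly one positive eigenvalue, so does $Z(0)$. By the previous paragraph, $Z(t)$ has at most one positive eigenvalue; since $\trace Z=0$ and $\Fnorm{Z}=1$ preclude $Z$ from being negative semi-definite, there is exactly one for all $t$. The limit $Z^{*}$ from~\ref{lasym2} is therefore an equilibrium with exactly one positive eigenvalue, so Corollary~\ref{thm:bal-eq}\ref{fact:equilibrium:2}--\ref{fact:equilibrium:3} pins it down as the block form~\eqref{pre_otr} with balanced component~\eqref{eq:structural-eq0}. On the balanced block $z_{ij}^{*}=s_is_j/\sqrt{n_1(n_1-1)}\ne 0$, so continuity forces $\sign z_{ij}(t)=\sign z_{ij}^{*}$ from some finite time onwards; nodes corresponding to the zero block of $Z^{*}$ must, again by Lemma~\ref{prop.aux}, already be isolated at $t=0$, giving the isolated-nodes alternative whenever $n_1<n$.

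Finally, statements~\ref{two-thmm-mini} and~\ref{three-thmm-mini2} follow from writing $X(t)=\eta(t)Z(t)$ with $\dot\eta=-\D(Z)\eta^{2}$ (Theorem~\ref{eg-trans}). Since $Z(t)\to Z^{*}$ and $\D(Z^{*})=-(n_1-2)/\sqrt{n_1(n_1-1)}<0$ for $n_1\ge 3$ by Corollary~\ref{thm:bal-eq}\ref{unss}, there exist $c,T_0>0$ with $-\D(Z(t))\ge c$ for $t\ge T_0$; the scalar Riccati inequality $\dot\eta\ge c\eta^{2}$ then forces $\eta$ to blow up at a finite time $T^{*}$, which establishes~\ref{three-thmm-mini2} since each entry with $z_{ij}^{*}\ne 0$ satisfies $|x_{ij}(t)|=\eta(t)|z_{ij}(t)|\to\infty$. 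Statement~\ref{two-thmm-mini} follows because $\eta(t)>0$ makes $\sign x_{ij}(t)=\sign z_{ij}(t)$, transferring the finite-time sign stabilization from~\ref{one-thmm-mini} to $X(t)$. The main obstacle is~\ref{lem-nondec}: choosing the correct analytic labelling, namely restricting via Lemma~\ref{prop.aux} to the non-isolated subsystem, is what makes the derivative at a zero crossing \emph{strictly} negative and avoids any higher-order tangency analysis of the analytic eigenvalue curves.
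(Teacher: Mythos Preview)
Your proof follows the paper's route step for step: \L{}ojasiewicz for~\ref{lasym2}, the eigenvalue-derivative identity $\dot\lambda_i=\lambda_i^{2}+\D(Z)\lambda_i-v_i^{\top}\diag(Z^2)v_i$ for~\ref{lem-nondec}, forward invariance of the one-positive-eigenvalue set together with Corollary~\ref{thm:bal-eq} for~\ref{one-thmm-mini}, and the Riccati blow-up of $\eta$ for~\ref{two-thmm-mini}--\ref{three-thmm-mini2}. Your explicit appeal to Rellich and your clean separation of the always-isolated index set $I$ are, if anything, tidier than the paper's handling of the same points.

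One step, however, does not follow as written: the claim that ``nodes corresponding to the zero block of $Z^*$ must, again by Lemma~\ref{prop.aux}, already be isolated at $t=0$.'' Lemma~\ref{prop.aux} says that a row vanishing at some \emph{finite} time vanishes identically; it does not rule out a row $z_{j*}(t)$ with $j\in I^{c}$ remaining nonzero for all finite $t$ yet tending to zero as $t\to\infty$. The paper avoids this over-claim: it only asserts, via continuity, that the \emph{nonzero} entries of $Z^*$ are matched in sign by $Z(t)$ from some finite time on, and that already yields structural balance of the corresponding block of $G(Z(t))$. Drop the stronger assertion and your argument for~\ref{one-thmm-mini} goes through unchanged. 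A minor related point: in~\ref{two-thmm-mini}--\ref{three-thmm-mini2} you tacitly identify the $Z$ of the transcription~\eqref{eq1o} (coupled to $\eta$) with the $Z$ of the projected model~\eqref{def:projected-pure-influence}; the paper links them through the time-change Lemma~\ref{thaux2}, and citing it would make your Riccati step airtight.
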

\begin{proof}
For convenience, throughout this proof, let us denote $W(t)=\frac{X(t)}{\Fnorm{X(t)}}$, i.e., $X(t)=\eta(t)W(t)$ with $\eta(t)$ evolving according to~\eqref{eq1o-1} and $W(t)$ evolving according to~\eqref{eq1o-2}. From the construction of the transcription of the pure-influence model in Theorem~\ref{eg-trans}, we have that $\eta(t)=\Fnorm{X(t)}$ and so $\eta(t)>0$ for all well-defined $t\geq 0$. Moreover, Lemma~\ref{thaux2} let us conclude that $W(t)=Z(\int_{0}^t\eta(s)ds)$ for all $t\geq 0$, and thus the solution $X(t)$ is well defined.

To prove~\ref{lasym2}, recall that~\eqref{def:projected-pure-influence} is a gradient flow dynamics of the analytic function $\D$, and the trajectory $Z(t)$ stays on a compact manifold and, in particular, is bounded. The classical result of \L ojasiewicz~\cite{PAA-RM-BA:05} implies convergence of the trajectory to a single fixed point.

To prove~\ref{lem-nondec}, we enumerate the eigenvalues of $Z(t)$ in the descending order $\la_1(t)\ge\la_2(t)\ldots\ge\la_n(t)$ and consider the corresponding orthonormal bases of eigenvectors $v_i(t)$. Since $Z_i(t)v_i(t)=\la_i(t)v_i(t)$ and $v_i(t)^\top v_i(t)=1$, we obtain $\dot{Z}v_i+Z\dot{v}_i=\dot{\lambda}_iv_i+\lambda_i\dot{v}_i$ and $\dot v_i(t)^\top v_i(t)=0$. Therefore,
$$
\dot{\lambda}_i=v_i^\top\dot{Z}v_i+v_i^{\top}Z\dot{v}_i=v_i^\top\dot{Z}v_i+\la_iv_i^{\top}\dot{v}_i=v_i^\top\dot{Z}v_i,
$$
entailing the following differential equation
\begin{align}
\label{eig-dif-o}
\dot{\lambda}_i=\lambda_i^2+\D(Z)\lambda_i-v_i^\top\diag(Z^2)v_i.
\end{align}
Notice that all diagonal entries of $\diag(Z^2)$ are nonnegative. Now, due
to Lemma~\ref{prop.aux}, if the $i$th row of $X$ was initially the zero
vector, then it will continue being the same for all times and also for
$Z$; and, moreover, $\diag(Z^2)_{ii}=0$ and there exists a zero eigenvalue
with its associated eigenvector having zero entries in all the positions of
the entries where $\diag(Z^2)$ are positive.  Then, it immediately follows
from~\eqref{eig-dif-o} that if $\lambda_{i}(0)=0$ due to $Z(0)$ having a
row being the zero vector $\vect{0}_{1\times n}$, then $\dot{\lambda_i}=0$.

Now, let $\mathcal{N}$ be the set of indices $i$ such that
$\diag(Z^2)_{ii}>0$.  Thus, for any $i\in\mathcal{N}$, if $\lambda_i$
crosses the real axis at time $t^*$, i.e., $\lambda(t^*)=0$, then
\begin{equation}
\label{equin}
\dot{\lambda}_i(t^*)=-(v_i(t^*))^\top\diag{(Z^2(t^*))}v_i(t^*)<0.
\end{equation}
Therefore, if $\la_i(t_0)\le 0$ for some $t_0\ge 0$, then $\la_i(t)\le 0$
for all $t\ge t_0$. This finishes the proof for~\ref{lem-nondec}.

%
%Now, the newly introduced condition of $X(0)$ being irreducible avoids the following case: matrices that after some %permutation take the form
%\begin{equation}
%X(0)=
%\begin{bmatrix}
%X'(0) & \vect{0}_{n_1\times{n-n_1}}\\
%\vect{0}_{n-n_1\times{n_1}} & \vect{0}_{n-n_1\times{n-n_1}}
%\end{bmatrix}
%\left[
%\begin{array}{c|c}
%X'(0) & \vect{0}_{n_1\times{n-n_1}}\\
%\hline
%\vect{0}_{n-n_1\times{n_1}} & \vect{0}_{n-n_1\times{n-n_1}}
%\end{array}
%\right]
%\end{equation}
%with $X'(0)$ being a $n_1\times{n_1}$ irreducible matrix, and for which we see that the entries $x_{ij}(0)$ with %$i,j\in\{n_1+1,\dots,n\}$ do not change their appraisals at all.

Notice that since $\trace(Z(t))=0$ and $Z(t)=Z(t)^{\top}\ne \vect{0}_{n\times{n}}$, then $Z(t)$ has at least one positive eigenvalue. Then, equation~\eqref{equin} implies that%and $\trace(Z(t))=0$ imply that the set
\begin{equation*}%\label{ghh}
\Lambda:=\setdef{Z\in\Qzus}{Z\text{ has only one positive eigenvalue}}
\end{equation*}
is forward invariant and, in particular, the limit $Z^*=\lim_{t\to\infty} Z(t)$ (existing in view of statement~\ref{lasym2}) belongs to $\Lambda$. Since $Z^*$ is a critical point of $\D$ (or, in view of Theorem~\ref{th-mini-gr}, the equilibrium of~\eqref{def:projected-pure-influence}), it has the structure described by
Corollary~\ref{thm:bal-eq}.

By continuity of the flow $Z(t)$, there is a finite time $\tau$ such that $G(Z(t))$ has the same sign structure as $G(Z^*)$ %is balanced
for all $t\geq\tau$.
%{\color{blue} Now, it is known that a sufficient and necessary condition for an eigenvector to have only nonzero entries is that any selection of $n-1$ columns of the matrix must have rank $n-1$. This condition is immediately satisfied for $Z(t)$ from the invariance of the set $\Lambda$ in~\eqref{ghh}, since this implies that $Z(t)-\lambda_1(t)$ has rank $n-1$. Thus, all entries of the eigenvector associated to $\lambda_1$ do not change their signs for all $t\geq 0$, and this invariant property also holds in its limit too.}
This finishes the proof for~\ref{one-thmm-mini}.
%\textcolor{red}[{This part of the proof I also don't fully understand. It should be expanded.}

Now we prove the last two statements of the theorem. Knowing the convergence result from~\ref{one-thmm-mini}, Lemma~\ref{thaux2} tells us that introducing the term $\eta$ as in the transcribed system~\eqref{eq1o-1} to the projected pure-influence model has the simple effect of altering the convergence rate properties for $Z(t)$. Therefore, there always exist a finite time $\tau^*\geq 0$ such that, for any $t\geq\tau^*$, $W(t)$ satisfies the sign properties of statement~\ref{one-thmm-mini} regarding structural balance. Moreover, the fact that $X(t)=\eta(t)W(t)$ and $\eta(t)\geq 0$ by construction, immediately implies~\ref{two-thmm-mini}. %
Now, let $g(t):=-\D(W(t))$, and notice that $g(t)$ is a strictly positive continuous function for all (well-defined) $t\geq\tau^*$. Now, from equation~\eqref{eq1o-2}, we have the system $\dot{\eta}(t)=g(t)\eta^2(t)$, with solution $\eta(t)=\frac{\eta(\tau)}{1-\eta(\tau)\int_\tau^tg(s)ds}$ for $t\geq\tau$. Then, since $\int_\tau^tg(s)ds$ is a monotonic strictly increasing function on $t\geq\tau$, we have that $\eta(t)\to+\infty$ as $t\to t^*$, where $t^*>\tau^*$ is some finite time such that $\int_\tau^{t^*}g(s)ds=\frac{1}{\eta(\tau)}$ (note that $t^*>\tau^*$ holds from the relationship $W(t)=Z(\int_0^{t}\eta(s)ds)$). Then, we conclude that the solution $\eta(t)$ and the entries of $X(t)$ diverge in some finite time $t^*$,  %, so that the nonzero entries of $X(t)$ diverges at the same finite time,
which proves~\ref{three-thmm-mini2}.
% Moreover, the fact that $X(t)=\eta(t)W(t)$ and that $W(t)$ converges to structural balance for some finite time before $t^*$, immediately implies~\ref{two-thmm-mini}.
\end{proof}

\begin{corollary}
  Consider the same conditions as in Theorem~\ref{th-mini}, i.e., the
  projected pure-influence model with initial condition $Z(0)\in\Qzus$
  having one positive eigenvalue. If
  $\D(Z(0))<-\frac{n-3}{\sqrt{(n-1)(n-2)}}$, then $G(Z(t))$ eventually
  reaches structural balance.
\end{corollary}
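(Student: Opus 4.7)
The plan is to combine the monotonicity of $\D$ along the projected pure-influence flow (Theorem~\ref{th-mini-gr}) with the classification of limit points in Theorem~\ref{th-mini}\ref{one-thmm-mini}. Since $Z(0)$ has a single positive eigenvalue, that theorem yields $Z(t)\to Z^{*}$, where $Z^{*}$ has the block structure~\eqref{pre_otr} with a non-trivial $n_1\times n_1$ diagonal block $Z'$ of the form~\eqref{eq:structural-eq0}, for some $n_1\in\{2,\dots,n\}$; the case $n_1=1$ is ruled out because it would give $Z^{*}=\vect{0}_{n\times n}$, contradicting $\Fnorm{Z^{*}}=1$. Moreover, the same theorem guarantees that $G(Z(t))$ acquires the same sign structure as $G(Z^{*})$ in finite time, so it suffices to show $n_1=n$.

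To pin down $n_1$, I would evaluate $\D$ at $Z^{*}$ and compare it with $\D(Z(0))$. Because the rows and columns of $Z^{*}$ outside the $n_1\times n_1$ block vanish, $\trace({Z^{*}}^{3})=\trace({Z'}^{3})$, and the same calculation behind Corollary~\ref{thm:bal-eq}\ref{unss}, applied to the unit-norm balanced matrix $Z'$ on $n_1$ nodes, gives $\D(Z^{*})=-\frac{n_1-2}{\sqrt{n_1(n_1-1)}}$. Meanwhile, by Theorem~\ref{th-mini-gr} the flow is, up to a positive constant, the gradient flow of $\D\big|_{\Qzus}$, so $\D(Z(t))$ is non-increasing and in particular
\begin{equation*}
-\frac{n_1-2}{\sqrt{n_1(n_1-1)}}=\D(Z^{*})\le\D(Z(0))<-\frac{n-3}{\sqrt{(n-1)(n-2)}}.
\end{equation*}

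To conclude, I would invoke the strict monotonicity of $f(\eta)=-\frac{\eta-2}{\sqrt{\eta(\eta-1)}}$ on $\{2,3,\dots,n\}$, already established in the proof of Corollary~\ref{thm:bal-eq}\ref{unss}: for any $n_1\le n-1$ one has $f(n_1)\ge f(n-1)=-\frac{n-3}{\sqrt{(n-1)(n-2)}}$, contradicting the chain of inequalities above. Hence $n_1=n$, so $G(Z^{*})$ is complete and structurally balanced, and by Theorem~\ref{th-mini}\ref{one-thmm-mini} the same sign pattern is attained by $G(Z(t))$ in finite time. I do not anticipate a genuine obstacle here, since the heavy machinery is already packaged in the cited results; the only mildly delicate point is recognizing that the threshold $-\frac{n-3}{\sqrt{(n-1)(n-2)}}$ is precisely $f(n-1)$, so that every incomplete balanced limit (the only alternative to full structural balance) is excluded by the \emph{strict} inequality in the hypothesis on $\D(Z(0))$.
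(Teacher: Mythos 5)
Your proof is correct, and it is evidently the argument the authors had in mind: the paper states this corollary without proof, immediately after Theorem~\ref{th-mini}, and the threshold $-\frac{n-3}{\sqrt{(n-1)(n-2)}}$ is exactly $f(n-1)$ for the function $f(\eta)=-\frac{\eta-2}{\sqrt{\eta(\eta-1)}}$ whose strict monotonicity is already established in the proof of Corollary~\ref{thm:bal-eq}. Your three ingredients --- the limit $Z^*$ is a single-positive-eigenvalue equilibrium of the form~\eqref{pre_otr}--\eqref{eq:structural-eq0} with $\D(Z^*)=f(n_1)$, the dissonance is non-increasing along the gradient flow so $\D(Z^*)\le\D(Z(0))$, and $f(n_1)\ge f(n-1)$ for every $n_1\le n-1$ --- combine exactly as you say to force $n_1=n$, and the finite-time sign-pattern claim is then inherited from Theorem~\ref{th-mini}\ref{one-thmm-mini}. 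The only point worth making explicit is the one you already flag, that the strictness of the hypothesis is what kills the borderline case $n_1=n-1$.
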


%As noted in the proof of Theorem~\ref{th-mini}, we remark that having the
%initial condition being irreducible in the context of this theorem is
%solely to avoid trivial situations where some appraisals never get
%updated.  {\color{blue} **\textbf{Note}: it is hard to characterize a set
%of possible fixed points or initial conditions that converge to something
%that does not satisfy structural balance. Obviously, our conjecture is
%that the basin of attraction of such bad behaved equilibria is not dense
%in the space of real symmetric matrices.**} {\color{blue} **Please, see
%Lemma~\ref{lemma-symm} from Appendix~\ref{Ongoing} for an attempt of
%proving convergence for the set of symmetric initial matrices under
%generic initial conditions**.}

The previous theorem immediately implies that the set of irreducible
equilibria with a single positive eigenvalue is (locally) asymptotically
stable. We present further results on the stability of equilibria.

%The next lemma states that any other equilibria, which may not correspond to structural balance, are unstable.
%We now present some additional results about the stability of the equilibria.

\begin{lemma}[Further results on stability of the equilibria]\label{l2}
  Consider a symmetric equilibrium point $Z^*$ for the projected
  pure-influence model~\eqref{def:projected-pure-influence}. Without loss
  of generality, assume that $Z^*$ has no row equal to the zero
  vector\footnote{If $Z^*$ had a row equal to the zero vector, then, in the
    lemma statement, we would replace $n$ by $n_1<n$, where $n_1$ is the
    number of rows of $Z^*$ that are not equal to the zero vector.}.
%%
%Let $1<k<n$ be the number of positive eigenvalues of $Z^*$ (so that there
%are $n-k$ negative eigenvalues). %Then, $Z^*$ is unstable. In particular,
%if $1<k\leq n/2$ then $Z^*$ is a saddle point, and if $n/2\leq k$ then
%$Z^*$ does not correspond to structural balance.  If $n/2\leq k$ then
%$Z^*$ is an unstable equilibrium point and does not correspond to
%structural balance.  If $Z^*$ has $n/2$ or more positive eigenvalues, then
%$Z^*$ is an unstable equilibrium point and does not correspond to
%structural balance.
%%
  If $\D(Z^*)\geq 0$, then $Z^*$ is an unstable equilibrium point and does
  not correspond to structural balance.
\end{lemma}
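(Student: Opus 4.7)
The plan is to prove the two assertions independently.

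\emph{Not structural balance.} I would first argue by contradiction. Under the no-zero-row hypothesis and Definition~\ref{def:sbal}, a structurally balanced $G(Z^*)$ is complete on $n\ge 3$ nodes with sign pattern $\sign(Z^*_{ij})=s_is_j$ for some $s\in\{-1,+1\}^n$. Each triad weight would then be $Z^*_{ij}Z^*_{jk}Z^*_{ki}=|Z^*_{ij}|\,|Z^*_{jk}|\,|Z^*_{ki}|>0$, so $\D(Z^*)=-6\sum_{\{i,j,k\}}Z^*_{ij}Z^*_{jk}Z^*_{ki}<0$, contradicting $\D(Z^*)\ge 0$.

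\emph{Instability.} Theorem~\ref{th-mini-gr} identifies~\eqref{def:projected-pure-influence} with the gradient flow of $\D|_{\Qzus}$, so $\D$ is non-increasing along trajectories. I would therefore reduce the claim to showing that $Z^*$ is not a local minimum of $\D|_{\Qzus}$: once this is established, every neighborhood $\mathcal U$ of $Z^*$ contains some $Z_0$ with $\D(Z_0)<\D(Z^*)$; the trajectory from $Z_0$ preserves $\D(Z(t))\le\D(Z_0)<\D(Z^*)$, and by continuity of $\D$ there is a smaller neighborhood $\mathcal V\subset\mathcal U$ on which $\D>\D(Z_0)$, forcing $Z(t)$ to remain outside $\mathcal V$ and thus violating Lyapunov stability.

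To exhibit a second-order descent direction I would work with the arc $Z(t)=\cos t\cdot Z^*+\sin t\cdot W$, which lies in $\Qzus$ whenever $W=W^\top\in\Qzs$, $\Fnorm{W}=1$, and $\Fprod{Z^*}{W}=0$. A Taylor expansion of $\D(Z(t))=-\trace(Z(t)^3)$, combined with the first-order criticality $\trace((Z^*)^2W)=0$ (which follows from~\eqref{eq.equil} since $\trace(Z^*W)=0$ and $\diag W=0$), gives
\[
\D(Z(t))=\D(Z^*)-t^2\Bigl(3\trace(Z^*W^2)+\tfrac{3}{2}\D(Z^*)\Bigr)+O(t^3),
\]
so it suffices to find an admissible $W$ with $\trace(Z^*W^2)>-\D(Z^*)/2$, a threshold that is non-positive in our setting. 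Since $\D(Z^*)=-6\sum_{\{i,j,k\}}Z^*_{ij}Z^*_{jk}Z^*_{ki}\ge 0$, at least one triad $\{i,j,k\}$ has non-positive weight, and generically strictly negative; in that case I would take the symmetric matrix $W$ whose only nonzero entries are $W_{ij}=W_{ji}=Z^*_{ik}$ and $W_{ik}=W_{ki}=-Z^*_{ij}$, suitably normalized. A direct computation verifies $\diag W=0$, $\Fprod{Z^*}{W}=0$, and $\trace(Z^*W^2)=-Z^*_{ij}Z^*_{ik}Z^*_{jk}/\bigl[(Z^*_{ij})^2+(Z^*_{ik})^2\bigr]>0$, making the second-order coefficient strictly negative.

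The hard part will be the degenerate subcase in which every triad of $Z^*$ has weight zero, which forces $\D(Z^*)=0$ and makes the triad-based $W$ vanish. By Theorem~\ref{th_multi} the nonzero blocks of such $Z^*$ must satisfy $n_i=2k_i$ with $\alpha=0$, confining $Z^*$ to a highly structured family. In this regime I would build $W$ instead from eigenvectors of $Z^*$ lying in its positive eigenspace (which is nontrivial since $\trace Z^*=0$ and $Z^*\neq\vect{0}_{n\times n}$), then project onto the tangent space of $\Qzus$ to preserve the zero-diagonal and orthogonality constraints; this projection-corrected construction is where the delicate estimates will concentrate.
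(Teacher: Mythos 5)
Your route is genuinely different from the paper's: the paper simply computes the trace of the Jacobian of the right-hand side of~\eqref{def:projected-pure-influence} at $Z^*$, obtains $\trace(\jac{f}(Z^*))=(n^2-n+3)\D(Z^*)$, and concludes instability from the existence of a positive eigenvalue (the non-balance claim is handled exactly as you do, via the sign of the triads). Your second-variation computation along the geodesic $\cos t\,Z^*+\sin t\,W$ is correct as far as it goes: the first-order term vanishes by~\eqref{eq.equil}, the coefficient $3\trace(Z^*W^2)+\tfrac32\D(Z^*)$ is right, and for a strictly negative triad your explicit $W$ does give $\trace(Z^*W^2)>0$. However, there are two genuine gaps.

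First, the reduction ``$Z^*$ is not a local minimum of $\D|_{\Qzus}$ $\Rightarrow$ $Z^*$ is unstable'' is not established by your continuity argument. You show that the trajectory from $Z_0$ never \emph{enters} the set $\mathcal V=\{\D>\D(Z_0)\}$, but $\mathcal V$ shrinks as $Z_0\to Z^*$, and a trajectory that starts outside $\mathcal V$ and simply hovers at distance comparable to $\Fnorm{Z_0-Z^*}$ forever is perfectly consistent with Lyapunov stability; what you have ruled out is attractivity, not stability. For gradient flows of merely smooth potentials the implication is in fact false (critical points with strictly smaller values can accumulate at a stable non-minimum). It does hold for analytic potentials, but only via the \L{}ojasiewicz inequality: one must argue that all critical points in a sufficiently small neighborhood of $Z^*$ share the value $\D(Z^*)$, so that the single limit point of the trajectory (which has value $\le\D(Z_0)<\D(Z^*)$) lies outside a \emph{fixed} neighborhood. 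This is a real theorem (Absil--Kurdyka), not a two-line continuity remark, and it must be supplied or cited. Second, the degenerate subcase $\D(Z^*)=0$ with every triad of weight zero is explicitly left unfinished: there your triad-based $W$ vanishes, the descent threshold becomes the strict inequality $\trace(Z^*W^2)>0$, and your proposed eigenvector construction is only a sketch whose tangent-space projection is exactly where the difficulty sits. This case is nonempty --- the $n=4$ cyclic equilibrium exhibited in the paper has $\D(Z^*)=0$ and contains no triads at all --- so the proof is incomplete without it. Given these two holes, you may find it simpler to adopt the paper's trace-of-Jacobian argument, which disposes of the case $\D(Z^*)>0$ in one line (though note that it, too, requires an extra word when $\D(Z^*)=0$, since a traceless Jacobian need not have a positive eigenvalue).
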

\begin{proof}
Write the analytic projected influence
system~\eqref{def:projected-pure-influence} as
$\dot{Z}=f(Z):=Z^2-\diag(Z^2)+\D(Z)Z$, thereby defining
$\map{f}{\R^{n\times{n}}}{\R^{n\times{n}}}$, and compute
%\begin{align*}
%  \frac{\partial f_{ij}(Z)}{\partial z_{ij}} &= {\D(Z)+\frac{\partial\D(Z)}{\partial z_{ij}}z_{ij},}\\
%  \frac{\partial f_{ij}(Z)}{\partial z_{ik}} &= z_{kj}{+\frac{\partial\D(Z)}{\partial z_{ik}}}z_{ij},\text{ for }k\neq j,\\
%  \frac{\partial f_{ij}(Z)}{\partial z_{kj}} &= z_{ik}{+\frac{\partial\D(Z)}{\partial z_{kj}}}z_{ij},\text{ for }k\neq i,\\
%  \frac{\partial f_{ij}(Z)}{\partial z_{k\ell}} &= {\frac{\partial\D(Z)}{\partial z_{k\ell}}}z_{ij},\text{ for }k\neq i,\ell\neq j \text{ and }k\neq\ell.
%\end{align*}
%Additionally, we compute
%\begin{equation*}
%\pc{\frac{\partial \D(Z^*)}{\partial z_{ij}}= -}3\sum_{\substack{k=1\\k\neq i,j}}^n z^*_{ik}z^*_{kj},
%\end{equation*}
\begin{align*}
  \frac{\partial f_{ij}(Z)}{\partial z_{ij}} &= {\D(Z)+\frac{\partial\D(Z)}{\partial z_{ij}}z_{ij},}\\
  \frac{\partial \D(Z^*)}{\partial z_{ij}}&= -3\sum\nolimits_{\substack{k=1\\k\neq i,j}}^n z^*_{ik}z^*_{kj}.
\end{align*}
Now, the Jacobian of $f$, denoted by $\jac{f}$, is a
$(n^2-n)\times(n^2-n)$ matrix (since we do not consider
self-appraisals). Let $\jac{f}(Z^*)$ be the Jacobian evaluated at $Z^*$ and
let $\{\lambda_i\}_{i=1}^{n^2-n}$ be the set of its eigenvalues. Then, we
compute
\begin{align*}
  \sum_{i=1}^{n^2-n}\lambda_{i}&=\trace(\jac{f}(Z^*))=\sum_{i=1}^n\sum\nolimits_{\substack{j=1\\j\neq i}}^n\frac{\partial f_{ij}(Z^*)}{\partial z_{ij}}\\
%  &=(n^2-n)\D(Z^*)+\sum_{i=1}^n\sum_{\substack{j=1\\j\neq i}}^n\pc{\frac{\partial \D(Z^*)}{\partial %z_{ij}}}z^*_{ij}\\
%  &=(n^2-n)\D(Z^*)-3\sum_{i=1}^n\sum_{\substack{j=1\\j\neq i}}^n\Big(\sum_{\substack{k=1\\k\neq i,j}}^n %z^*_{ik}z^*_{kj}\Big)z^*_{ij}\\
  &=(n^2-n)\D(Z^*)+3\D(Z^*)=(n^2-n+3)\D(Z^*).
\end{align*}
Since $n^2-n+3>0$ for $n\geq 3$, we draw the following conclusions for
$\D(Z^*)\geq 0$: (i) $\jac{f}(Z^*)$ contains at least one positive
eigenvalue and so the equilibrium point $Z^*$ is unstable; (ii) at least
one triad in $G(Z^*)$ is unbalanced and so $Z^*$ does not correspond to
structural balance.
\end{proof}

\section{Simulation results and conjectures}
\label{sec:simulations}

The generic convergence of trajectories to the minima of $\D$ (or,
equivalently, the convergence from almost all initial conditions) is an
open problem. However, we present strong numerical evidence that support
such claim.  We first remark that, from the proof of
Theorem~\ref{eg-trans}, the projected pure-influence
model~\eqref{def:projected-pure-influence} can be generalized over any
asymmetric matrix in $\Qzu$ by replacing $\D(Z)$ by $-\trace(Z^\top Z^2)$
and this is the model we will refer throughout this section.

A \emph{generic asymmetric initial condition} $X(0)$ for the pure-influence model~\eqref{inf-dyn}
is a matrix that is generated with each entry independently sampled from a uniform distribution with support $[-100,100]$, and its diagonal entries set to zero. A \emph{generic symmetric initial condition} is similarly constructed by only sampling the upper triangular entries of the matrix. %Similar initial conditions can be constructed for the projected pure-influence model~\eqref{def:projected-pure-influence} by $Z(0)=\frac{X(0)}{\Fnorm{X(0)}}$.
For the projected pure-influence model, %~\eqref{def:projected-pure-influence},
we say $Z(0)=\frac{X(0)}{\Fnorm{X(0)}}$ is a (non-)symmetric generic initial condition depending on how $X(0)$ was generated.
We immediately see from the proof of Theorem~\ref{th-mini}, % which uses Lemma~\ref{thaux2},
that $Z(t)$ converges to social balance if and only if $X(t)$ converges to social balance. Indeed, given that $X(t)$ diverges at some finite time $\bar{t}$, we have $Z(\infty)=\frac{X(\bar{t}^-)}{\Fnorm{X(\bar{t}^-)}}$.

For a fixed network size $n$, we use a Monte Carlo
method~\cite{RT-GC-FD:05} to estimate the probability $p$ of the event
``under a generic asymmetric initial condition $Z(0)$, $Z(t)$ converges to
structural balance in finite time". We estimate $p$ by performing $N$
independent simulations (i.e., each simulation generates a new independent
initial condition) and obtaining the proportion $\hat{p}_N$, also known as
the empirical probability, of times that the simulation indeed had $Z(t)$
converging to structural balance in finite time.  For any accuracy
$1-\epsilon\in(0,1)$ and confidence level $1-\eta\in(0,1)$ we have that
$|\hat{p}_N-p|<\epsilon$ with probability greater than $1-\eta$ if the
Chernoff bound $N\geq\frac{1}{2\epsilon^2}\log\frac{2}{\eta}$ is
satisfied. For $\epsilon=\eta=0.01$, the bound is satisfied by $N=27000$.
We performed the $N=27000$ independent simulations with $n\in\{5,6\}$, and
found that $\hat{p}_N=1$. Our observations let us conclude that \emph{for
  generic asymmetric initial condition $Z(0)$ and $n\in\{5,6\}$, with
  $99\%$ confidence level, there is at least $0.99$ probability that $Z(t)$
  converges to structural balance in finite time.}

Similarly, we performed the same Monte Carlo analysis for generic symmetric
initial conditions with $n\in\{3,5,6,15\}$, and found for that
$\hat{p}_N=1$ for all $n$. Therefore, we conclude that \emph{for any
  symmetric generic initial condition $Z(0)$ and $n\in\{3,5,6,15\}$, with
  $99\%$ confidence level, there is at least $0.99$ probability that $Z(t)$
  converges to structural balance in finite time.}

We report three more observations and then state a resulting conjecture.
First, remarkably, we found that all of our simulations (for any type of
random initial condition) that converged to structural balance in finite
time, did it by converging to an equilibrium point having only one positive
eigenvalue inside the set of scale-symmetric matrices, which is a superset
of the set of symmetric matrices (see
Appendix~\ref{sec:AppScaSym}). Second, we did not perform experiments for
larger sizes of $n$ due to computational constraints.  Third,
unfortunately, for $n=3$, we did find randomly-generated asymmetric initial
conditions whose numerically-computed solutions do not converge to
structural balance.

\begin{conjecture}[Convergence from generic initial conditions]
Consider the pure-influence model~\eqref{inf-dyn} with some initial
condition $X(0)$, and the projected pure-influence
model~\eqref{def:projected-pure-influence} with initial condition
$Z(0)=\frac{X(0)}{\Fnorm{X(0)}}$.  Then,
  \begin{enumerate}
\item under generic asymmetric initial conditions, $\lim_{t\to+\infty}Z(t)=Z^*$ for a sufficiently large $n$,
\item under generic symmetric initial conditions, $\lim_{t\to+\infty}Z(t)=Z^*$ for any $n$,\label{opoo}
\end{enumerate}
where $Z^*$ is scale-symmetric (and particularly symmetric for~\ref{opoo})
corresponding to structural balance. Then, $Z(t)$ reaches structural
balance in finite time. Moreover, $X(t)$ reaches structural balance in
finite time with same sign structure as $Z^*$, and also diverges in finite
time.
\end{conjecture}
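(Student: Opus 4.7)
The plan is to tackle the symmetric case~\ref{opoo} first, since there the projected pure-influence model is a true gradient flow (Theorem~\ref{th-mini-gr}), so convergence to \emph{some} critical point of $\D$ already follows from Theorem~\ref{th-mini}\ref{lasym2}. The substantive content of the symmetric conjecture is therefore that the limiting critical point $Z^*$ generically has exactly one positive eigenvalue, which by Corollary~\ref{thm:bal-eq} means $G(Z^*)$ is structurally balanced, and all the finite-time statements on $X(t)$ then follow verbatim from Theorem~\ref{th-mini}\ref{one-thmm-mini}--\ref{three-thmm-mini2}.

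First I would exploit the spectral monotonicity of Theorem~\ref{th-mini}\ref{lem-nondec}: the number of positive eigenvalues $k(t)$ of $Z(t)$ is a non-increasing integer-valued function, bounded below by $1$ because $\trace Z(t)=0$ and $Z(t)\ne \vect{0}_{n\times{n}}$. Hence $k(t)$ is eventually constant at some $k_\infty\in\until{n-1}$, and the task reduces to showing $k_\infty=1$ for almost every $Z(0)\in\Qzus$. For this I would compute the Hessian of $\D\big|_{\Qzus}$ at each critical point described in Theorems~\ref{cor.tech} and~\ref{th_multi}, and argue that whenever $k\ge 2$, there is a tangent direction of strictly negative curvature. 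Equation~\eqref{eqH} together with the Jensen estimate~\eqref{l_ineq_aux} already show that the \emph{values} $\D(Z^*)$ are strictly larger when $k\ge 2$ or when $Z^*$ is reducible; the additional work is to upgrade these scalar comparisons into genuine second-order instability, e.g., by constructing explicit perturbations that merge one of the positive eigenvalues into the non-positive spectrum while remaining tangent to $\Qzus$. Once this instability is in hand, the Stable Manifold Theorem for analytic gradient flows, combined with the \L ojasiewicz inequality, implies that the union of basins of attraction of the $k\ge 2$ equilibria is a countable union of proper analytic subsets of $\Qzus$, hence of measure zero, delivering the desired genericity statement.

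The main obstacle is the asymmetric case. Once symmetry is broken, the generalized right-hand side $Z^2-\diag(Z^2)-\trace(Z^\top Z^2)Z$ is no longer a gradient, the \L ojasiewicz argument fails, and even convergence to a single limit point is not automatic; indeed, the numerical failures reported at $n=3$ show that some additional hypothesis on $n$ is unavoidable. My plan here would be (i) to extend the spectral monotonicity of Theorem~\ref{th-mini}\ref{lem-nondec} in an appropriate weakened form to the asymmetric dynamics, and (ii) to identify a Lyapunov-type functional, or an explicit attraction to the scale-symmetric submanifold introduced in Appendix~\ref{sec:AppScaSym}, that forces $\omega$-limit sets to lie on a manifold where the gradient argument above can be replayed. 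Step (ii) is the hardest part: the lack of gradient structure and the existence of low-dimensional counterexamples together suggest that any proof must be essentially asymptotic in $n$, perhaps via a measure-concentration or random-matrix argument that controls the spectrum of a random initial $Z(0)$ as $n\to\infty$. Once convergence to a balanced equilibrium is established, continuity of the flow stabilizes the sign structure of $Z(t)$ in finite time as in Theorem~\ref{th-mini}\ref{one-thmm-mini}, and the scalar ODE $\dot\eta=-\D(Z)\eta^2$ from~\eqref{eq1o-2} blows up in finite time because $\D(Z(t))$ is eventually bounded above by a strictly negative constant, recovering the finite-time divergence of $X(t)$ exactly as in Theorem~\ref{th-mini}\ref{three-thmm-mini2}.
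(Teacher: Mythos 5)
This statement is a \emph{Conjecture} in the paper, not a theorem: the authors explicitly say that ``the generic convergence of trajectories to the minima of $\D$ \dots is an open problem'' and support the claim only with Monte Carlo estimates (Chernoff-bounded empirical probabilities for $n\in\{3,5,6,15\}$). There is therefore no proof in the paper to compare yours against, and your text is itself a plan rather than a proof, so the honest verdict is that the gap is the entire substantive content of the conjecture. In the symmetric case, the step you defer --- showing that every critical point with $k\ge 2$ positive eigenvalues admits a tangent direction of negative curvature of $\Hessian\D\big|_{\Qzus}$ --- is exactly the missing idea, and the value comparison via~\eqref{eqH} and~\eqref{l_ineq_aux} cannot substitute for it: having a larger critical value says nothing about second-order instability. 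Moreover, by Theorems~\ref{cor.tech} and~\ref{th_multi} the critical points with $k\ge 2$ are \emph{not isolated}; they form positive-dimensional submanifolds parameterized by $\nSt(n,k)$ (and, in the reducible case, by choices of the $\beta_i$), so their basins are not ``a countable union of proper analytic subsets,'' and the plain Stable Manifold Theorem does not apply. You would need a center-stable manifold argument for a continuum of degenerate critical points (normal hyperbolicity of the critical manifolds, or a Sard-type argument), none of which is sketched. You also do not address the possibility that the generic limit has $k_\infty=1$ but is reducible of the form~\eqref{pre_otr} with $n_1<n$, in which case $G(Z^*)$ is incomplete and, per Definition~\ref{def:sbal}, does not satisfy structural balance; the conjecture as stated requires ruling this out generically as well.

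For the asymmetric part your plan is candidly speculative: there is no identified Lyapunov functional, no mechanism forcing attraction to the scale-symmetric set $\Ps$ of Appendix~\ref{sec:AppScaSym}, and no handle on what ``sufficiently large $n$'' means, even though the paper's own $n=3$ counterexamples show that some such hypothesis is unavoidable. Steps (i) and (ii) of your asymmetric program are named but not begun, and without a gradient or Lyapunov structure even convergence of $Z(t)$ to a single limit point (the analogue of Theorem~\ref{th-mini}\ref{lasym2}) is unproven. The parts of your argument that do go through --- deducing the finite-time sign stabilization and the finite escape time of $X(t)$ from $\dot\eta=-\D(Z)\eta^2$ once convergence to a balanced equilibrium with $\D(Z^*)<0$ is granted --- are precisely the parts the paper already proves in Theorem~\ref{th-mini}\ref{one-thmm-mini}--\ref{three-thmm-mini2} under the extra hypothesis that $X(0)$ has one positive eigenvalue. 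In short, your proposal is a reasonable research program consistent with the paper's framework, but it does not close the conjecture, and the paper does not claim to either.
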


Similarly, we performed the same simulation analysis for the \Kulakowski et
al. model~\eqref{def:projected-pure-influenceK}, which converges to
structural balance if and only if the projected \Kulakowski
model~\eqref{inf-dynK} does. To generate a generic initial condition for
this system, we generated an $n\times{n}$ matrix with each entry
independently sampled from a uniform distribution with support
$[-100,100]$, and then divide it by its Frobenius norm. We performed
$N=27000$ independent simulations with $n\in\{5,6\}$, and found that
\emph{for  generic initial condition $Z(0)$ and $n=5$, only $16.94\%$
  converged to structural balance, and for $n=6$, only $11.50\%$ converged
  to structural balance. }

Also, for $n=3$, not all simulations converged to structural balance. We
remark that not all of the networks for which the system converged and did
not satisfy structural balance were complete, some of them were networks
with only self-loops, e.g., Figure~\ref{f:sim3}(a). Similarly, we performed
the same Monte Carlo analysis for symmetric initial conditions with
$n\in\{3,5,6,15\}$. Our results show that \emph{for symmetric generic
  initial condition, $Z(0)$ did not always converge to structural balance
  for $n=3$, but, for $n\in\{5,6,15\}$, with $99\%$ confidence level, there
  is at least $0.99$ probability that $Z(t)$ converges to structural
  balance in finite time.}

These Monte Carlo results are expected, since it has been formally proved
that the \Kulakowski et al. model converges to structural balance only
under generic symmetric initial conditions as
$n\to\infty$~\cite{SAM-JK-RDK-SHS:11} and negative results for asymmetric
conditions are given by~\cite{VAT-PVD-PDL:13}.

%% \begin{result}[Numerical comparison]
%%   In our simulations for network sizes $n\in\{5,6\}$, our proposed
%%   (projected) pure-influence model numerically outperforms the (projected)
%%   model by \Kulakowski et al.~\cite{KK-PG-PG:05}, in the sense that our
%%   model always converges to structural balance under generic asymmetric
%%   initial conditions.
%% \end{result}

See Figure~\ref{f:sim1} for a comparison of trajectories of the
pure-influence model in both generic and symmetric generic initial
conditions. Figure~\ref{f:sim2} shows a comparison between our projected
pure-influence model, which does not consider self-appraisals, and the
projected influence model, which considers self-appraisals. Note how not
considering self-appraisals drastically change the convergence time as well
as the dynamic behavior of the interpersonal appraisals.

\begin{figure}[ht]
  \centering
  \subfloat[Projected pure-influence model~\eqref{def:projected-pure-influence} with generic asymmetric initial condition]{\label{f:11-a}\includegraphics[width=0.485\linewidth]{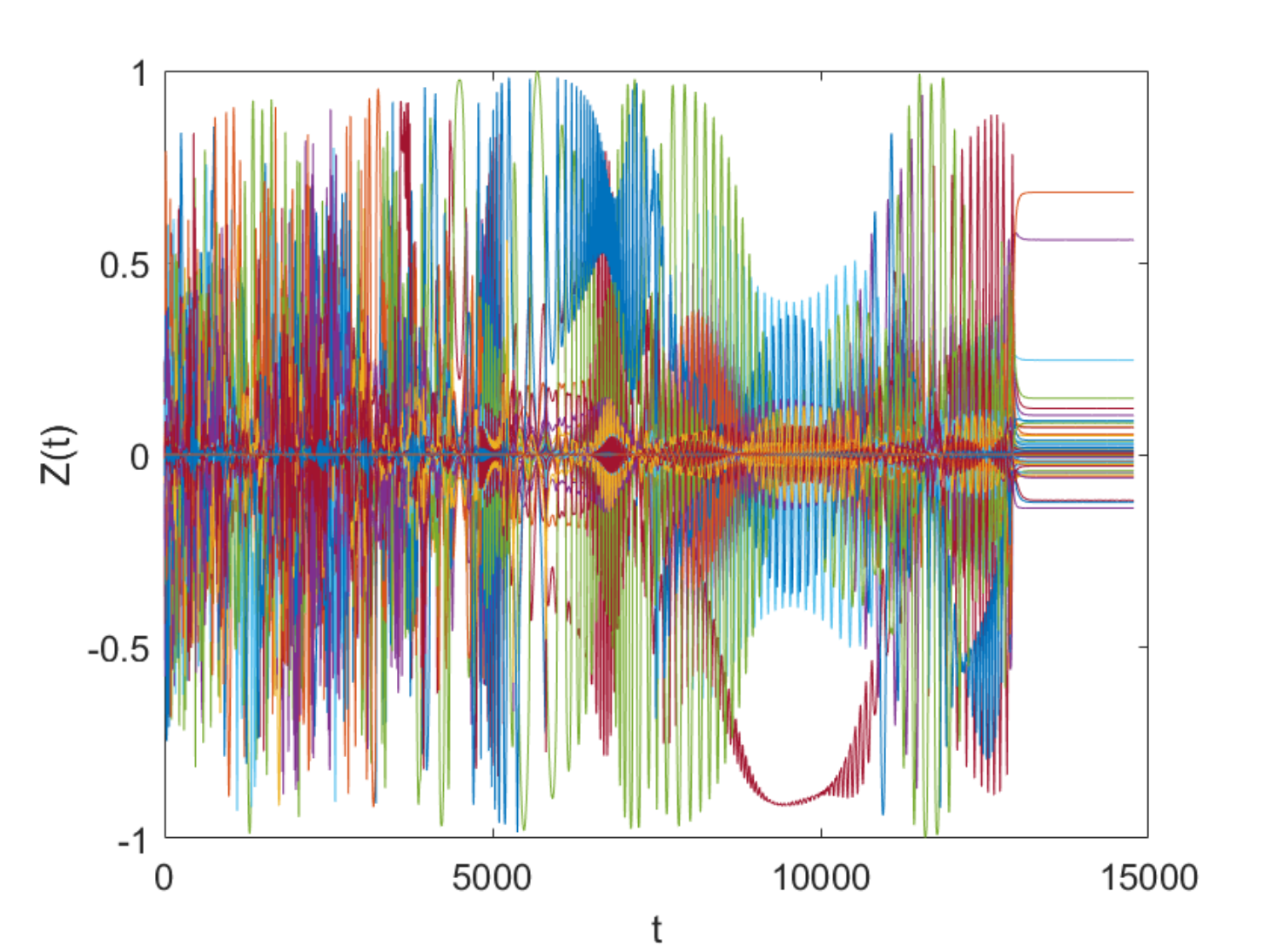}}
  \hfil
  \subfloat[Projected pure-influence model~\eqref{def:projected-pure-influence} with generic symmetric initial condition]{\label{f:14-a}\includegraphics[width=0.485\linewidth]{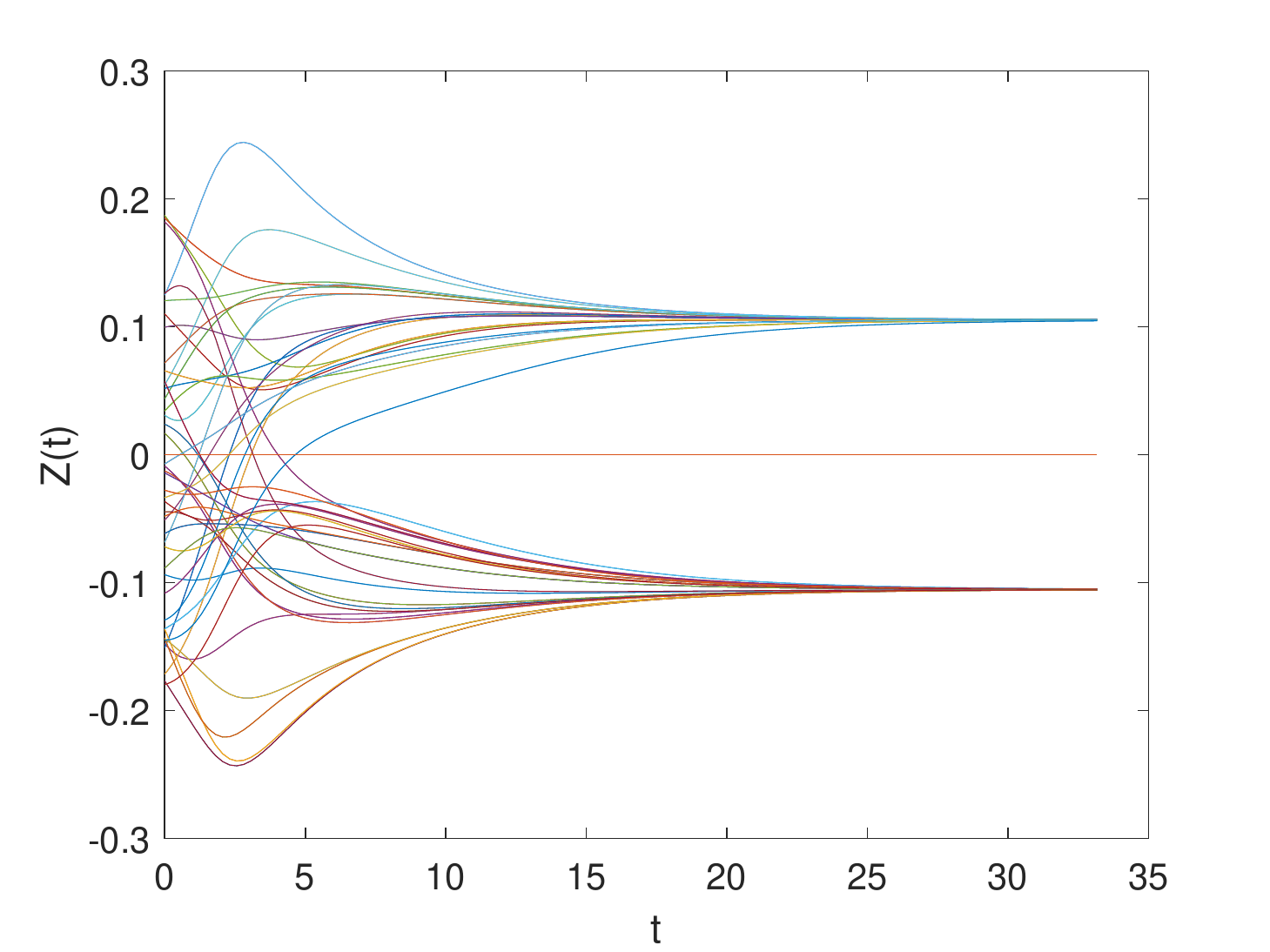}}
\caption{Convergence to structural balance for a network of size $n=10$. % for the transcribed system~\eqref{eq1o-1} with $\alpha=1$ under different initial conditions $Z(0)$.
We plot the evolution of all the entries of $Z(t)$.
}
  \label{f:sim1}
\end{figure}

\begin{figure}[ht]%[H]%[htp]
  \centering
  \subfloat[Projected influence model~\eqref{def:projected-pure-influenceK} with generic asymmetric initial condition]{\label{f:11-b}\includegraphics[width=0.485\linewidth]{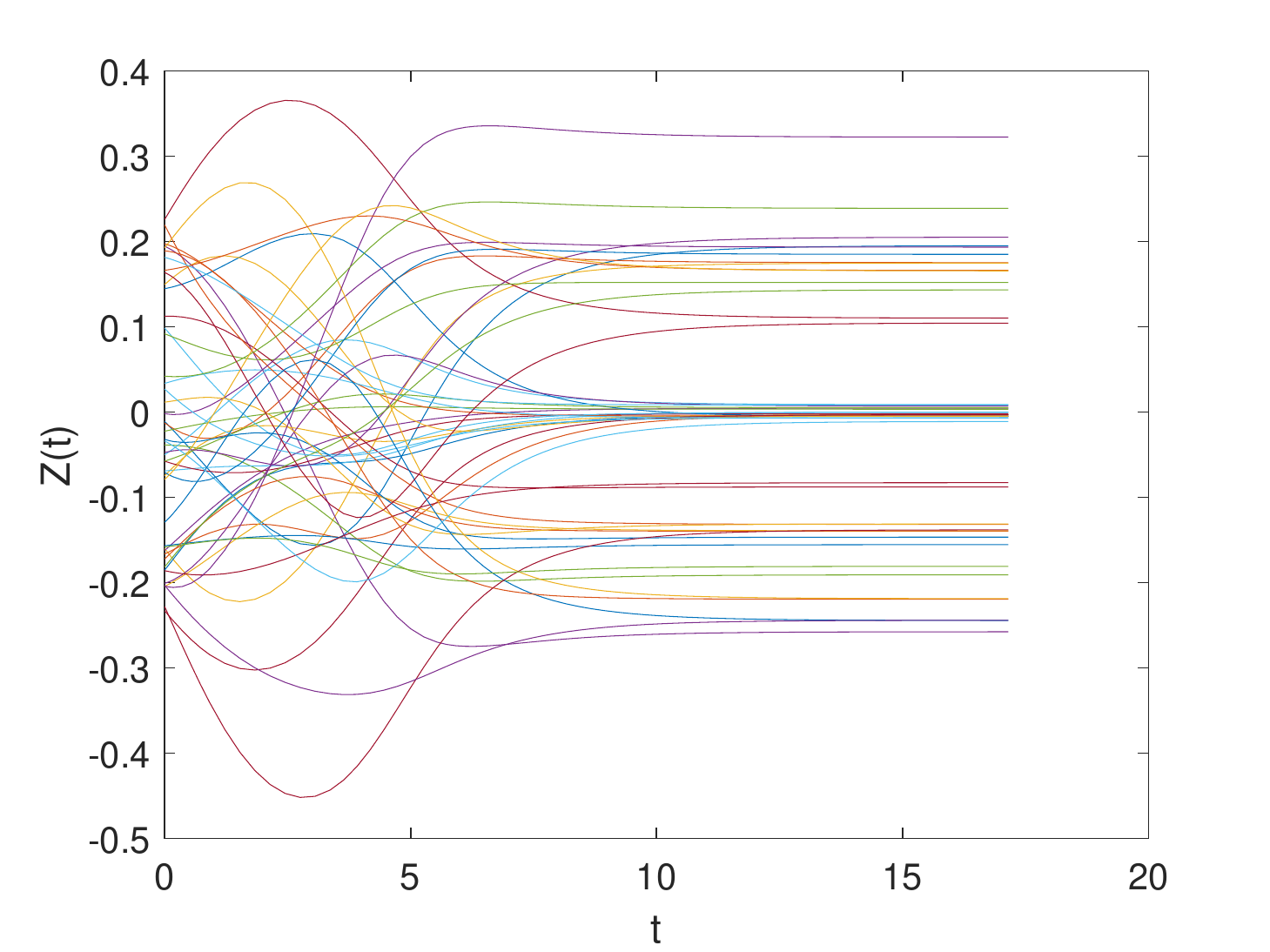}}
    \hfil
  \subfloat[Projected pure-influence model~\eqref{def:projected-pure-influence} with generic asymmetric initial condition]{\label{f:14-b}\includegraphics[width=0.485\linewidth]{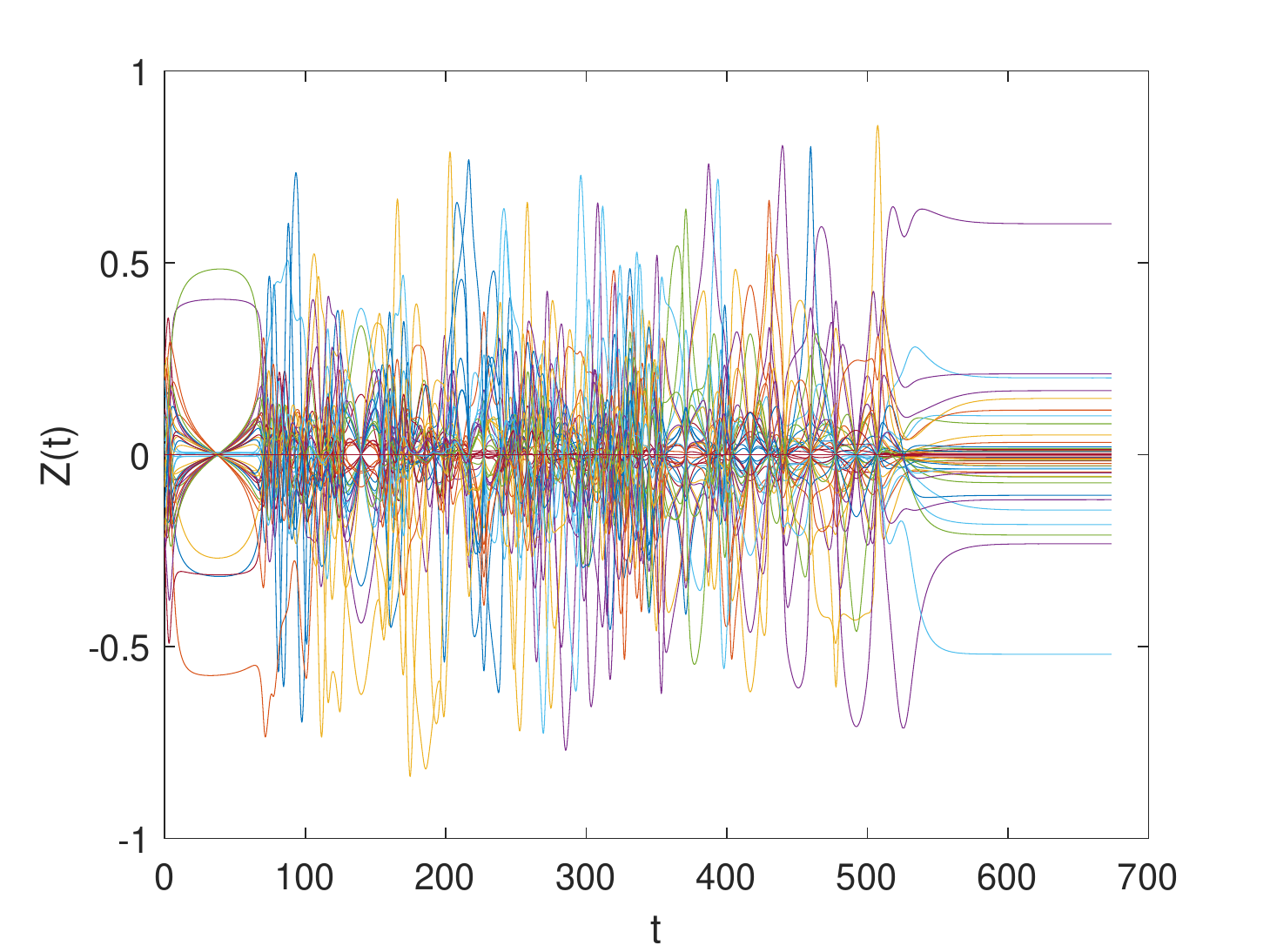}}
\caption{Convergence comparison for a network of size $n=7$ %A comparison of dynamics
(a) with and (b) without the consideration of self-appraisals. % for a network of size $n=7$. (a) considers the projected influence model~\eqref{def:projected-pure-influenceK} (b) considers our projected pure-influence model~\eqref{def:projected-pure-influence}.
We first generated an $n\times{n}$ random matrix $W$ with each entry independently sampled from a uniform distribution with support $[-100,100]$. Then, for (a), we normalize this matrix to have unit Frobenius norm and used it as the initial condition. For (b), we set the diagonal entries of $W$ to zero and then normalize it to have unit Frobenius norm and use it as the initial condition. In this example, (a) did not converged to structural balance, whereas (b) did. %our model converged to structural balance, whereas the other model did not.
We plot the evolution of all the entries of the appraisal matrix.
}\label{f:sim2}
\end{figure}
\begin{figure}[ht]
  \centering
  \subfloat[Projected influence model~\eqref{def:projected-pure-influenceK} with generic asymmetric initial condition]{\label{f:11-c}\includegraphics[width=0.485\linewidth]{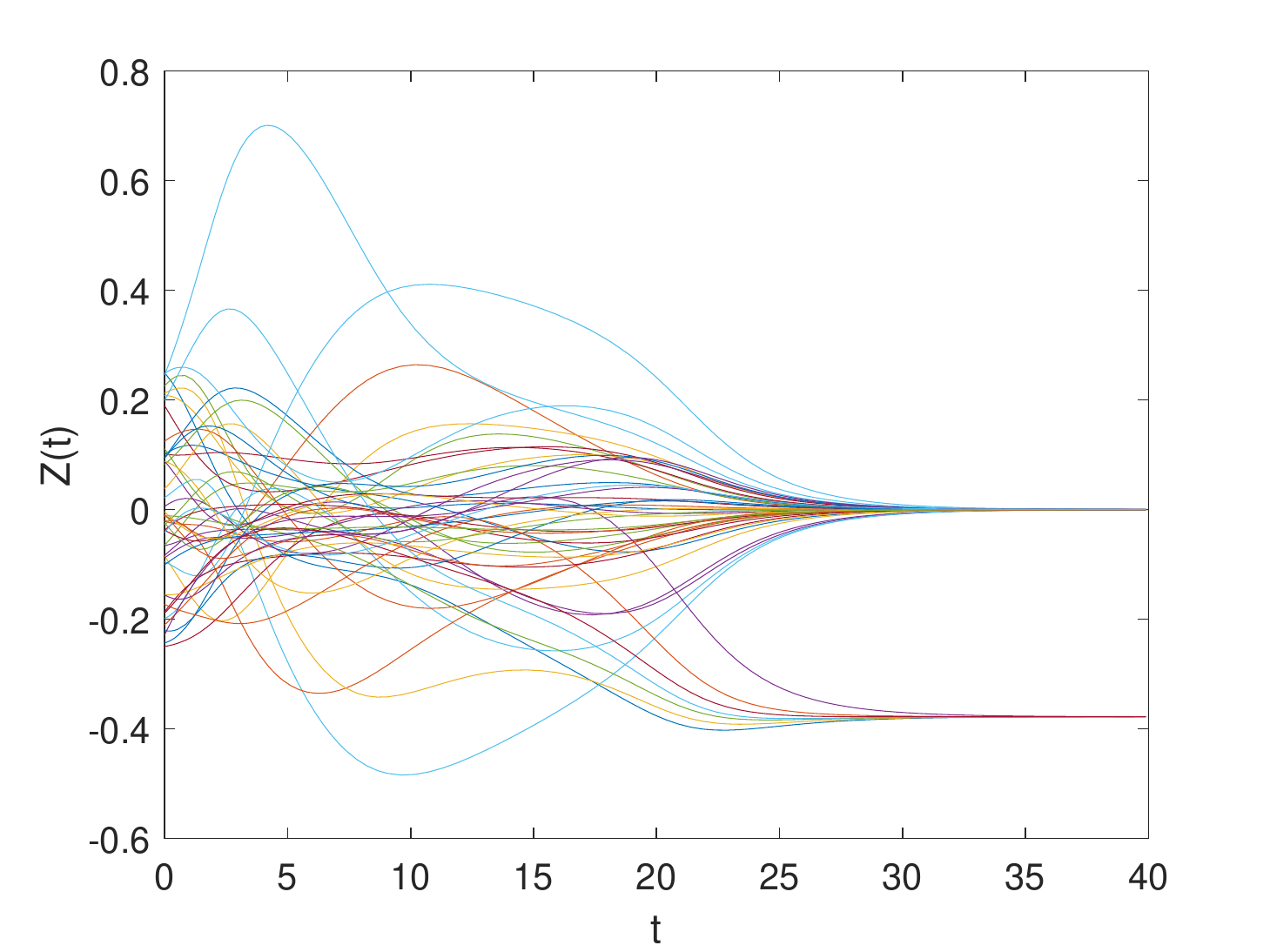}}
  \hfil
  \subfloat[Projected pure-influence model~\eqref{def:projected-pure-influence} with generic asymmetric initial condition]{\label{f:14-c}\includegraphics[width=0.485\linewidth]  {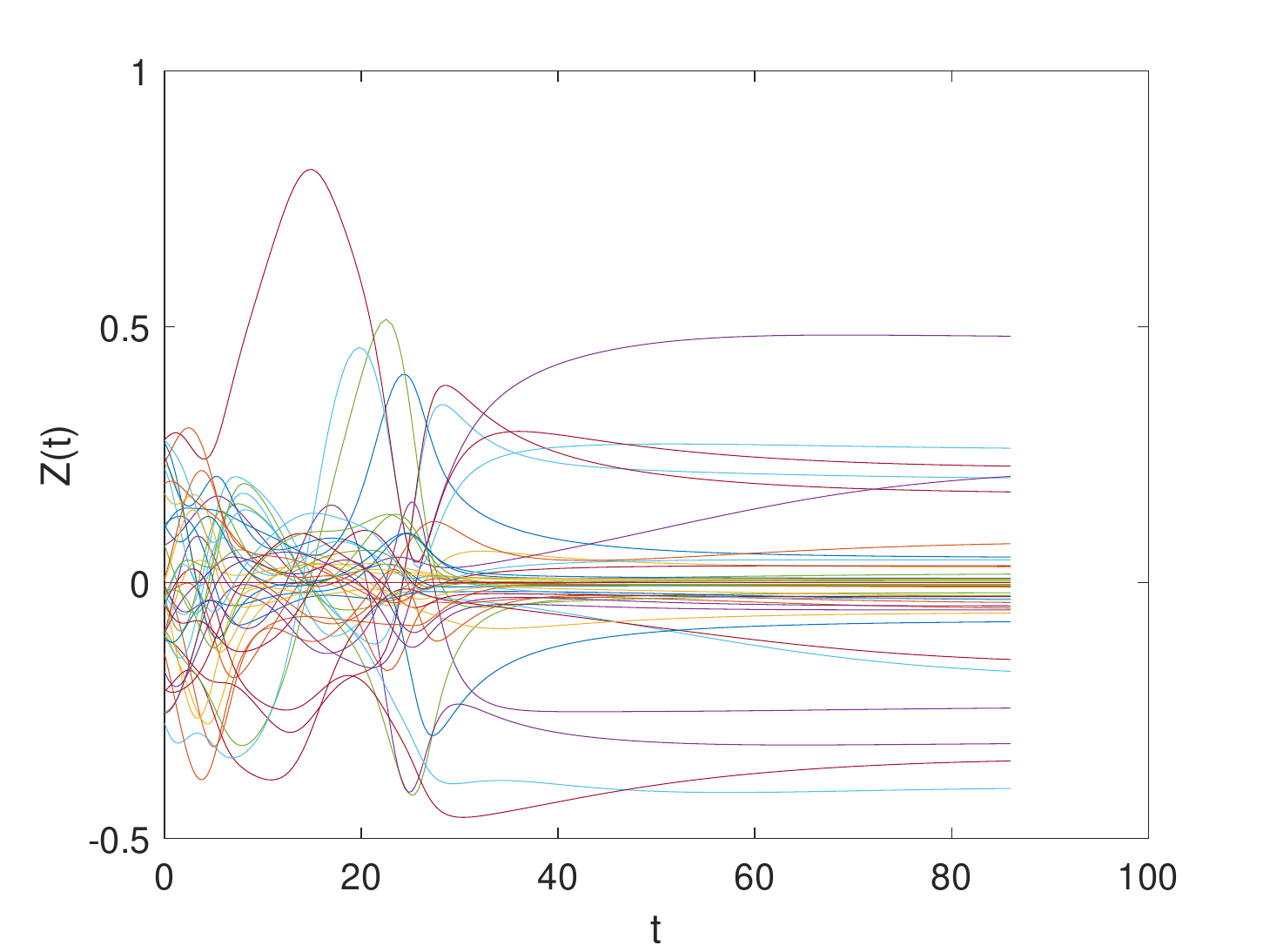}}
\caption{Convergence comparison for a network of size $n=7$ (a) with and (b) without the consideration of self-appraisals.
% A comparison of dynamics with and without the consideration of self-appraisals for a network of size $n=7$.
The setting is the same one as in Figure~\ref{f:sim2}, but with a different random initial condition. (a) converged to a network with only diagonal negative entries (all interpersonal appraisals go to zero), whereas (b) converged to structural balance.% with factions of four and three agents.
}
  \label{f:sim3}
\end{figure}

\section{Conclusion}
\label{sec:conclusion}
We propose two new dynamic structural balance models that incorporates more psychologically plausible assumptions than previous models in the literature, based on a modification by a model proposed by \Kulakowski et al. We have established important convergence properties for these models and also that, most importantly, they correspond to gradient systems over an energy function that characterizes the violations of Heider's axioms for the symmetric case. We also expanded our results to a set of asymmetric matrices called scale-symmetric. Numerical results illustrates that, under generic initial conditions, our models converges to structural balance (for sufficiently large $n$) and thus have better convergence properties than the previous model by \Kulakowski et al. %\cite{SAM-JK-RDK-SHS:11,VAT-PVD-PDL:13} do not contemplate a good behavior of an influence model over generic initial conditions.
%This let us conclude that the elimination of the role of self-appraisals from the existing models of dynamic social balance can actually improve the set of social networks they can predict their evolution towards classic balance. We have formulated a simpler model with a better predictive power.

As future work, we propose to further study the general case of asymmetric
(and non-scale-symmetric) equilibria and the convergence properties of our
models under arbitrary initial conditions. For example, numerical
simulations of the projected pure-influence model from generic initial
conditions illustrate how this system features transient chaos before
converging towards an equilibrium.  A second future direction of work is to
find models with a more sociologically justified transient behavior from
generic initial conditions. Finally, another future direction is to study
the removal of the self-appraisals in other dynamical structural balance
models, like the homophily-based Traag el al.\ model~\cite{VAT-PVD-PDL:13}.

\section*{Acknowledgment}
We are grateful to Prof.\ John Gilbert, Prof.\ Ambuj Singh, and Dr.\ Saber
Jafarpour for insightful discussions.

\bibliographystyle{plainurl+isbn}
\bibliography{alias,Main,FB}

\begin{thebibliography}{10}

\bibitem{PAA-RM-BA:05}
P.-A. Absil, R.~Mahony, and B.~Andrews.
\newblock Convergence of the iterates of descent methods for analytic cost
  functions.
\newblock {\em SIAM Journal on Control and Optimization}, 6(2):531--547, 2005.
\newblock \href {http://dx.doi.org/10.1137/040605266}
  {\path{doi:10.1137/040605266}}.

\bibitem{TA-PLK-SR:05}
T.~Antal, P.~L. Krapivsky, and S.~Redner.
\newblock Dynamics of social balance on networks.
\newblock {\em Physical Review E}, 72(3):036121, 2005.
\newblock \href {http://dx.doi.org/10.1103/PhysRevE.72.036121}
  {\path{doi:10.1103/PhysRevE.72.036121}}.

\bibitem{TA-PLK-SR:06}
T.~Antal, P.~L. Krapivsky, and S.~Redner.
\newblock Social balance on networks: {T}he dynamics of friendship and enmity.
\newblock {\em Physica D: Nonlinear Phenomena}, 224(1):130--136, 2006.
\newblock \href {http://dx.doi.org/10.1016/j.physd.2006.09.028}
  {\path{doi:10.1016/j.physd.2006.09.028}}.

\bibitem{DC-FH:56}
D.~Cartwright and F.~Harary.
\newblock Structural balance: {A} generalization of {H}eider's theory.
\newblock {\em Psychological Review}, 63(5):277, 1956.
\newblock \href {http://dx.doi.org/10.1037/h0046049}
  {\path{doi:10.1037/h0046049}}.

\bibitem{PCV-FB:19g}
P.~Cisneros-Velarde and F.~Bullo.
\newblock Signed network formation games and clustering balance.
\newblock April 2019.
\newblock URL: \url{https://arxiv.org/pdf/1904.02902.pdf}.

\bibitem{DE-JK:10}
D.~Easley and J.~Kleinberg.
\newblock {\em Networks, Crowds, and Markets: Reasoning About a Highly
  Connected World}.
\newblock Cambridge University Press, 2010, ISBN 0521195330.

\bibitem{GF-GI-CA:11}
G.~Facchetti, G.~Iacono, and C.~Altafini.
\newblock Computing global structural balance in large-scale signed social
  networks.
\newblock {\em Proceedings of the National Academy of Sciences},
  108(52):20953--20958, 2011.
\newblock \href {http://dx.doi.org/10.1073/pnas.1109521108}
  {\path{doi:10.1073/pnas.1109521108}}.

\bibitem{LF:1957}
L.~Festinger.
\newblock {\em A Theory of Cognitive Dissonance}.
\newblock Stanford University Press, 1957, ISBN 9780804709118.

\bibitem{NEF:98}
N.~E. Friedkin.
\newblock {\em A Structural Theory of Social Influence}.
\newblock Cambridge University Press, 1998, ISBN 9780521454827.

\bibitem{NEF-AVP-FB:18m}
N.~E. Friedkin, A.~V. Proskurnikov, and F.~Bullo.
\newblock Positive contagion and the macrostructures of generalized balance.
\newblock {\em Network Science}, pages 1--14, 2019.
\newblock \href {http://dx.doi.org/10.1017/nws.2019.19}
  {\path{doi:10.1017/nws.2019.19}}.

\bibitem{FH:53}
F.~Harary.
\newblock On the notion of balance of a signed graph.
\newblock {\em Michigan Mathematical Journal}, 2(2):143--146, 1953.
\newblock \href {http://dx.doi.org/10.1307/mmj/1028989917}
  {\path{doi:10.1307/mmj/1028989917}}.

\bibitem{FH:46}
F.~Heider.
\newblock Attitudes and cognitive organization.
\newblock {\em The Journal of Psychology}, 21(1):107--112, 1946.
\newblock \href {http://dx.doi.org/10.1080/00223980.1946.9917275}
  {\path{doi:10.1080/00223980.1946.9917275}}.

\bibitem{MOJ-SN:15}
M.~O. Jackson and S.~Nei.
\newblock Networks of military alliances, wars, and international trade.
\newblock {\em Proceedings of the National Academy of Sciences},
  112(50):15277--15284, 2015.
\newblock \href {http://dx.doi.org/10.1073/pnas.1520970112}
  {\path{doi:10.1073/pnas.1520970112}}.

\bibitem{IMJ-NJH:76}
I.~M. James and N.~J. Hitchin.
\newblock {\em The Topology of Stiefel Manifolds}.
\newblock Cambridge University Press, 1976, ISBN 9780521213349.

\bibitem{PJ-NEF-FB:13n}
P.~Jia, N.~E. Friedkin, and F.~Bullo.
\newblock The coevolution of appraisal and influence networks leads to
  structural balance.
\newblock {\em IEEE Transactions on Network Science and Engineering},
  3(4):286--298, 2016.
\newblock \href {http://dx.doi.org/10.1109/TNSE.2016.2600058}
  {\path{doi:10.1109/TNSE.2016.2600058}}.

\bibitem{KK-PG-PG:05}
K.~Ku{\l}akowski, P.~Gawro{\'{n}}ski, and P.~Gronek.
\newblock The {H}eider balance: {A} continuous approach.
\newblock {\em International Journal of Modern Physics C}, 16(05):707--716,
  2005.
\newblock \href {http://dx.doi.org/10.1142/S012918310500742X}
  {\path{doi:10.1142/S012918310500742X}}.

\bibitem{JML:03}
J.~M. Lee.
\newblock {\em Introduction to Smooth Manifolds}.
\newblock Springer, 2003, ISBN 0387954481.

\bibitem{JL-DH-JK:10}
J.~Leskovec, D.~Huttenlocher, and J.~Kleinberg.
\newblock Signed networks in social media.
\newblock In {\em Int.\ Conf.\ on Human Factors in Computing Systems}, pages
  1361--1370, Atlanta, USA, 2010.
\newblock \href {http://dx.doi.org/10.1145/1753326.1753532}
  {\path{doi:10.1145/1753326.1753532}}.

\bibitem{CCL-CHL-CSF-HFJ-HCH:13}
C.-C. Lin, C.-H. Lee, C.-S. Fuh, H.-F. Juan, and H.-C. Huang.
\newblock Link clustering reveals structural characteristics and biological
  contexts in signed molecular networks.
\newblock {\em PLOS ONE}, 8(6):1--9, 06 2013.
\newblock \href {http://dx.doi.org/10.1371/journal.pone.0067089}
  {\path{doi:10.1371/journal.pone.0067089}}.

\bibitem{MM-MF-PJK-HRR-MAS:11}
M.~Malekzadeh, M.~Fazli, P.~{Jalaly~Khalidabadi}, H.~R. Rabiee, and M.~A.
  Safari.
\newblock Social balance and signed network formation games.
\newblock In {\em Proceedings of 5th KDD Workshop on Social Network Analysis
  (SNA-KDD)}, San Diego, USA, August 2011.

\bibitem{SAM-JK-RDK-SHS:11}
S.~A. Marvel, J.~Kleinberg, R.~D. Kleinberg, and S.~H. Strogatz.
\newblock Continuous-time model of structural balance.
\newblock {\em Proceedings of the National Academy of Sciences},
  108(5):1771--1776, 2011.
\newblock \href {http://dx.doi.org/10.1073/pnas.1013213108}
  {\path{doi:10.1073/pnas.1013213108}}.

\bibitem{SAM-SHS-JMK:09}
S.~A. Marvel, S.~H. Strogatz, and J.~M. Kleinberg.
\newblock Energy landscape of social balance.
\newblock {\em Physical Review Letters}, 103:198701, 2009.
\newblock \href {http://dx.doi.org/10.1103/PhysRevLett.103.198701}
  {\path{doi:10.1103/PhysRevLett.103.198701}}.

\bibitem{WM-PCV-GC-NEF-FB:17f}
W.~Mei, P.~Cisneros-Velarde, G.~Chen, N.~E. Friedkin, and F.~Bullo.
\newblock Dynamic social balance and convergent appraisals via homophily and
  influence mechanisms.
\newblock {\em Automatica}, October 2017.
\newblock To appear.
\newblock URL: \url{https://arxiv.org/pdf/1710.09498.pdf}.

\bibitem{RF-DV-SY-OM:07}
F.~Radicchi, D.~Vilone, S.~Yoon, and H.~Meyer-Ortmanns.
\newblock Social balance as a satisfiability problem of computer science.
\newblock {\em Physical Review E}, 75:026106, 2007.
\newblock \href {http://dx.doi.org/10.1103/PhysRevE.75.026106}
  {\path{doi:10.1103/PhysRevE.75.026106}}.

\bibitem{RT-GC-FD:05}
R.~Tempo, G.~Calafiore, and F.~Dabbene.
\newblock {\em Randomized Algorithms for Analysis and Control of Uncertain
  Systems}.
\newblock Springer, 2005, ISBN 1-85233-524-6.

\bibitem{VAT-PVD-PDL:13}
V.~A. Traag, P.~Van Dooren, and P.~De Leenheer.
\newblock Dynamical models explaining social balance and evolution of
  cooperation.
\newblock {\em PLOS ONE}, 8(4):e60063, 2013.
\newblock \href {http://dx.doi.org/10.1371/journal.pone.0060063}
  {\path{doi:10.1371/journal.pone.0060063}}.

\bibitem{AvdR:11}
A.~{van~de~Rijt}.
\newblock The micro-macro link for the theory of structural balance.
\newblock {\em Journal of Mathematical Sociology}, 35(1-3):94--113, 2011.
\newblock \href {http://dx.doi.org/10.1080/0022250X.2010.532262}
  {\path{doi:10.1080/0022250X.2010.532262}}.

\bibitem{XZ-DZ-FYW:15}
X.~Zheng, D.~Zeng, and F.-Y. Wang.
\newblock Social balance in signed networks.
\newblock {\em Information Systems Frontiers}, 17(5):1077--1095, 2015.
\newblock \href {http://dx.doi.org/10.1007/s10796-014-9483-8}
  {\path{doi:10.1007/s10796-014-9483-8}}.

\end{thebibliography}

%
%\clearpage
\section{Supporting results and proofs}
\label{sec:proofs}

\begin{lemma}
  \label{thaux2}
  Let $x(t)$ be the solution to $\dot{x}=f(x)$ from initial condition
  $x(0)$, with $f$ being a continuously differentiable vector field. Let
  $\eta$ be a positive continuous scalar function. Then, $y(t)$ is the
  solution to $\dot{y}=\eta(t)f(y)$ with initial condition $y(0)=x(0)$ if
  and only if $y(t)=x(\int_{0}^t\eta(s)ds)$.
\end{lemma}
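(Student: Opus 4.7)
The plan is to treat this as a standard time-rescaling argument for ODEs, with the key object being the change of variable $\tau(t)=\int_0^t\eta(s)\,ds$. Since $\eta$ is positive and continuous, $\tau$ is $C^1$ and strictly increasing with $\tau(0)=0$, so it is a valid monotone reparametrization of time.

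For the ``if'' direction, I would simply define $y(t):=x(\tau(t))$ and verify directly that it satisfies the stated ODE. By the chain rule and the fact that $x$ solves $\dot x=f(x)$,
\begin{equation*}
  \dot y(t)=\dot x(\tau(t))\,\dot\tau(t)=f(x(\tau(t)))\,\eta(t)=\eta(t)\,f(y(t)),
\end{equation*}
and $y(0)=x(\tau(0))=x(0)$, which gives exactly the desired initial value problem.

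For the ``only if'' direction, I would invoke uniqueness. The right-hand side $g(t,y):=\eta(t)f(y)$ is continuous in $t$ (since $\eta$ is continuous) and locally Lipschitz in $y$ (since $f$ is $C^1$), so the Picard--Lindel\"of theorem guarantees that the initial value problem $\dot y=\eta(t)f(y)$, $y(0)=x(0)$ has a unique solution on its maximal interval of existence. The first part already exhibits $t\mapsto x(\tau(t))$ as such a solution, so any $y(t)$ satisfying the ODE with the same initial condition must coincide with $x(\int_0^t\eta(s)\,ds)$ on the common domain.

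I do not expect any real obstacle: the only points to handle carefully are (a) the domain of definition, noting that $y$ is defined wherever both $\tau(t)$ stays in the domain of $x$ and $x(\tau(t))$ remains in the domain of $f$, and (b) making sure uniqueness is invoked with the correct regularity hypotheses on $\eta f$. Both are immediate from the hypotheses that $f\in C^1$ and $\eta$ is positive and continuous.
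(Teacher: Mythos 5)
Your proposal is correct and follows essentially the same argument as the paper: the time reparametrization $\tau(t)=\int_0^t\eta(s)\,ds$, the chain-rule verification for the ``if'' direction, and uniqueness of solutions for the ``only if'' direction. Your version is in fact slightly more careful than the paper's, since you explicitly note why Picard--Lindel\"of applies to $\eta(t)f(y)$.
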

\begin{proof}
  Consider the time transformation $\bar{t}(t)=\int_0^t\eta(s)ds$, which is
  well-defined since it is continuous and monotonically increasing on $t$
  (recall that $\alpha(s)>0$ for $s\in[0,t]$), with $\bar{t}=0$ if and only ifq
  $t=0$. Now, from the chain rule, it follows that
  \begin{equation*}
    \frac{dy}{dt}=\frac{dx(\bar{t})}{d\bar{t}}\frac{d\bar{t}}{dt}= f(y)\eta(t),\quad y(0)=x(0).
  \end{equation*}
  This finishes proof of the ``if'' part. The ``only if'' part follows from
  the uniqueness theorem.
\end{proof}

%\begin{lemma}
%\label{thaux3}
%The matrix $V$ belongs to $\nSt(n,2)$ if and only if
%\begin{equation}\label{eq.k2-general1}
%V=\left(\frac{2}{n}\right)^{1/2}
%\begin{bmatrix}
%\cos\alpha_1 & \sin\alpha_1\\
%\cos\alpha_2 & \sin\alpha_2\\
%\vdots & \vdots\\
%\cos\alpha_n & \sin\alpha_n
%\end{bmatrix},
%\end{equation}
%where angles $\alpha_1,\dots,\alpha_n$ are chosen in such a way that
%\begin{equation}\label{eq.ngon1}
%  \sum_{m=1}^n e^{2\alpha_m\imath}=0,\quad \imath^2=-1.
%\end{equation}
%\end{lemma}
%\begin{proof}
%Obviously, each row has norm $\sqrt{k/n}$ if and only if $V$ can be written as~\eqref{eq.k2-general1}.
%Notice now the columns are unit vectors if and only if
%\[
%\sum_{m=1}^n\cos^2\alpha_i=n/2=\sum_{m=1}^n\sin^2\alpha_i\Longleftrightarrow \sum_m\cos 2\alpha_m=2\sum_m\cos^2\alpha_m-n=0.
%\]
%Similarly, the columns are orthogonal if and only if
%\[
%\sum_{m=1}^n\cos\alpha_i\sin\alpha_i=0=\frac{1}{2}\sum_m\sin 2\alpha_m.
%\]
%Obviously, the latter two constraints are equivalent to~\eqref{eq.ngon1}.
%\end{proof}

\begin{proof}[\textbf{Proof of Lemma~\ref{ch-gunit}}]
First, to prove that the set $\nSt(n,k)$, $k\leq n$ is a submanifold of the compact Stiefel manifold, define the smooth map $\map{\Phi}{\St(n,k)}{\R^n}$ by $X\mapsto (\norm{X_{i*}}_2^2,\dots,\norm{X_{n*}}_2^2)^\top$, where $X_{i*}$ is the $ith$ row of $X$. Then, we have that $\nSt(n,k)=\Phi^{-1}((k/n,\dots,k/n)^\top)$ and it is easy to prove the mapping $\Phi$ has constant rank $n$. Thus, we use the Constant-Rank Level Set theorem~\cite{JML:03} to conclude our claim. The properties of compactness and analyticity are immediate from the definition of the set $\nSt(n,k)$, $k\leq n$.

Now, notice that conditions~\eqref{eq.gu1} and~\eqref{eq.gu2} from Definition~\ref{def-unit} impose, in total, $\frac{k(k+1)}{2}+n$ constraints on $kn$ independent variables, however, these constraints are linearly dependent: one of them can be removed (for instance, if one requires condition~\ref{eq.gu1} from Definition~\ref{def-unit}, then suffices to constrain only sums of $n-1$ rows, whereas the remaining sum automatically equals $k/n$)). Whenever $k\le n$ and $n\ge 3$, one has $\frac{k(k+1)}{2}+n-1<kn$, which implies that the set $\nSt(n,k)$ has the dimension $(k-1)n+1-k(k+1)/2$.

Statements~\ref{st-1} and~\ref{st-2} are immediate. Now regarding~\ref{st-3}, it is obvious that each row has norm $\sqrt{k/n}$ if and only if $V$ can be written as~\eqref{eq.k2-general}.
Notice now the columns are unit vectors if and only if $\sum_{m=1}^n\cos^2\alpha_i=n/2=\sum_{m=1}^n\sin^2\alpha_i$, which in turn holds if and only if $\sum_m\cos 2\alpha_m=2\sum_m\cos^2\alpha_m-n=0$. Similarly, the columns are orthogonal if and only if $\sum_{m=1}^n\cos\alpha_i\sin\alpha_i=0=\frac{1}{2}\sum_m\sin 2\alpha_m$. These two constraints are equivalent to~\eqref{eq.ngon}.
\end{proof}

\begin{proof}[\textbf{Proof of Lemma~\ref{lem.tech2}}]
The case where $\alpha=\beta=0$ is trivial: $Z=0$ and it obviously can be decomposed as in~\eqref{eq.special1} with $p=q=0$. Notice that every eigenvalue of $Z=Z^{\top}$ corresponds to the eigenvalue $\la^2-2\alpha\la$ of $Z^2-2\alpha Z$, and hence $\la^2-2\alpha-\beta=0$. Therefore, $\alpha^2+\beta\ge 0$ (otherwise, eigenvalues of $Z$ would be complex).
Furthermore, $\alpha^2+\beta\ne 0$ (otherwise, $\la=\alpha$ would be the
only eigenvalue of $Z$ of multiplicity $n$, and one would have $\trace(Z)=\alpha n$, entailing that $\alpha=\beta=0$). Denoting
$\Delta=\sqrt{\alpha^2+\beta}$, the matrix $Z$ has two different
eigenvalues $\alpha+\Delta$ and $\alpha-\Delta$, denote their
multiplicities by $k$ and $n-k$. Then
$(\alpha+\Delta)k+(\alpha-\Delta)(n-k)=0$.  Denoting $q=\Delta-\alpha$ and
$p=2\Delta>0$, one has $(p-q)k-q(n-k)=0$ or, equivalently, $pk=qn$ thus, $q>0$.

Consider the orthonormal eigenvectors $v_1,\ldots,v_k$, corresponding to
the eigenvalue $p-q=\alpha+\Delta$ and orthonormal eigenvectors
$w_{1},\ldots,w_{n-k}$, corresponding to $-q=\alpha-\Delta$. The sequence
$v_1,\ldots,v_k$, $w_1$, $\ldots$, $w_{n-k}$ constitutes an orthonormal basis of
eigenvectors for the operator $Z$.  Stacking the columns $v_i$ and $w_i$,
one obtains $n\times k$ and $n\times (n-k)$ matrices $V=(v_1,\ldots,v_k)$,
$W=(w_1,\ldots,w_{n-k})$. The matrix $[V,W]$ is orthogonal and diagonalizes
$Z$:
%\[
$Z[V,W]=
[V,W]\begin{bmatrix}
(p-q) & 0\\
0 & -q
\end{bmatrix}$ and thus $Z=(p-q)VV^{\top}-qWW^{\top}$.
%\]
Since $VV^{\top}+WW^{\top}=I_n$, $Z$ is decomposed as~\eqref{eq.special1}. It remains to notice that $V^{\top}V=I_k$ by
definition of the orthonormal basis and $\diag VV^{\top}=(q/p)I_n=(k/n)I_n$ since $\diag Z=0$. To finish the proof, notice that $p-2q=2\alpha$ and $\beta=\Delta^2-\alpha^2=(\Delta-\alpha)(\Delta+\alpha)=q(p-q)$.
%{\color{blue} We finish the proof by claiming that our choice of $p,q>0$ is without loss of generality. %, since this %variables were arbitrarily set.
%To see this, observe that at the beginning of this proof (end of the first paragraph), we could have chosen %$q'=-q=\Delta-\alpha$ and $p'=-p=-2\Delta$, i.e., $p',q'<0$; and so similarly obtain $(q'-p')k+q'(n-k)=0$ and $p'k=q'n$. %Carefully following the rest of the proof with these negative choices of $p'$ and $q'$, we obtain the equivalent results %$2\alpha=2q'-p'$ and $\beta=q'(p'-q')$, which turns out to be the same as doing the change of variables $p=-p'$ and %$q=-q'$ directly from the theorem statement.%with an appropriate change of variables.
%
%
% however, note that we could have set $q=\alpha-\Delta$ and $p=-2\Delta$ (i.e., $p,q<0$) and similarly obtain %$(q-p)k+q(n-k)=0$ and so $pk=qn$. Thus, our choice of $p$ and $q$ is without loss of generality. In fact, we can %carefully follow the the rest of the proof with negative choices of $p$ and $q$ and obtain equivalent results with an %appropriate change of variables.
% }
\end{proof}

\begin{proof}[\textbf{Proof of Corollary~\ref{cor.diag}}]
Denoting $f(z)=z^2-2\alpha z$, $z\in\mathbb{C}$, it suffices to show that if $f(Z)=\diag(\beta_1I_{n_1},\ldots,\beta_sI_{n_s})$, then $Z=\diag(Z_1,\ldots,Z_s)$, where the diagonal blocks obey the equations $f(Z_i)=\beta_iI_{n_i}$. This statement will be proved for any analytic function $f(z)$. It is well known that the spectrum of $f(Z)$ consists of all points $f(\la)$, where $\la$ is an eigenvalue of $Z$.
Consider the set of eigenvalues of $Z$ that belong to $f^{-1}(\beta_i)$ and let ${\cal X}_i$ be the sum of corresponding eigenspaces. Then ${\cal X}_i$ is invariant under the operator $Z$, and $\R^n=\oplus_{i=1}^s{\cal X}_i$ (the sum is orthogonal). Also, $f(Z)x=\beta_i x$ for any $x\in{\cal X}_i$. For any basis vector $e_r=(0,\ldots,1,\ldots,0)^{\top}$ consider the decomposition $e_r=\oplus_{i=1}^se_r^i$, $e_r^i\in {\cal X}_i$. Then $Ze_r=\oplus_{i=1}^sZe_r^i$, $Ze_r^i\in{\cal X}_i$ and $f(Z)e_r=\oplus_{i=1}^sf(Z)e_r^i=\oplus_{i=1}^s\beta_ie_r^i$. Suppose that $1\le r\le n_1$. Then $f(Z)e_r=\beta_1e_r$. Since $\beta_1,\ldots,\beta_s$ are pairwise different, we have $e_r=e_r^1$ and $e_r^2=\ldots=e_r^s=0$. Similarly, for $n_1+n_2+\ldots+n_{j-1}+1\le r\le n_1+n_2+\ldots+n_{j-1}+n_{j}$ one has $e_r=e_r^j$ ($j=2,\ldots,s$).

In other words, each ${\cal X}_i$ contains $n_i$ basis vectors $e_r$, where $n_1+n_2+\ldots+n_{i-1}+1\le r\le n_1+n_2+\ldots+n_{i-1}+n_{i}$ and thus $\dim{\cal X}_i\ge n_i$. Recalling that $n_1+\ldots+n_s=n$,
one shows that $\dim{\cal X}_i=n_i\,\forall i$ and thus ${\cal X}_i$ is spanned by the corresponding basis vectors. Since ${\cal X}_i$ is invariant under $Z$, $Z=\diag(Z_1,\ldots,Z_s)$, where the block $Z_i$ has dimension $n_i\times n_i$. Obviously, $f(Z_i)=\beta_i\diag I_{n_i}$. The statement of Corollary is now immediate from Lemma~\ref{lem.tech2}.
\end{proof}

%\section{Proof of various supporting results}

\begin{proof}[\textbf{Proof of Theorem~\ref{th_multi}}]
We prove the necessity first. Denote $2\alpha=-\D(Z)$. By assumption, $Z^2-2\alpha Z$ is diagonal. Statements (i) and (ii) follow from Corollary~\ref{cor.diag}, entailing also that
$p_i,q_i$ can be represented as~\eqref{eq.pq-general} with some $\beta_i$.
Since $Z_i^2=2\alpha Z_i+\beta_i I_{n_i}$ and $\diag Z_i=0$, one has $\trace Z_i^2=\beta_in_i$, therefore
\begin{equation}\label{eq.aux1}
\sum_{i=1}^s\beta_in_i=\trace (Z^2)=1.
\end{equation}
Recall also that for each $i$ one has $p_ik_i=q_in_i$ or, equivalently,
\[
\frac{2k_i}{n_i}=\frac{\sqrt{\alpha^2+\beta_i}-\alpha}{\sqrt{\alpha^2+\beta_i}}=1-\frac{\alpha}{\sqrt{\alpha^2+\beta_i}}\quad\forall i:p_i,q_i\ne 0.
\]
(if $\alpha=0$, one always has $p_i,q_i\ne 0$, otherwise it is possible that $\beta_i=0$ and then $Z_i=0$). This implies
condition 3 ($\ve=\sign\alpha$) and allows to determine $\alpha,\beta_i$. In the case where $\ve\ne 0$ notice
that $n_i-2k_i\ne 0$ for any $i$ such that $Z_i\ne 0$. Thus
\[
\frac{\beta_i+\alpha^2}{\alpha^2}=\frac{n_i^2}{(n_i-2k_i)^2}\Longleftrightarrow \beta_i=\alpha^2\frac{4n_ik_i-4k_i^2}{(n_i-2k_i)^2}.
\]
In view of~\eqref{eq.aux1}, one obtains that
\[
\alpha=\ve\left(\sum_{i:Z_i\ne \vect{0}_{n_i\times{n_i}}}\frac{4k_in_i(n_i-k_i)}{(n_i-2k_i)^2}\right)^{-1/2},
\]
which entails~\eqref{eq.alpha-beta}. In the case of $\alpha=0$, one has $p_i=2\sqrt{\beta_i},q_i=\sqrt{\beta_i}$ for any $i$,
and~\eqref{eq.aux1} implies that $\sum_i q_i^2n_i=1$. This finishes the proof of statement~\ref{s4}.

The proof of sufficiency is similar. For any $i$ such that $Z_i\ne 0$, the coefficients $p_i,q_i$ have the form~\eqref{eq.pq-general} (if $\ve\ne 0$, this is implied by~\ref{a-s4}, otherwise we choose $\alpha=0$ and $\beta_i=q_i^2=p_i^2/4$). Therefore, we have $Z_i^2-2\alpha Z_i=\beta_i Z_i$ and, in particular,
$Z^2-2\alpha Z$ is diagonal. A straightforward computation shows that $p_ik_i=q_in_i$ and thus $\diag Z_i=0\,\forall i$,
in particular, $\diag Z=0$. Also, $\diag Z_i^2=\beta_in_i$, and statement~\ref{s4} now implies that $\trace Z^2=1$. It remains
to notice that $Z_i^3=2\alpha Z_i^2+\beta_iZ_i$, and hence $\trace(Z_i^3)=2\alpha\beta_in_i$. Hence, $\D(Z)=-\trace(Z^3)=-2\alpha$ and
$Z^2+\D(Z)Z$ is a diagonal matrix. This finishes that $Z$ is an equilibrium~\eqref{eq.equil}.
%
%Finally, to prove there are no other multi-block equilibria in the system, we can use a similar proof by contradiction as we did in the proof of the single-block equilibria (Theorem~\ref{cor.tech}), but using Lemma~\ref{cor.diag} instead of Lemma~\ref{lem.tech2}.
%
%In the case of multiple blocks~\eqref{eq.special}, the classification becomes more sophisticated.
%Corollary~\ref{cor.diag} implies that, modulo renumbering of the graph's nodes, all equilibria can be decomposed as~
%$Z=\diag(Z_1,\ldots,Z_s)$, where each $Z_i$ has the structure~\eqref{eq.special} with corresponding $p_i,q_i,V_i$.
\end{proof}

%\section{Generalization to the set of scale-symmetric matrices}
\section{Scale-symmetric matrices}
\label{sec:AppScaSym}

We now generalize our results for symmetric appraisal networks to a class
of asymmetric matrices. We define the sets of \emph{scale-symmetric}
matrices
\begin{align*}
  \Psa &= \setdef{A\in \Qz}{\text{there exists }\gamma\succ\vect{0}_n \text{ such that }\\
  &A\diag(\gamma)=(A\diag(\gamma))^\top},\\
  \Ps &= \Qzu \intersection \Psa.
\end{align*}
Note that $\Ps \supset \Qzus$ and
$$
\begin{gathered}
\Ps=\bigcup_{\gamma\succ\vect{0}_n}{\Ps}(\gamma),\\
{\Ps}(\gamma)=\setdef{A\in\Qzu}{A\diag(\gamma)=(A\diag(\gamma))^\top}.
\end{gathered}
$$.
%Now, notice that there is an ambiguity in the definition: let $\gamma\succ\vect{0}_n$ and $\alpha>0$, then $\Ps(\gamma)=\Ps(\alpha\gamma)$; therefore, every time we refer to $\Ps(\gamma)$ (or $\Ps$), we assume, without loss of generality, that the first entry of $\gamma$ is equal to one.

%$\Qz=\setdef{A\in\R^{n\times{n}}}{A\text{ zero diagonal}}$ and $\Qzu=\setdef{A\in\R^{n\times{n}}}{A\text{ zero diagonal},\norm{A}_F=1}$.
%We define $\Qz=\setdef{A\in\R^{n\times{n}}}{A\text{ zero diagonal and \textcolor{blue}{is not the zero matrix}}}$ and $\Qzu=\setdef{A\in\R^{n\times{n}}}{A\text{ zero diagonal and \textcolor{blue}{is not the zero matrix}},\norm{A}_F=1}$.

\begin{lemma}\label{thaux1}
Consider any $\gamma\succ\vect{0}_n$ and some matrix $A\in\R^{n\times{n}}$ such that $A\diag(\gamma)=\diag(\gamma)A^\top$. Then,
\begin{enumerate}[label=(\roman*)]
\item $A$ has real eigenvalues and it is diagonalizable, \label{auxa1}
\item $\trace(A^2)=0$ if and only if $A=0$. \label{auxa2}
\end{enumerate}
\end{lemma}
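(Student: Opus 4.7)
The plan is to use a similarity transformation by $D^{1/2}$, where $D=\diag(\gamma)$, to reduce this to a statement about symmetric matrices. Since $\gamma\succ\vect{0}_n$, the diagonal matrix $D$ is positive definite, so its positive square root $D^{1/2}=\diag(\sqrt{\gamma_1},\ldots,\sqrt{\gamma_n})$ is well defined and invertible, with inverse $D^{-1/2}$.

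For the first step, define $B=D^{-1/2}AD^{1/2}$. The hypothesis $AD=DA^{\top}$ can be rewritten as $A=DA^{\top}D^{-1}$, so
\begin{equation*}
B=D^{-1/2}AD^{1/2}=D^{-1/2}(DA^{\top}D^{-1})D^{1/2}=D^{1/2}A^{\top}D^{-1/2}=(D^{-1/2}AD^{1/2})^{\top}=B^{\top}.
\end{equation*}
Hence $B$ is real symmetric. This immediately gives~\ref{auxa1}: $A$ is similar to the symmetric matrix $B$, and thus inherits real eigenvalues and diagonalizability.

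For~\ref{auxa2}, the ``if'' direction is trivial. For the ``only if'' direction, use the invariance of trace under similarity:
\begin{equation*}
\trace(A^2)=\trace(D^{-1/2}A^2D^{1/2})=\trace\!\big((D^{-1/2}AD^{1/2})^2\big)=\trace(B^2)=\Fnorm{B}^2.
\end{equation*}
Thus $\trace(A^2)=0$ forces $B=0$, which in turn gives $A=D^{1/2}BD^{-1/2}=0$.

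There is no substantial obstacle here: the only slightly delicate point is recognizing that the scale-symmetry condition $AD=DA^{\top}$ is precisely what is needed to make the symmetric-square-root conjugation $D^{-1/2}AD^{1/2}$ symmetric, after which both statements are standard consequences of spectral theory for real symmetric matrices.
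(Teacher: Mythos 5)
Your proof is correct and follows essentially the same route as the paper: both conjugate by $\diag(\gamma)^{\pm 1/2}$ to obtain a real symmetric matrix and deduce (i) from the spectral theorem. The only cosmetic difference is in (ii), where the paper expands $\trace(A^2)=\sum_{i,j}\tfrac{\gamma_j}{\gamma_i}a_{ij}^2$ entry-wise while you observe $\trace(A^2)=\Fnorm{B}^2$ for the symmetric conjugate $B$; both are valid one-line finishes.
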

\begin{proof}
Since $A\diag(\gamma)$ is symmetric, then
$A'=\diag(\gamma)^{-1/2}A\diag(\gamma)^{1/2}$ is also symmetric and thus
has real eigenvalues and its eigenvectors form an orthogonal basis. Now,
let $(\lambda,v)$ be an eigenpair for $A'$. Then, by defining
$u=\diag(\gamma)^{1/2}v$, we observe that $Au=\lambda u$, and so
$(\lambda,\diag(\gamma)v)$ is an eigenpair for $A$. Hence the eigenvectors
of $A$ form a basis, and thus $A$ is diagonizable. This proves~\ref{auxa1}.

Observe that $A=\diag(\gamma)A^\top\diag(\gamma)^{-1}$. Then, $\trace(A^2)=\trace(A\diag(\gamma)A^\top\diag(\gamma)^{-1})$. From simple algebraic operations, it can be found that $\trace(A^2)=\sum_{i=1}^n\sum_{j=1}^n\frac{\gamma_j}{\gamma_i}a^2_{ij}$. Since $\frac{\gamma_i}{\gamma_j}>0$, $\trace(A^2)=0$ if and only if $A=0$. This proves~\ref{auxa2}.
\end{proof}

In view of Lemma~\ref{thaux1}, a matrix $A$ is scale-symmetric if and only
if $A = D^{-1}A_s D$, where $D > 0$ is a positive diagonal matrix (in
Lemma~\ref{thaux1}, $D = \diag(\gamma^{-1/2})$ for some
$\gamma\succ\vect{0}_n$) and $A_s$ a symmetric matrix.

Recall the invariance property of the pure-influence model~\eqref{inf-dyn}:
if $X(0)=X(0)^{\top}$, then $X(t)=X(t)^{\top}$ for all $t>0$.  We are now
ready to provide a more general version of this property: If $D>0$ is a
diagonal matrix and $X(t)$ is a solution, then $DX(t)D^{-1}$ is also a
solution. For this reason, if $X(0)=DX_s(0)D^{-1}$ is a scale-symmetric
matrix with some $X_s(0)=X_s(0)^{\top}$, then the solution
$X(t)=DX_s(t)D^{-1}$ is scale-symmetric. A similar result holds for the
projected pure-influence
model~\eqref{def:projected-pure-influence}. Indeed, all of the theoretical
results obtained in this paper for symmetric appraisal matrices, can be
generalized to scale-symmetric appraisal matrices. For example, if
$X(0)\in\Psa$ ($Z(0)\in\Ps$) then $t \mapsto \D(X(t))$ ($t \mapsto
\D(Z(t))$) is monotonically nondecreasing in $\Psa$ ($\Ps$).

\end{document}